\newtheorem{theorem}{Theorem}[section]
\newtheorem{definition}{Definition}[section]
\newtheorem{proposition}{Proposition}[section]
\newtheorem{assumption}{Assumption}[section]
\newtheorem{lemma}{Lemma}[section]
\newtheorem{corollary}{Corollary}[section]
\newtheorem{remark}{Remark}[section]
\newtheorem{example}{Example}[section]
\numberwithin{equation}{section}
\newcommand{\Isi}{b^\si}
\newcommand{\wtIsi}{\wt{b}^\si}
\newcommand{\bbUU}{\bar{U}}
\renewcommand{\Game}{\mathcal{G}}
\newcommand{\wt }{\widetilde }
\newcommand{\wh }{\widehat }
\newcommand{\N}{\mathbb{N}}
\newcommand{\R}{\mathbb{R}}
\newcommand{\E}{\mathbb{E}}
\renewcommand{\H}{\mathbb{H}}
\renewcommand{\S}{\mathbb{S}}
\renewcommand{\P}{\mathbb{P}}
\newcommand{\FF}{\mathbb{F}}
\newcommand{\EMM}{ {\mathbb E}_{\mathbb{Q}}}
\newcommand{\Q}{\mathbb{Q}}
\newcommand{\Aset}{\mathcal{N}}
\newcommand{\Filt}{\mathcal{F}}
\newcommand{\Prob}{\mathbb{P}}
\newcommand{\Qrob}{\mathbb{Q}}
\newcommand{\Mset}{\mathcal{M}}
\newcommand{\Eset}{\mathcal{E}}
\newcommand{\Strat}{\mathcal{S}}
\newcommand{\STOP}{\mathcal{T}}
\newcommand{\Act}{\mathcal{A}}
\newcommand{\Hist}{\mathcal{H}}
\newcommand{\Kmot}{\mathcal{K}^\dagger}
\newcommand{\esssup}{\operatornamewithlimits{ess\,sup}}
\newcommand{\essinf}{\operatornamewithlimits{ess\,inf}}
\let\inf\relax \DeclareMathOperator*\inf{\vphantom{p}inf}
\let\essinf\relax \DeclareMathOperator*\essinf{\vphantom{p}ess\,inf}
\newcommand{\map}[2]{\,{:}\,#1\rightarrow#2}
\newcommand\I{\mathds{1}}
\newcommand{\si}{\sigma}
\newcommand{\Val}{Z}
\newcommand{\stp}{{\vartheta}}
\newcommand{\cll}{[\![}
\newcommand{\orr}{)\!)}
\newcommand{\crr}{]\!]}
\newcommand{\ZRG}{\operatorname{ZRG}} 
\newcommand{\MRG}{\operatorname{MRG}} 
\newcommand{\ASG}{\operatorname{ASG}} 
\newcommand{\MGCC}{\operatorname{MGC}} 
\renewcommand{\bm}{}
\newcommand{\Keywords}[1]{\par\noindent{\small{\bf Keywords\/}: #1}}
\newcommand{\Class}[1]{\par\noindent{\small{\bf Mathematics Subjects Classification (2010)\/}: #1}}
\title{{\Large \bf ARBITRAGE PRICING OF MULTI-PERSON GAME CONTINGENT CLAIMS}\vskip 90 pt}
\author{Ivan Guo and Marek Rutkowski\footnote{The research of Ivan Guo and Marek Rutkowski was supported under Australian Research Council's Discovery Projects funding scheme (DP120100895).}
\\ School of Mathematics and Statistics
\\ University of Sydney
\\ NSW 2006, Australia}
\date{\vskip 60 pt 6 April 2014 \vskip 25 pt}
\begin{document}
\maketitle
\vskip 45 pt
\begin{abstract}
We introduce a class of financial contracts involving several parties by extending the notion of a two-person game option to a contract in which an arbitrary number of parties is involved and each of them is allowed to make a wide array of decisions at any time, not restricted to simply `exercising the option'. The collection of decisions by all parties then determines the contract's settlement date as well as the terminal payoff for each party. We provide sufficient conditions under which a multi-person game option has a unique arbitrage price, which is additive with respect to any partition of the contract.
\vskip 20 pt
\Keywords{multi-person game claim, arbitrage price, Snell envelope, optimal equilibrium}
\vskip 20 pt
\Class{91A06,$\,$91A15,$\,$60G40}
\end{abstract}


\newpage

\section{Introduction}\label{chgeneralcontract}

In financial markets, almost all traded derivatives effectively involve only two parties, the issuer and the holder. Furthermore, in most cases, the only meaningful decision lies with the holder in the form of exercising the contract. Some exceptions to this include the two-person game options introduced by Kifer \cite{Kifer1}  in which the issuer also has decisions to make (see also Dolinsky and Kifer \cite{Dolinsky}, Dolinsky et al. \cite{Dolinsky11}, Kallsen and K\"uhn \cite{Kallsen1},
K\"uhn et al. \cite{Kuehn}, Kyprianou \cite{Kyprianou} and a review paper by Kifer \cite{Kifer2}). From the practical perspective, one may refer to convertible bonds as real-world financial derivatives covered by the general concept of the two-person game option (see, for instance, Ayache et al. \cite{Ayache}, Andersen and Buffum \cite{Andersen}, Bielecki et al. \cite{Bielecki}, or Kallsen and K\"uhn~\cite{Kallsen2}). No wonder that the existing mathematical theory of pricing financial derivatives is mainly concerned with bilateral deals. By contrast, our goal is to open a new avenue by designing an extended framework in which one may define and evaluate contingent claims involving several parties.

In a general version of a multi-person contingent claim introduced in this work, an arbitrary number of parties is involved and each of them is allowed to make a wide array of decisions at any time, not restricted to simply `exercising the option'. Decisions of all parties then determine the settlement time of the claim, as well as their respective terminal payoffs. Since one may draw equivalences between the mechanisms of multi-person financial contracts and general stochastic games examined in Guo and Rutkowski \cite{Guo1,Guo2,Guo3}, we propose the name \emph{multi-person game contingent claims} or \emph{multi-person game options}. We work throughout within the discrete or continuous-time arbitrage-free market model with the intention of adding multi-person game contingent claims to this market. The generalised framework proposed here is not restricted to one particular issue, but has instead several purposes. First, it allows for the creation and valuation of new complex derivatives involving multiple parties. Second, it provides fresh insights to the valuation of existing exotic derivatives. Finally, the framework may serve as a starting point to a unified approach of modelling financial derivatives in the presence of external factors, such as: credit risk, funding costs, and margin account (see, for instance, \cite{Bielecki1,Pallavicini}). For instance, one may ask how the fact that external funders can make meaningful decisions in their own interest may affect the valuation of a contract between either two or several counterparties.

The paper is organized as follows. In Section \ref{sec7d}, we provide the definition of a multi-person game contingent claim, which is associated with a multi-player stochastic game of perfect information. The claim involves several \emph{holders} of parts of a contract (dubbed \emph{tranches}), as well as its \emph{issuer}. Since each party is able to observe the actions of others, this information is reflected in their decisions and hedging portfolios. As a result, admissible trading strategies are redefined as \emph{$\Hist$-predictable trading strategies}, which also involve reactions to the observable actions of all parties. An \emph{enter-and-hold arbitrage} is defined in a natural way as the opportunity to make a guaranteed profit by holding tranches along with an $\Hist$-predictable trading strategy until the claim is settled.

Section \ref{sec7da} focusses on the pricing of a particular \emph{combined tranche,} that is, a fixed collection of tranches.
To establish the no-arbitrage bounds for a combined tranche, it suffices to consider a holder of that tranche and
to postulate that the issuer is holding all remaining tranches of a contract.
 The main results in this section are Theorems \ref{thmro10} and \ref{thmsg30}, for the discrete and continuous-time case, respectively. They show that, under some sensible conditions on payoffs, no enter-and-hold arbitrage is possible for a fixed combined tranche if and only if its price lies within a particular interval. The bounds of this interval are then interpreted in terms of super-hedging strategies.

In Section \ref{sec7e2}, we consider in addition the interaction between different combined tranches. In particular, we demonstrate that prices of single tranches of a multi-person game option must be additive in order to avoid a second type of arbitrage, dubbed the \emph{immediate reselling arbitrage}. We show that the additivity of prices also eliminates any price disagreement between the parties.
Theorem \ref{thmry30} furnishes the necessary and sufficient conditions for the non-existence of both types of arbitrage opportunities. Theorem \ref{thmrx10} proves that if optimal equilibria exist for the associated multi-player game, then all individual tranches of a game contract have unique arbitrage prices, which match the values of the game. Furthermore, if the payoffs satisfy a sub-zero-sum condition, then every combined tranche has a unique arbitrage price, which is also additive. For the reader's convenience, the pertinent definitions and results from the game theory are collected in the appendix.

\subsection{Multi-Person Contract with Puttable Tranches} \label{mutyy}

For concreteness, let us consider a continuous-time complete market model of a single stock, e.g., the Black-Scholes model.
Our goal is to examine arbitrage pricing of contracts involving several parties. One of
them, called the \emph{issuer}, plays a special role -- he initiates (`sells')
the contract and takes care of its settlement according to the rules specified
below. All other $m$ parties are referred to as the {\it holders} of {\it tranches} of a contract.
We give below an example of a simple contract between the issuer and $m$ holders.
In Section \ref{ssc4.5}, we show how to value this contract using no-arbitrage principles.

\subsubsection{Two-Period Setup}

We consider dates $0<T_1< T_2 =T$ and we assume that
at time 0 each holder buys a tranche, formally represented by a call or put option of
either European or American style. Options may have different strikes, but they share the same expiry date $T$,
and they cannot be exercised before the date $T_1$. At time $T_1$, each holder has the right to give back (that is, to {\it put}) the tranche to the issuer. If the $i$th holder decides to put the tranche he owns, then he receives from the issuer an $\Filt_{T_1}$-measurable amount $X^i_1$ specified in the contract. We denote the set of such holders by $\Eset$.
 If all holders decide to put the tranche  to the issuer at time $T_1$, then he simply pays all amounts $X^1_1, \dots , X^m_1$ to holders. However, if at least one holder decides to keep his tranche,
then the contract stipulates that the issuer computes his {\it total deviation} (the `total loss') at time $T_1$, as given
by the sum $\sum_{i\in\Eset}(X^i_1 - P^i_1)$ where $P^i_1$ is the price of the $i$th option
at time $T_1$, and the total loss of the issuer is covered by the holders who decided to
keep their tranches.

 For simplicity, we assume that the \emph{redistribution} is split evenly amongst the holders who decided to keep the tranche, so that each of them should pay the issuer the amount of $\frac{1}{k}\sum_{i\in\Eset}(X^i_1 - P^i_1)$ where $k=m-|\Eset|$ is the number of such holders, although more general redistribution rules can also be implemented
(see \cite{Guo1,Guo2,Guo3}). We note that it is possible to choose strikes $K_1, \dots , K_m$, classes of options (calls or puts),
and the amounts $X^1_1, \dots , X^m_1$ to ensure that it will be not `optimal' for all holders
to put their tranches to the issuer at time $T_1$. Specifically, the amounts $X^1_1, \dots , X^m_1$
may be chosen small enough so that there exists at least one holder for which it would be suboptimal
to put his tranche, even when all other holders would prefer to do so. It is natural to conjecture that in this
situation the issuer's `total valuation' problem will reduce to the valuation of individual options, since
holders' decisions at time $T_1$ will not affect him. This means that the sum of prices of `tranches' at time 0
should be equal to the sum of prices of options. However, the price at time 0 of a tranche does not necessarily coincide with
the price of the corresponding option, due to the redistribution of payoffs at time $T_1$.

\subsubsection{Multi-Period Setup}

Let us now consider the dates $0<T_1<T_2<\cdots<T_n< T_{n+1}=T$.
Each holder buys a tranche represented by an option expiring at $T$,
which cannot be exercised before $T_n$. To describe the mechanism of the contract, let us fix some $l=1,2,\ldots,n$. At time $T_l$, each current holder has the right to put the tranche to the issuer. If the $i$th holder decides to do so, then he receives a predetermined $\Filt_{T_l}$-measurable amount $X^i_l$. We denote the set of such holders by $\Eset_l$.
If all tranches are given back to the issuer at time $T_l$, then he pays all amounts $X^1_l, \dots , X^m_l$ to holders.
If at least one holder decides to keep his tranche, then the issuer computes his `total loss' at time $T_l$,
which is given by the sum  $\sum_{i\in\Eset_l}(X^i_l - P^i_l)$ where $P^i_l$ is the \emph{market price}
(\emph{continuation value}) of the $i$th tranche at time $T_l$. The price $P^i_l$ of tranche $i$ is computed
assuming that no putting of any tranche occurs at $T_l$ (in other words, before holders decisions at time $T_l$ were
announced).  The total loss of the issuer is covered by the holders who decided to keep their tranches.
The redistribution is split evenly amongst the holders who decided to keep the tranche at $T_l$,
meaning that each of them compensate the issuer by paying him at time $T_l$ the amount of $\frac{1}{k}\sum_{i\in\Eset_l}(X^i_l - P^i_l)$ where $k=m-|\Eset_l|$ is the number of such holders. After the redistribution stage has been completed, we move on to the next date $T_{l+1}$, with the issuer being the new holder of tranche $i$ for all $i\in\Eset_l$. Then he can make any decisions available to the holder of tranche $i$ from time $T_{l+1}$ onwards. He can also sell some or all of these tranches to new holders and thus relinquish the corresponding decision-making abilities. We repeat these steps until the decisions and redistributions at date $T_n$ are completed. At this moment, the tranches revert back to original options.

\section{Multi-Person Game Options}\label{sec7d}

All of the following definitions refer to an underlying probability space $(\Omega, \Filt, \P )$, which is endowed with the filtration $\FF = \{\Filt_t: t\in[0,T]\}$ representing the information flow observed by all market participants.
In continuous-time setup, the filtration $\FF$ is also assumed to satisfy the `usual conditions' that underpin stochastic calculus.
We denote by $B, S^1 , \dots , S^d$ the $\FF$-adapted processes representing prices of primary traded assets.
For simplicity of presentation, it is postulated throughout that the underlying market model (which is left largely unspecified)
is frictionless, arbitrage-free, and complete. Hence, according to the classic version of FTAP, the unique martingale measure, denoted by $\Q$, is available. For a detailed description of discrete-time case (continuous-time case, resp.), we refer to Jacod and Shiryaev \cite{Jacod} (Jeanblanc et al. \cite{Jeanblanc}, resp.). Further technical assumptions about the continuous time model will be stated in Section \ref{sec7f}.

\subsection{Actions, Outcomes and Strategy Profiles}\label{psec7d}

Consider an arbitrary game $\Game$ with player set given by $\Mset=\{1,2,\ldots,m\}$ and the space of \emph{actions} $\Act = \prod_{i\in\Mset} \Act^i$. We assumed that each $\Act^i$ is a finite (or infinite, but countable) set representing actions available to player $i$ at any time. Unless otherwise stated, the definitions cover both discrete- and continuous-time cases.
In particular, $[t,u]$ denotes the set $\{t,  \dots , u \}$ when a discrete-time framework is studied.
Note also that the convention $[0,t)=[0,t-1]:= \{0, \dots , t-1\}$ is used in that case for $t=1,2, \dots ,T$.

\begin{definition} \label{outcome} {\rm
An \emph{outcome} for party $i$ is an $\mathbb{F}$-adapted process $h^i = (h^i_t:t\in[0,T])$ with values in $\Act^i$.
It represents the actions of party $i$ over time.  The space of outcomes for party~$i$ is denoted by $\Hist^i$. We write
$h=(h^1,\ldots,h^m)$ and $\Hist =\prod_{i\in\Mset} \Hist^i$ to denote the $m$-tuple of outcomes and its space, respectively.}
\end{definition}

We will frequently consider a sequence of actions up to, and possibly including, a certain time $t\leq T$.
For any outcome $h\in\Hist$, the \emph{history before time} $t$ is the restriction of $h$ to the time interval $[0,t)$.
We denote it by $h_{[0,t)}$ and we write  $\Hist_{[0,t)}$ to denote the space of such histories.
Also, we define $h_{[0,t]}$ and $\Hist_{[0,t]}$ in a similar way.
The following technical assumption is relevant in the continuous-time case only, in which case it will be indeed crucial.

\begin{assumption}\label{assdv01} {\rm
Every outcome $h^i\in\Hist^i$ for every $i\in\Mset$ is a right-continuous process.}
\end{assumption}

Many aspects of a game only depend on the past history only. To quantify this notion,
let us consider a mapping $f\map{\Hist\times[0,T]\times\Omega}{C}$ with values in some space $C$.
For any $h,h'\in \Hist$, we define the random time
\begin{gather}  \label{veqsg40}
\rho (h,h') := \inf \big\{ t\in[0,T]: h_t \neq h'_t \big\} \wedge T.
\end{gather}
Then $\rho (h,h') $ is an $\FF$-stopping time, since the  
outcomes $h$ and $h'$ are $\FF$-adapted. 
Obviously, for every $t\in [0,T]$, the event $\{h_{[0,t)}=h'_{[0,t)}\}$ coincides with $\{t \leq \rho (h,h') \}$.

\begin{definition} \label{defpredi} {\rm
A mapping $f\map{\Hist\times[0,T]\times\Omega}{C}$ is \emph{$\Hist$-adapted} (\emph{$\Hist$-predictable}, resp.) if for every $h,h'\in\Hist$, the equality $f(h)=f(h')$ holds on the stochastic interval $\cll 0, \rho \orr $ (on $\cll 0, \rho \crr$, resp.)
where $\rho = \rho (h,h') $.}
\end{definition}

It is clear that if a mapping $f$ is $\Hist$-predictable then it is also $\Hist$-adapted.
Intuitively, a mapping $f$ is $\Hist$-adapted ($\Hist$-predictable, resp.) if, for all $t \in [0,T]$,
the value $f_t(h)$ depends only on $h_{[0,t]}$ (on $h_{[0,t)}$, resp.).
Informally, for an $\Hist$-adapted ($\Hist$-predictable, resp.) mapping $f$ one could thus write
$f_t (h)= f_t (h_{[0,t]})$ ($f_t (h)= f_t (h_{[0,t)})$, resp.) for every $h \in \Hist$  and $t \in [0,T]$.

A \emph{strategy} for player $i$ specifies how she should react to the observable actions of other parties.
It is mapping from the space $\Hist $ of outcomes to the set $\Act^i$ of actions, which also takes
into account the information flow represented by the filtration $\FF$.

\begin{definition} \label{qazqaz} {\rm
A \emph{strategy} of party $i$ is an $\FF$-adapted and $\Hist$-predictable mapping $s^i : \Hist\times[0,T]\times \Omega \to \Act^i$. The $m$-tuple of strategies $s=(s^1,\ldots,s^m)$ is called a \emph{strategy profile}. We denote by $\Strat^i$  ($\Strat$, resp.) the space of all possible strategies of party $i$ (strategy profiles, resp.).}
\end{definition}

We define the mappings $s_{[0,t)}\in \Strat_{[0,t)}$ and $s^i_{[0,t)}\in\Strat^i_{[0,t)}$ in an obvious way. Note also that wherever the superscript $i\in\Mset$ appears, it can be replaced by an arbitrary subset $\Aset\subseteq\Mset$ to describe the corresponding notion relating to $\Aset$. Moreover, $-\Aset$ is a shorthand notation for $\Mset\setminus\Aset$.
It will be frequently convenient to write
\begin{equation} \label{defsigg}
s = (\tau ,\si) = (s^{\Aset }, s^{-\Aset }) \in\Strat^{\Aset}\times\Strat^{-\Aset}.
\end{equation}

\begin{definition} {\rm
For any strategy profile $s\in\Strat$, we denote by $h(s) \in \Hist $ the \emph{outcome associated} to $s$, that is, the unique outcome obtained when a strategy profile $s$ is played on $[0,T]$.}
\end{definition}

It is natural to define a game by first specifying the class $\Strat^i$ of all strategies available to each player
and, subsequently, to define $\Hist^i$ and $\Hist $ as the sets of all outcomes associated with all possible strategy
profiles $s \in \Strat$. This specification underpins all definitions and results stated in what follows.

Let us now consider the game at time $t$. We assume that at any date $t$ the sample path of the history $h_{[0,t)}(s)$ (but not
the mapping $s_{[0,t)}$ used to generate this sample path) is observed by all parties. For any fixed $s = (\tau,\si) \in\Strat$ and $\Aset\subseteq\Mset$, we define the set $\Strat^\Aset_t (s) $ of strategies available to players from the set $\Aset$ at time $t$ after a strategy profile $s_{[0,t)}\in\Strat_{[0,t)}$ was played. Formally, the subset $\Strat^\Aset_t(\tau,\si) \subseteq\Strat^\Aset$ is given by
\begin{equation} \label{defstr}
\Strat^\Aset_t(\tau,\si) := \big\{ \tau' \in\Strat^\Aset : h_{[0,t)}(\tau',\si) =h_{[0,t)}(\tau,\si) \big\} .
\end{equation}

\subsection{Multi-Person Game Contingent Claims}\label{bsec7d}

We first formulate a general definition of a multi-person game contract. An important ingredient of a contract is a decision to terminate it, which may be chosen by any party $i$ provided, of course, that this possibility belongs to the action set $\Act^i$. In an abstract formulation of a game claim, it will be enough to introduce the concept of the \emph{termination mapping} through the following definition.

\begin{assumption} \label{assmm09} {\rm
The \emph{termination mapping} $\Theta \map{\Hist \times [0,T] \times \Omega}{\{0,1\}}$ satisfies: \hfill \break
(i) for every $h \in \Hist$, the process $\Theta(h)$ is $\FF$-adapted, non-decreasing and RCLL, \hfill \break
(ii) the mapping $\Theta $ is $\Hist$-adapted.}
\end{assumption}

For any $h \in \Hist $, the termination date is defined by
\begin{equation} \label{defter}
\stp (h) = \inf \{ t \in [0,T]: \Theta_t(h)=1 \}.
\end{equation}
Note that $\stp(h) \in \STOP_{[0,T]}$ for every $h \in \Hist $, where $\STOP_{[0,T]}$ stand for the class of all $\FF$-stopping times with values in $[0,T]$. Condition (ii) in Assumption \ref{assmm09} means that for every $h,h' \in \Hist $, if $h_{[0,t]} =h'_{[0,t]}$ and $\stp (h) = t$ then $\stp (h') = t$. Given a strategy profile $s$ and the associated outcome $h(s)$, we denote by $\stp (s)$ the termination date $\stp (h(s))$. Let $V_{\stp (s)}(s) = ( V^1_{\stp (s)}(s), \dots, V^m_{\stp (s)}(s))$
stand for the vector of random payoffs, which are received by holders at time $\stp(s)$ if a strategy profile $s$ is played or,
equivalently, if the outcome $h(s)$ is realised. Each particular payoff $V^i_{\stp (s)}(s)$ will be identified with a single \emph{tranche} of a contract.

\begin{assumption} \label{aspmm09} {\rm
The \emph{payoff mapping} $V \map{\Hist \times [0,T] \times \Omega}{\mathbb{R}^m}$ satisfies: \hfill \break
(i) for every $h \in \Hist$, the processes $V^1(h), \dots, V^m(h)$ are $\FF$-adapted and RCLL, \hfill \break
(ii) the mapping $V$ is $\Hist$-adapted.}
\end{assumption}

In the classic case of a two-person game option, as defined in Kifer \cite{Kifer1}, one deals with two tranches, which are held by a \emph{holder} and an \emph{issuer}, respectively. Note that, despite being attributed different names, their roles and rights are here identical. It should thus be stressed that in a multi-person game claim the role of the \emph{issuer}  will be essentially different from that of ordinary holders.

We use the symbol $\Game $ to denote a generic game that underpins the specification of a multi-person game contingent claim. For explicitly defined classes of multi-player stochastic stopping games and results on the existence of an \emph{optimal equilibrium}, we refer to \cite{Guo1,Guo2,Guo3}. The definition and basic properties of an optimal equilibrium are also given in the appendix.

\begin{definition}\label{defrm01} {\rm
A \emph{multi-person game contingent claim} $\MGCC(\Game)$ is a contract on $[0,T]$, which consists of $m$ \emph{tranches}, between an \emph{issuer} and up to $m$ \emph{holders}. At time 0, a prospective holder of tranche $i$ has to pay the issuer some fee
(possibly negative) to enter the contract; we denote this amount by $\pi_0^i(\Game)$. Any individual party, including the issuer, is allowed to hold any combination of tranches. Once the contract starts, the holder of tranche $i$ may choose a strategy $s^i\in\Strat^i$, and decisions of all holders form a strategy profile $s=(s^1,\ldots,s^m)$. The contract is settled at time $\stp(s)\in\STOP_{[0,T]}$ where the issuer pays the holder of tranche $i$ the amount of $V^i_{\stp (s)}(s)$ for all $i=1,\ldots,m$.}
\end{definition}

In the real-world financial contracts, all cashflows are zero-sum in nature. Since a game $\Game$ is not postulated to be zero-sum, the crucial role of the issuer in $\MGCC(\Game)$ is to absorb any surpluses or cover any deficits at times 0 and $\stp(s)$. Consequently, the net cashflow between the holders of the $m$ tranches and the issuer is always zero.
If $\Game$ happens to be zero-sum, meaning that $\sum_{i\in\Mset} V_{\stp (s)} (s)=0$ for all $s\in\Strat$, then the issuer is somewhat redundant, since the holders can strike the deal between themselves at time 0 and then also settle the payoffs between themselves at time $\stp(s)$ without issuer's intervention. In practice, the role of an issuers is then reduced to a clearing house or intermediation.

\begin{remark} {\rm
One may also reinterpret the issuer as the holder of an \emph{issuer's tranche}, so that $\MGCC(\Game)$ becomes a zero-sum game contract with $m+1$ tranches in which the holder of the $(m+1)$th tranche is not allowed to take any actions. This corresponds to the concept of a \emph{dummy player} in \cite{Guo3}. Formally, the analysis of a game would then reduce to the zero-sum case.}
\end{remark}

\begin{remark} {\rm
The contract introduced in Definition \ref{defrm01} extends, in particular, the classic game option introduced in Kifer \cite{Kifer1}, which corresponds to a Dynkin stopping game between two players. The `holder' owns the tranche, which allows for `exercising', whereas the second holder (dubbed the `issuer') keeps the tranche allowing for `cancellation'. In the classic game options, there is no need to introduce an issuer in the sense of Definition \ref{defrm01}, since the classic game option is associated with a zero-sum stopping game. Aside from the number and roles of parties involved, we also generalise the two-person game option by allowing a greater complexity in the underlying game $\Game$. In particular, the analysis presented in what follows is by no means restricted to the case of stopping games.}
\end{remark}

\subsection{Adaptive Trading Strategies}\label{ssec7d}

Let $\MGCC_t(\Game,s)$ denote the multi-person  game contingent claim on $[t,T]$ after the strategy profile $s_{[0,t)}\in\Strat_{[0,t)}$ was played during $[0,t)$. Strictly speaking, $\MGCC_t(\Game,s)$ only depends on the history $h_{[0,t)}(s)$, meaning that if the equality $h_{[0,t)}(s)=h_{[0,t)}(s')$ holds, where $s'$ is any strategy profile, then the game claims $\MGCC_t(\Game,s)$ and $\MGCC_t(\Game,s')$ are in fact identical. To simplify presentation, we implicitly work under the event $\{\stp(s)\geq t\}$ when dealing with $\MGCC_t(\Game,s)$. In other words, we only consider the case where $\MGCC_t(\Game,s)$ has not yet been terminated before time $t$.

Denote each \emph{single tranche} $i\in\Mset$ by $\MGCC^i(\Game)$ and the collection of single tranches $\Aset\subseteq\Mset$, dubbed a \emph{combined tranche} $\Aset$, by $\MGCC^\Aset(\Game)$. We will simply use the term \emph{tranche} when no ambiguity may arise. For each $i\in\Mset$ and $\Aset\subseteq\Mset$, let $\MGCC^i_t(\Game,s)$ and $\MGCC^\Aset_t(\Game,s)$ denote the tranche  $i$ and the tranche  $\Aset$ at time $t$, respectively.

Let us now describe trading arrangements after the contract's initiation.
At any time $t$ before the settlement time, all tranches of $\MGCC_t(\Game,s)$ may be freely traded, so that the holder of the tranche  $i$ may sell it to a new holder. The new holder then continues the game with the other current holders and has the ability to choose a strategy $\tau \in\Strat^i_t(s)$ from time $t$ onwards. Denote the time $t$ price of the tranche  $i$
(tranche $\Aset$, resp.) by $\pi^i_t(\Game, s)$ ($\pi^\Aset_t(\Game, s)$, resp.). Let us stress that anyone, including the issuer, is allowed to hold any number of tranches at any time.

The pricing of game options hinges on no-arbitrage arguments and the replication (or super-hedging) of payoffs. In the current framework, every party (the issuer and holders) may hold a portfolio of stock and bonds in order to hedge her future liabilities. Since $\Game$ is a perfect information game, in which actions exist before $\stp(s)$ and are observed by all parties, it is natural for the issuer and the holders to adapt their trading strategies according to their current and past observations. In particular, if a strategy profile $s\in\Strat$ is played, a trading strategy should have a non-anticipative dependency on the observable outcome $h(s)\in\Hist$. This leads to the following definition in which it is implicitly assumed that the class of \emph{admissible}
 (thus, in particular, self-financing) trading strategies for a given market model $(B,S)$ was already specified (for instance,
 by focussing on trading strategies for which the discounted wealth is a $(\Qrob , \FF)$-martingale).

\begin{definition}\label{defrm02} {\rm
An \emph{$\Hist$-predictable trading strategy}  is a mapping $\phi\map{\Hist \times [0,T]\times \Omega}{\R^{d+1}}$ satisfying the following conditions: \hfill \break
\noindent (i) for every $h \in \Hist$, $\phi(h)$ is an admissible trading strategy,
\\ (ii) $\phi $ is an $\Hist$-predictable mapping.}
\end{definition}

The class of all $\Hist$-predictable trading strategies $\phi $ is denoted by $\Phi(\Hist)$.
For any $\phi \in \Phi(\Hist)$, the corresponding wealth mapping $\Val \map{\Hist \times [0,T]\times \Omega}{\R}$, which is given by the equality
\[
\Val_t(\phi(h))=\sum_{l=0}^d \phi^l_t(h) S^l_t,\quad \forall \, t\in[0,T],
\]
is also $\Hist$-predictable.  We will need to extend Definition \ref{defrm02} a little further. For a fixed $\si\in\Strat^{-\Aset}$, we denote by $\Hist^\si$ the set of all outcomes reachable when a strategy $\si$ is played, that is,
\begin{equation} \label{defhisi}
\Hist^\si := \big\{h\in\Hist : h=h(\tau,\si),\ \tau\in\Strat^\Aset\big\}.
\end{equation}
In some situations, where possible outcomes are assumed to belong to $\Hist^\si$, we may be only interested in trading strategies with the same constraint and restricted to the interval $[t,T]$.

\begin{definition}\label{devfrm02} {\rm
For any $\si\in\Strat^{-\Aset}$ and $t \in [0,T]$, we denote by $\Phi_t (\Hist^\si)$ the class of all
{\it $\Hist^\si$-predictable trading strategies} $\phi\map{\Hist^\si \times [t,T]\times \Omega}{\R^{d+1}}$.
The classes $\Hist^\tau$ and $\Phi_t (\Hist^\tau)$ are defined in an analogous manner.}
\end{definition}

Intuitively, Definition \ref{devfrm02} has the following interpretation: when a strategy profile $\si $ is fixed then, for every $u \in[t,T]$ and $\tau,\tau'\in\Strat^\Aset$, the equality $h_{[0,u)}(\tau,\si)=h_{[0,u)}(\tau',\si)$ implies that $\Val_u (\phi(\tau,\si))=\Val_u (\phi(\tau',\si))$. One could thus informally write $\Val_u(\phi(\tau,\si))=\Val_u(\phi(\tau_{[0,u)},\si))$ for every $\tau \in \Strat^\Aset $  and $u\in [t,T]$. In the continuous time framework, the latter property will
underpin the definition of an $(\wh{S}, \Hist )$-predictable mapping $\phi $ (see Section \ref{sec7f}).

\section{Arbitrage Bounds for a Combined Tranche}\label{sec7da}

An important concept in multi-person game contingent claims is the ability to hold a collection of tranches $\Aset\subseteq\Mset$. For example, a person holding tranches $i$ and $j$ simultaneously has possibly greater payoff earning potential than two separate people holding tranches $i$ and $j$ without collusion. This is due to the fact that the holder of tranches $i$ and $j$ may coordinate the strategies $s^i$ and $s^j$ to improve the combined payoff $V^i_{\stp (s)}(s)+V^j_{\stp (s)}(s)$. Consequently, the price of the combined tranche $\Aset = \{i,j\}$ is not necessarily equal to the sum of individual prices, even though from the practical perspective this additivity property seems to be desirable. The issue of price additivity and consistent valuation of
all tranches will be further explored in Section \ref{sec7e2}.

In this section, we fix $\Aset $ and we restrict our attention to the pricing of a predetermined combined tranche $\MGCC^\Aset_t(\Game,s)$ for some fixed $t\in[0,T]$ and a strategy profile $s\in\Strat$.
We address the following problem, in which we implicitly work on the event of $\{\stp(s)\geq t\}$ (since otherwise
the question would be meaningless anyway):
\begin{itemize}
\item Consider the game contingent claim $\MGCC(\Game)$ and suppose that a strategy profile $s_{[0,t)}$ has been played up to time $t$. At time $t$, a new holder purchases the tranche  $\Aset$ from its previous holder(s) for the price of $\pi^\Aset_t(\Game,s)$ and plans to hold it until the settlement time. What possible values can $\pi^\Aset_t(\Game,s)$ take if neither issuer's nor holder's arbitrage in $(B,S, \Aset )$ may occur?
\end{itemize}

\subsection{Arbitrage Opportunities}

We first define arbitrage opportunities from the perspective of a holder of a fixed combined tranche $\Aset $,
as well as the issuer who is assumed to hold all remaining single tranches, that is, a combined tranche $-\Aset $.
Recall the notation $s^\Aset=(s^i,\, i\in\Aset)$ and $V^\Aset_{\stp (s)}(s)=\sum_{i\in\Aset} V^i_{\stp (s)}(s)$ for any subset $\Aset\subseteq\Mset$. We stress that in the rest of this section, a date $t$, a combined tranche $\Aset $ and a strategy profile $s$ generating the history via $h_{[0,t)}(s)$ are fixed. Our goal in this section is to derive arbitrage bounds at time $t$ for a fixed combined tranche $\Aset $. Nevertheless, the specification of payoffs for tranches from the set $- \Aset $ will still matter, since decisions of the holder of $-\Aset $ may impact the payoffs of tranches from $\Aset $.

In the next definition, we fix $t \in [0,T]$ and we assume that the tranche $\Aset$ of the game contract $\MGCC_t(\Game,s)$ is traded at time $t$ at the price of $\pi_t^\Aset(\Game, s)$. To alleviate notation, we will simply write $(B,S,\Aset)$ instead of a more precise notation $(B,S,\pi^\Aset_t(\Game,s))$ if no confusion may arise. 

\begin{definition} \label{defrm03} {\rm
A \emph{holder's arbitrage in $(B,S,\Aset )$} is a pair $(\phi^{\tau },\tau)\in \Phi_t (\Hist^\tau)\times\Strat^\Aset_t(s)$ for which there exists an event $A \in\Filt_t$ with a positive probability such that on $A$:
\[
\Val_t(\phi^{\tau } (s) ) < - \pi_t^\Aset(\Game, s) ,\quad
\Val_{\stp{(s)}}(\phi^{\tau } (s) ) \geq - V^\Aset_{\stp(s)}(\tau,\si),\quad\forall\,\si\in\Strat^{-\Aset}_t(s).
\]
An \emph{issuer's arbitrage in $(B,S,\Aset )$} is a pair $(\phi^{\si } ,\si)\in \Phi_t (\Hist^\si)\times\Strat^{-\Aset}_t(s)$ for which there exists an event $A \in\Filt_t$ with a positive probability such that on $A$:
\[
\Val_t(\phi^{\si } (s)) <  \pi_t^\Aset(\Game, s) ,\quad
\Val_{\stp{(s)}}(\phi^{\si } (s)) \geq V^\Aset_{\stp(s)}(\tau,\si),\quad \forall\,\tau\in\Strat^\Aset_t(s).
\]
We say that there is \emph{no arbitrage in $(B,S, \Aset )$} if neither the issuer's nor the holder's arbitrage exists in $(B,S,\Aset)$.}
\end{definition}

Note that Definition \ref{defrm03} hinges on an implicit assumption that tranches are traded at time $t$ before
actions at time $t$ are chosen by all parties. In other words, only the history $h_{[0,t)}$ and the $\sigma$-field
$\Filt_t$ are observed by all parties before trading occurs at time $t$.

A holder's arbitrage refers to a guaranteed profit for the buyer of the tranche $\Aset$ on $\cll t,\stp(s) \crr $. The price at
time $t$ should be high enough, so that a prospective holder should not be able to generate profits without risk by
combining the buy-and-hold strategy with some clever actions and a dynamic portfolio composed of traded primary assets. Similarly, an issuer's arbitrage refers to a guaranteed profit for issuer of $\MGCC(\Game)$ who sold the tranche $\Aset$ at time $t$ and also holds the tranche $-\Aset$ until time $\stp(s)$. From this perspective, the price should be low enough, so that it prevents the issuer from making profit without risk even when he can make decisions regarding the tranche $-\Aset $. When the issuer does not hold the tranche $-\Aset$, we may interpret an issuer's arbitrage as a guaranteed loss for the holder of the tranche $\Aset$.
It is rather clear that, under Definition \ref{defrm03}, the valuation problem for the tranche $\Aset$ can be formally reduced
to a study of the two-person zero-sum game between the holder of the tranche $\Aset $ and the issuer, who also holds $-\Aset $.

Arbitrages introduced in Definition \ref{defrm03} hinge on the enter-and-hold strategy, where the tranches are kept from time $t$ until the settlement time. Of course, alternative definitions of arbitrage opportunities are also possible. In Section \ref{sec7e2}, we will complement Definition \ref{defrm03} with additional conditions that will ensure the existence and uniqueness of financially meaningful unique prices for all tranches.


\subsection{Super-Hedging Strategies}

For a European contingent claim, the terminal payoff can be frequently \emph{replicated} by means of an admissible portfolio. In the case of a game contingent claim, since it is not possible to anticipate the action of other parties, replication is typically not possible. Instead, the parties involved may attempt to \emph{super-hedge} their positions. The definition of super-hedging trading strategies is very similar to that for the two-person game option. As before, we consider the time $t$ game contingent claim $\MGCC_t(\Game,s)$. We focus on the holder of the tranche  $\Aset$ and the issuer, who also holds the tranche  $-\Aset$.
Recall that we denote $s = (\tau,\si)$ and we place ourselves at time $t \in [0,T]$.

\begin{definition} \label{defrm70} {\rm
A \emph{holder's super-hedging strategy} is a pair $(\phi^\tau,\tau)\in \Phi_t(\Hist^\tau)\times\Strat^{\Aset}_t(s)$ such that
\[
\Val_{\stp{(s)}}(\phi^\tau) \geq - V^\Aset_{\stp(s)}(\tau,\si),\quad\forall\,\si\in\Strat^{-\Aset}_t(s).
\]
An \emph{issuer's super-hedging strategy} is a pair $(\phi^\si,\si)\in \Phi_t(\Hist^\si)\times\Strat^{-\Aset}_t(s)$ such that
\[
\Val_{\stp{(s)}}(\phi^\si) \geq V^\Aset_{\stp(s)}(\tau,\si),\quad \forall\,\tau\in\Strat^\Aset_t(s).
\]}
\end{definition}

As expected, the super-hedging strategies are closely related to arbitrage bounds for prices of tranches of game contingent claims.

\begin{proposition}\label{proprm71}
Suppose that the tranche  $\Aset$ of the game contract $\MGCC_t(\Game,s)$ is traded at time $t$ at a price of $\pi_t^\Aset(\Game, s)$. If no issuer's arbitrage in $(B,S,\Aset)$ exists, then for any issuer's super-hedging strategy $(\phi^\si,\si)\in \Phi_t(\Hist^\si)\times\Strat^{-\Aset}_t(s)$ we have that
\begin{gather}\label{eqrm711}
\Val_t(\phi^\si) \geq \pi_t^\Aset(\Game, s).
\end{gather}
If no holder's arbitrage in $(B,S,\Aset)$ exists, then for any holder's super-hedging strategy $(\phi^\tau,\tau)\in \Phi_t(\Hist^\tau)\times\Strat^{\Aset}_t(s)$  we have that
\begin{gather}\label{eqrm712}
\Val_t(\phi^\tau) \geq - \pi_t^\Aset(\Game, s).
\end{gather}
\end{proposition}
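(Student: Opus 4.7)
The plan is to prove both statements by contraposition, exploiting the fact that Definition \ref{defrm03} of an arbitrage in $(B,S,\Aset)$ decomposes into precisely two ingredients: a strict inequality of wealth at time $t$ on an $\Filt_t$-event of positive probability, together with a super-hedging-type inequality at the settlement time $\stp(s)$. Since the super-hedging hypothesis in Definition \ref{defrm70} already supplies the second ingredient, only the first remains to be ruled out.

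For the issuer's bound \eqref{eqrm711}, I would suppose toward contradiction that there is a pair $(\phi^\si,\si)\in\Phi_t(\Hist^\si)\times\Strat^{-\Aset}_t(s)$ satisfying the issuer's super-hedging inequality $\Val_{\stp(s)}(\phi^\si) \geq V^\Aset_{\stp(s)}(\tau,\si)$ for every $\tau\in\Strat^\Aset_t(s)$, yet with
\[
A := \big\{\Val_t(\phi^\si) < \pi_t^\Aset(\Game,s)\big\} \in \Filt_t
\]
of strictly positive $\P$-probability. The two inequalities together match verbatim the defining requirements of an issuer's arbitrage on $A$, contradicting the no-issuer's-arbitrage hypothesis; hence \eqref{eqrm711} must hold $\P$-almost surely.

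The bound \eqref{eqrm712} follows by an entirely symmetric argument, with the roles of $\tau$ and $\si$ interchanged and the sign of the payoff reversed: if a holder's super-hedging pair $(\phi^\tau,\tau)\in\Phi_t(\Hist^\tau)\times\Strat^\Aset_t(s)$ were to violate \eqref{eqrm712} on a set of positive $\Filt_t$-probability, the inequality $\Val_{\stp(s)}(\phi^\tau) \geq - V^\Aset_{\stp(s)}(\tau,\si)$ for all $\si\in\Strat^{-\Aset}_t(s)$ would upgrade this pair to a holder's arbitrage in $(B,S,\Aset)$, contradicting the second hypothesis.

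The only point requiring a moment's care is the measurability of the event $A$, which relies on $\Val_t(\phi^\si)$ being $\Filt_t$-measurable for any admissible trading strategy and on $\pi_t^\Aset(\Game,s)$ being a time-$t$ quantity observed before actions at $t$ are chosen (as emphasised in the paragraph following Definition \ref{defrm03}). Beyond this technicality no analytic work is required; the proposition is essentially a direct translation between Definitions \ref{defrm03} and \ref{defrm70}, so I do not anticipate any serious obstacle in the actual write-up.
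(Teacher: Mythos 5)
Your proposal is correct and follows essentially the same route as the paper's own proof: assume the bound fails on a positive-probability $\Filt_t$-event, observe that together with the super-hedging inequality this is verbatim the definition of an issuer's (resp.\ holder's) arbitrage, and conclude by contradiction. The extra remark on the $\Filt_t$-measurability of the violation event is a harmless refinement the paper leaves implicit.
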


\begin{proof}
Let us show that if the upper bound \eqref{eqrm711} is violated, then there exists an issuer's arbitrage. A similar argument can be used to check that the violation of the lower bound \eqref{eqrm712} leads to a holder's arbitrage. For the sake of contradiction, let us assume the upper bound of \eqref{eqrm711} is not satisfied. On the event $\big\{\pi^\Aset_t(\Game,s) > \Val_t(\phi^\si)\big\}$, by the definition of an issuer's super-hedging strategy, we obtain
\[
\Val_t(\phi^\si) < \pi^\Aset_t(\Game,s), \quad \Val_{\stp{(s)}}(\phi^\si) \geq V^\Aset_{\stp(s)}(\tau,\si),\quad \forall\,\tau\in\Strat^\Aset_t(s).
\]
Hence $(\phi^\si,\si)$ is an issuer's arbitrage in $(B,S,\Aset)$, which contradicts our assumption.
\end{proof}

\subsection{$\Hist^{\si }$-Predictable Snell Envelopes} \label{sec3.3x}

The notion of the Snell envelope, first introduced in \cite{Snell}, is an important tool for the valuation of American and game options. For a given stochastic process $Y$, the \emph{Snell envelope} $U$ is the smallest RCLL supermartingale dominating $Y$. At time $t$, the Snell envelope $U_t$ equals $\esssup_{\tau\in\STOP_{[t,T]}}\E(Y_\tau\,|\,\Filt_t)$, so that it represents the essential supremum of $\Filt_t$-conditional expectations of $Y_\tau$ over all choices of a stopping time $\tau \in \STOP_{[t,T]}$.

A similar notion will be developed here in the context of a multi-person game claim $\MGCC(\Game)$. It will hinge on maximizing an expected value over the set of all possible strategies, rather than the set of all stopping times. Recall that the subset $\Aset \in \Mset $ is now fixed and we deal with a combined tranche, specifically, a collection of single tranches $i$ for all $i \in \Aset $. Unless otherwise stated, we fix a strategy profile $s = (\tau,\si) \in\Strat$ where $\tau \in\Strat^{\Aset}$ and $\si\in\Strat^{-\Aset}$.

Recall that we work throughout under the unique martingale measure $\Qrob$ for a discrete-time (or a continuous-time) market model, which is assumed to be complete and arbitrage-free in the usual sense. Let $\widehat V_{\stp (s)}(s) = B^{-1}_{\stp(s)} V_{\stp (s)} (s)$ be the discounted payoff and let $\widehat V^\Aset_{\stp (s)} (s) = \sum_{i \in \Aset } \widehat V^i_{\stp (s)}(s)$.
In the remainder of this work, we work under the following standing assumption.

\begin{assumption} \label{assrm09} {\rm
The discounted payoff satisfies the following integrability condition with respect to $\Qrob$, for every $\Aset\subseteq\Mset$,}
\[
\EMM \Big( \esssup_{s\in\Strat} \big| \widehat V^\Aset_{\stp (s)} (s)\big| \Big) < \infty.
\]
\end{assumption}

\subsubsection{$\Hist^{\si }$-Predictable Snell Envelope of the First Kind}

We first define the $\Hist^{\si }$-predictable Snell envelope of the first kind, which is directly based on the observations of decisions of the holder of the tranche $\Aset $ strictly before time $t$ and the market data up to time $t$. Note that the holder of $- \Aset$ has a priori the full knowledge of her fixed strategy $\si $ on $[0,T]$, but she only observes the outcome $h_{[0,t)}(\tau,\si)$ at time $t$.

\begin{definition}\label{defrm44}{\rm
For a combined tranche $\MGCC^\Aset(\Game)$ and a fixed $\si  \in \Strat^{-\Aset }$, the
\emph{$\Hist^{\si }$-predictable Snell envelope of the first kind} is the mapping $U^\si \map{\Hist^\si \times[0,T]\times\Omega}{\R}$ given by}
\begin{gather}\label{eqrm45}
U^\si_t (\tau ) = U^\si_t(h(\tau,\si)) := \esssup_{\tau'\in\Strat^\Aset_t(\tau,\si)} \EMM \Big(\widehat
V^{\Aset}_{\stp(\tau',\si )}(\tau',\si) \,\big|\,\Filt_t\Big).
\end{gather}
\end{definition}

\begin{remark} {\rm
Let us note that the definition of $U^\si_t( \tau )$ is not restricted to the event $\{\stp(\tau,\si)\geq t\}$. Indeed, on the event $\{\stp(\tau,\si)<t\}$, by convention, we may set}
\[
U^\si_t( \tau) : =\widehat V^{\Aset}_{\stp(s)}(\tau,\si) \leq U^\si_{\stp (s)}(\tau ).
\]
\end{remark}

Recall that the class $\Strat^\Aset_t(\tau,\si)$ is defined by
\begin{gather}\label{eqrm46}
\Strat^\Aset_t(\tau,\si) = \big\{ \tau' \in\Strat^\Aset : h_{[0,t)}(\tau',\si) =h_{[0,t)}(\tau,\si) \big\}.
\end{gather}
Since $\si$ and $\tau $ are here fixed, the class $\Strat^\Aset_t(\tau,\si)$ only depends on $h_{[0,t)}(\tau,\si)$. Therefore, the right-hand side in \eqref{eqrm45} is indeed a well-defined mapping of the outcome $h(\tau,\si)\in\Hist^\si$, rather than the strategy $(\tau,\si)$. Despite this, for the sake of brevity, we prefer to write $U^\si_t(\tau)$, rather than $U^\si_t(h(\tau,\si))$.

 The superscript $\si $ in $U^\si(\tau)$ is used to emphasise that the strategies $\si $ and $\tau $
play different roles in Definition \ref{defrm44}. Specifically, at any date $t$ we assume to know an issuer's strategy $\si $ over $[0,T]$, but we only assume that a holder's strategy $\tau $ is observed on $[0,t).$
This feature is formalised in the next result, which shows that, intuitively, $U^\si_{t}(\tau)=U^\si_{t}(\tau_{[0,t)})$ for all $t \in [0,T]$, where $\tau_{[0,t)}$ is the restriction of the holder's strategy $\tau $ to the interval $[0,t)$.

\begin{lemma} \label{lmcc7a}
The mapping $U^\si $ is $\Hist^\si$-predictable.
\end{lemma}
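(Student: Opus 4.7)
The plan is to verify Definition \ref{defpredi} for $U^\si$ directly. Fix $(\tau, \si), (\tau', \si) \in \Strat$ and set $\rho := \rho(h(\tau, \si), h(\tau', \si))$; I must prove $U^\si_t(\tau) = U^\si_t(\tau')$ on the $\Filt_t$-measurable event $A_t := \{t \leq \rho\} = \{h_{[0,t)}(\tau, \si) = h_{[0,t)}(\tau', \si)\}$. The essential supremum in \eqref{eqrm45} is indexed by $\Strat^\Aset_t(\tau, \si)$, a set defined through \emph{pathwise} equality of histories; consequently $\Strat^\Aset_t(\tau, \si)$ and $\Strat^\Aset_t(\tau', \si)$ need not coincide as subsets of $\Strat^\Aset$ even when the histories agree on $A_t$. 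A direct comparison of the two essential suprema is therefore unavailable, and the bridge between them will be a strategy-pasting construction.

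For each $\tau'' \in \Strat^\Aset_t(\tau, \si)$ I define $\widetilde\tau''_s := \tau'_s$ on $[0, t)$ and $\widetilde\tau''_s := \tau''_s$ on $[t, T]$; standard checks confirm that $\widetilde\tau''$ remains an $\FF$-adapted and $\Hist$-predictable mapping (hence a member of $\Strat^\Aset$) and that, since its $[0, t)$-restriction coincides pathwise with that of $\tau'$, it lies in $\Strat^\Aset_t(\tau', \si)$. I then verify that on $A_t$ the outcomes $h(\widetilde\tau'', \si)$ and $h(\tau'', \si)$ coincide pathwise throughout $[0, T]$: on $[0, t)$ the inclusion $\tau'' \in \Strat^\Aset_t(\tau, \si)$ combined with the definition of $A_t$ forces the agreement, while on $[t, T]$ the identical decision rule $\widetilde\tau''_{[t,T]} = \tau''_{[t,T]}$ together with matching history at time $t$ propagates it forward (in the continuous-time setting, the right-continuity stipulated in Assumption \ref{assdv01} handles the splice point). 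Invoking the $\Hist$-adaptedness of $V^\Aset$ and $\Theta$ from Assumptions \ref{assmm09} and \ref{aspmm09}, I conclude that $\widehat V^\Aset_{\stp(\widetilde\tau'', \si)}(\widetilde\tau'', \si) = \widehat V^\Aset_{\stp(\tau'', \si)}(\tau'', \si)$ on $A_t$, so the two $\Filt_t$-conditional expectations agree on $A_t$.

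Taking essential suprema over $\tau'' \in \Strat^\Aset_t(\tau, \si)$ then yields $U^\si_t(\tau) \leq U^\si_t(\tau')$ on $A_t$, and reversing the roles of $\tau$ and $\tau'$ in the pasting gives the reverse inequality, which is precisely the $\Hist^\si$-predictability required by Definition \ref{defpredi}. The main obstacle is the pasting step itself: one must be certain that $\widetilde\tau''$ is a \emph{bona fide} strategy in the sense of Definition \ref{qazqaz}, and that its induced outcome on $A_t$ genuinely matches that of $\tau''$ throughout $[t, T]$. In the continuous-time case the right-continuity in Assumption \ref{assdv01} is what prevents pathologies at the splice $s = t$, and this is where the framework's regularity assumptions do real work; the remaining verifications (measurability of $A_t \in \Filt_t$, symmetry of the pasting argument, integrability via Assumption \ref{assrm09}) are routine.
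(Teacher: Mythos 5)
Your proof is correct and follows essentially the same route as the paper's: the paper reduces the claim to the assertion that the two families $\big\{\widehat V^{\Aset}_{\stp(\wt \tau,\si)}(\wt \tau,\si) \I_{\{ t \leq \rho \}} : \wt \tau\in \Strat^\Aset_t(\tau,\si)\big\}$ and $\big\{\widehat V^{\Aset}_{\stp(\wh \tau,\si )}(\wh \tau,\si) \I_{\{ t \leq \rho \}} : \wh \tau\in \Strat^\Aset_t(\tau',\si)\big\}$ coincide, which is exactly what your explicit strategy-pasting construction establishes. You have simply spelled out the bijection between the two index sets that the paper leaves implicit in the phrase ``due to the definition of $\Strat^\Aset_t(\tau,\si)$ and $\Strat^\Aset_t(\tau',\si)$.''
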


\begin{proof}
It is obvious from \eqref{eqrm45} that $\tau \in\Strat^\Aset$ the random variable $U^\si_t(\tau)$ is $\Filt_t$-measurable.
For the $\Hist^\si$-predictability of $U^\si $, we need to show that, for all $\tau,\tau' \in\Strat^\Aset$, the equality
$U^\si (\tau ) =U^\si (\tau')$ holds on $[\![ 0, \rho ]\!]$ where (see \eqref{veqsg40})
\begin{gather}  \label{vueqsg40}
\rho := \inf \big\{ t\in[0,T]: h_t(\tau,\si) \neq h_t (\tau',\si) \big\} \wedge T.
\end{gather}
Recall that, by Assumption \ref{aspmm09}, the payoff mapping $V^\Aset $ (and thus also the discounted
payoff $\wh{V}^\Aset $) is $\Hist$-adapted (though not necessarily $\Hist^\si$-predictable). For any $t \in [0,T]$,
the event $\{ t \leq \rho \}$ belongs to $\Filt_t$ and the sets
\[
\big\{\widehat V^{\Aset}_{\stp(\wt \tau,\si)}(\wt \tau,\si) \I_{\{ t \leq \rho \}} : \wt \tau\in \Strat^\Aset_t(\tau,\si)\big\} \quad\text{and}\quad \big\{\widehat V^{\Aset}_{\stp(\wh \tau,\si )}(\wh \tau,\si) \I_{\{ t \leq \rho \}} : \wh \tau\in \Strat^\Aset_t(\tau',\si)\big\}
\]
are identical, due to the definition of $\Strat^\Aset_t(\tau,\si)$ and $\Strat^\Aset_t(\tau',\si)$ (see \eqref{eqrm46}). Consequently, the equality $U^\si_t (\tau ) =U^\si_t (\tau')$ holds on  $\{ t \leq \rho \}$ for every $t$.
\end{proof}

The $\Hist^{\si }$-predictable  Snell envelope $U^\si$ has the following interpretation: suppose that the holder of $-\Aset$ plays the strategy $\si\in\Strat^{-\Aset}$ on $[0,T]$ while the holder of $\Aset$ plays the strategy $\tau\in\Strat^\Aset$ on $[0,t)$. Then $U^\si_t(\tau)$ is the maximal expected payoff, as seen at time $t$, which can be achieved by the holder of $\Aset$ by varying his strategy on $[t,T]$. It is thus natural to conjecture that the process $U^\si_t(\tau)$ is a $(\Qrob , \FF)$-supermartingale. The proof of the main result in this subsection, in which this conjecture is validated  (Proposition \ref{proprm50}), requires several lemmas.

\begin{lemma} \label{proprm31}
Fix $t\in[0,T]$, $\Aset \subset \Mset$ and $s\in\Strat$. Then the following set ${\cal D}_t$ has the lattice property
\begin{align*}
{\cal D}_t : = \Big\{ \EMM \Big(\widehat V^{\Aset}_{\stp(\tau',\si)}(\tau',\si) \,\big|\,\Filt_t\Big):
\tau' \in\Strat^\Aset_t(\tau,\si)\Big\}.
\end{align*}
\end{lemma}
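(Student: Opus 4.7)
The plan is to verify the upward-directed (lattice) property in the usual way for essential-supremum arguments: given two elements of ${\cal D}_t$, I construct a single strategy in $\Strat^\Aset_t(\tau,\si)$ whose associated conditional expectation equals their pointwise maximum. Pick representatives $\tau'_1,\tau'_2\in\Strat^\Aset_t(\tau,\si)$ and set
\[
a_j := \EMM\Big(\widehat V^{\Aset}_{\stp(\tau'_j,\si)}(\tau'_j,\si)\,\big|\,\Filt_t\Big), \quad j=1,2,
\]
together with $A:=\{a_1\geq a_2\}\in\Filt_t$. The candidate strategy $\tau'_3$ will be the $\Filt_t$-splice that coincides with $\tau'_1$ on $A$ and with $\tau'_2$ on $A^c$ from time $t$ onwards; on $[0,t)$ both $\tau'_1$ and $\tau'_2$ already produce the history $h_{[0,t)}(\tau,\si)$, so the splice is unambiguous there.

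The bulk of the argument is to check that $\tau'_3\in\Strat^\Aset_t(\tau,\si)$. First, $\tau'_3$ is $\FF$-adapted because $A\in\Filt_t$ and $\tau'_1,\tau'_2$ are each $\FF$-adapted; the $\Hist$-predictability of $\tau'_3$ follows since $\tau'_1$ and $\tau'_2$ are $\Hist$-predictable and splicing with the $\Filt_t$-event $A$ preserves the property that $\tau'_3(h)=\tau'_3(h')$ on $\cll 0,\rho(h,h')\crr$. Second, because $\tau'_1,\tau'_2\in\Strat^\Aset_t(\tau,\si)$, we have $h_{[0,t)}(\tau'_j,\si)=h_{[0,t)}(\tau,\si)$ for $j=1,2$, hence $h_{[0,t)}(\tau'_3,\si)=h_{[0,t)}(\tau,\si)$, placing $\tau'_3$ in $\Strat^\Aset_t(\tau,\si)$.

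Finally, by construction $\tau'_3=\tau'_1$ on $A$ and $\tau'_3=\tau'_2$ on $A^c$ from time $t$ onwards, and since $\si$ is unchanged, the outcome $h(\tau'_3,\si)$ coincides with $h(\tau'_1,\si)$ on $A$ and with $h(\tau'_2,\si)$ on $A^c$. Using that $\stp$ is $\Hist$-adapted and $\wh V^\Aset$ is $\Hist$-adapted (Assumptions \ref{assmm09} and \ref{aspmm09}), we obtain
\[
\widehat V^{\Aset}_{\stp(\tau'_3,\si)}(\tau'_3,\si) = \I_A \widehat V^{\Aset}_{\stp(\tau'_1,\si)}(\tau'_1,\si) + \I_{A^c} \widehat V^{\Aset}_{\stp(\tau'_2,\si)}(\tau'_2,\si),
\]
and taking $\EMM(\cdot\,|\,\Filt_t)$ of both sides, using $A\in\Filt_t$, yields
\[
\EMM\Big(\widehat V^{\Aset}_{\stp(\tau'_3,\si)}(\tau'_3,\si)\,\big|\,\Filt_t\Big) = \I_A a_1 + \I_{A^c} a_2 = a_1\vee a_2.
\]
An identical argument with $A$ replaced by $\{a_1\leq a_2\}$ delivers a representative of $a_1\wedge a_2$, so ${\cal D}_t$ is closed under both max and min.

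The main obstacle I anticipate is the bookkeeping for the splice $\tau'_3$ as a well-defined mapping on $\Hist\times[0,T]\times\Omega$ that is simultaneously $\FF$-adapted and $\Hist$-predictable; the key observation that makes it work is that $A\in\Filt_t$, so mixing the two strategies from time $t$ onwards never disturbs the non-anticipative dependence on the history, and never alters the history on $[0,t)$.
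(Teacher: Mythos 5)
Your proposal is correct and follows essentially the same route as the paper: splice the two strategies along the $\Filt_t$-event $\{a_1\geq a_2\}$, verify the splice remains in $\Strat^\Aset_t(\tau,\si)$, and use the $\Hist$-adaptedness of $\stp$ and $\wh V^\Aset$ to identify the resulting conditional expectation with the pointwise maximum. The paper merely packages the last step as an abstract implication (if $\tau'\I_A=\tau''\I_A$ then $Y_{\tau'}\I_A=Y_{\tau''}\I_A$) before applying it, which is the same argument you give directly.
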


\begin{proof}
Let $I$ be an arbitrary set and let ${\cal Y}:= (Y_i)_{i\in I}$ be a family of random variables with $\EMM (\esssup_{i\in I} |Y_i|)<\infty$. Then ${\cal Y}$ is said to have the \emph{lattice property} if, for all $i,j\in I$, there exist $k, l\in I$ such that
\[
Y_k\geq Y_i \vee Y_j,\quad Y_l\leq Y_i \wedge Y_j .
\]
In our case, the set ${\cal D}_t$ may be represented as $\{ Y_{\tau'} : \tau'\in\Strat^\Aset_t(s)\}$ where each $Y_{\tau'}$ is $\Filt_t$-measurable and, in view of Assumption \ref{assrm09},
\[
\EMM(\esssup_{\tau'\in\Strat^\Aset_t(s)} |Y_{\tau'}|)<\infty .
\]
We first show that it has the following property: for any event $A\in\Filt_t$ and arbitrary $\tau',\tau''\in\Strat^\Aset_t(s)$
\begin{gather}\label{eqrm310}
\tau' \I_A = \tau'' \I_A \ \implies\ Y_{\tau'} \I_A = Y_{\tau''} \I_A .
\end{gather}
 In view of the $\Hist$-adaptedness of the settlement time $\stp $ and the discounted payoff $\wh{V}^{\Aset }_{\stp (s)}(\tau,\si)$, the equality ${\tau'} \I_A = {\tau''} \I_A$ implies that
\[
\wh{V}^{\Aset }_{\stp(\tau',\si)}(\tau',\si) \I_A = \wh{V}^{\Aset }_{\stp(\tau'',\si)}(\tau'',\si)\I_A .
\]
Since $\I_A$ is $\Filt_t$-measurable, we obtain $Y_{\tau'} \I_A = Y_{\tau''} \I_A$, thus establishing \eqref{eqrm310}.
Let us now check the lattice property of ${\cal D}_t$. For arbitrary $\tau', \tau'' \in \Strat^\Aset_t(s)$, we define the strategies $\tau_1,\tau_2$ by setting
\begin{align*}
&\tau_1 = \I_{\{Y_{\tau'} \geq Y_{\tau''}\}} \tau' + \I_{\{Y_{\tau'} < Y_{\tau''}\}} \tau'' , \\
&\tau_2 = \I_{\{Y_{\tau'} \geq Y_{\tau''}\}} \tau'' + \I_{\{Y_{\tau'} < Y_{\tau''}\}} \tau'.
\end{align*}
Since the event $\{Y_{\tau'} \geq Y_{\tau'}\}$ is $\Filt_t$-measurable, it is clear that $\tau_1, \tau_2$ are $\FF$-adapted, $\Hist$-predictable mappings satisfying
\[
h_{[0,t)}(\tau_1,\si)=h_{[0,t)}(\tau_2,\si)=h_{[0,t)}(\tau',\si)=h_{[0,t)}(\tau'',\si)=h_{[0,u)}(s), \quad\forall\,\si\in\Strat^{-\Aset}_t(s).
\]
Hence $\tau_1$ and $\tau_2$ belong to $\Strat^\Aset_t(s)$. Using condition \eqref{eqrm310}, we also obtain $Y_{\tau_1}=Y_{\tau'}\vee Y_{\tau''}$ and $ Y_{\tau_2}=Y_{\tau'}\wedge Y_{\tau''}$. This establishes the lattice property of the set ${\cal D}_t$.
\end{proof}

The following result is a minor extension of Theorem A.32 in \cite{Follmer} (for the proof, see Lemma 7.18 in \cite{GuoPhD}).

\begin{lemma}\label{lemrm30}
Let $(Y_i)_{i\in I}$ be a family of $\Filt_u$-measurable random variables with the lattice property and such that $\E(\esssup_{\, i\in I} |Y_i|)<\infty$.  Then the following statements are valid:
\\(i) there exist two sequences $(i_n)_{n\in\N}, (j_n)_{n\in\N}$ of indices such that the sequences $(Y_{i_n})_{n\in\N}$ and $(-Y_{j_n})_{n\in\N}$  are almost surely non-decreasing and
\[
\esssup_{i\in I} Y_i = \sup_{n\in\N} Y_{i_n} = \lim_{n\to\infty} Y_{i_n},\quad
\essinf_{j\in I} Y_j = \inf_{n\in\N} Y_{j_n} = \lim_{n\to\infty} Y_{j_n},
\]
(ii) for any $t \leq u$, we have
\[
\EMM \big(\esssup_{i\in I} Y_i \,\big|\, \Filt_t \big)=\esssup_{i\in I} \E\big( Y_i \,|\, \Filt_t\big),\quad
\EMM \big(\essinf_{j\in I} Y_j \,\big|\, \Filt_t \big)=\essinf_{j\in I} \E\big( Y_j \,|\, \Filt_t\big).
\]
\end{lemma}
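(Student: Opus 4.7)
The plan is to prove part (i) by a direct construction using the lattice property, then derive part (ii) from (i) via conditional monotone convergence. The lemma is standard (essentially Föllmer–Schied, Theorem A.32), so the proof is just a careful bookkeeping exercise rather than requiring a new idea.

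For part (i), I would treat only the essential supremum; the infimum case is symmetric (apply the sup result to $-Y_i$, noting that the lattice property is symmetric). First I recall the standard construction of the essential supremum: since the family $(Y_i)_{i \in I}$ has an integrable upper envelope $\esssup_i |Y_i|$, there exists a countable subfamily $(Y_{k_n})_{n \in \N}$ such that $\esssup_i Y_i = \sup_n Y_{k_n}$ almost surely. Then I use the lattice property to inductively upgrade this subfamily into a monotone one: set $Y_{i_1} := Y_{k_1}$ and, given $Y_{i_n}$, choose by the lattice property an index $i_{n+1} \in I$ such that $Y_{i_{n+1}} \geq Y_{i_n} \vee Y_{k_{n+1}}$. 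Then $(Y_{i_n})$ is a.s. non-decreasing and dominates every $Y_{k_n}$, so
\[
\sup_{n \in \N} Y_{i_n} \geq \sup_{n \in \N} Y_{k_n} = \esssup_{i \in I} Y_i,
\]
while the reverse inequality is immediate from $Y_{i_n} \leq \esssup_i Y_i$. Since the sequence is monotone, the supremum coincides with the pointwise limit, giving the claim.

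For part (ii), applied to the supremum, the inequality $\esssup_i \E(Y_i \mid \Filt_t) \leq \E(\esssup_i Y_i \mid \Filt_t)$ is automatic from $Y_i \leq \esssup_i Y_i$ together with monotonicity of conditional expectation. For the reverse direction, I invoke part (i) to obtain an a.s. non-decreasing sequence $(Y_{i_n})$ with limit $\esssup_i Y_i$, and apply the conditional monotone convergence theorem to $(Y_{i_n} - Y_{i_1})$, which is a non-negative non-decreasing sequence with integrable limit (by the assumption $\E(\esssup_i |Y_i|) < \infty$). This yields
\[
\E\Big(\esssup_{i \in I} Y_i \,\Big|\, \Filt_t\Big) = \lim_{n \to \infty} \E(Y_{i_n} \mid \Filt_t) \leq \esssup_{i \in I} \E(Y_i \mid \Filt_t),
\]
closing the chain of inequalities. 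The essinf identity follows by replacing $Y_i$ with $-Y_i$.

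The only subtle step is the inductive construction of $(i_n)$, where one must verify that the lattice property delivers an index in $I$ (not merely the pointwise maximum of two variables in the family); this is exactly the hypothesis of the lemma. Integrability of the envelope is needed both to justify the countable reduction in the essential supremum construction and to legitimise the conditional monotone convergence step, so I would state that assumption explicitly when invoking it. I expect no further obstacles, as all appeals are to standard measure-theoretic tools.
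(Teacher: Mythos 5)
Your proof is correct and follows exactly the standard argument that the paper itself defers to (it cites Theorem A.32 of F\"ollmer--Schied and Lemma 7.18 of Guo's thesis rather than proving the lemma): reduce to a countable subfamily, upgrade it to a monotone one via the lattice property, and pass to the conditional expectation by conditional monotone convergence. The only cosmetic quibble is that the countable-subfamily reduction in the construction of the essential supremum does not actually require integrability of the envelope (that hypothesis is only needed for the conditional monotone convergence step and to ensure the limits are finite), but this does not affect the validity of the argument.
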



%


\begin{lemma} \label{lemm50x}
(i) The following equality holds, for any $t\leq u$,
\begin{align}\label{eqrm505}
\EMM \big(U^\si_u(\tau)\,\big|\,\Filt_t\big) =\esssup_{\tau'\in\Strat^\Aset_u(\tau,\si)} \EMM \Big( \widehat V^{\Aset}_{\stp (\tau',\si)} (\tau',\si)\,\big|\,\Filt_t\Big).
\end{align}
(ii) For any $\tau \in\Strat^{\Aset} $, the process $U^\si (\tau)$ is uniformly integrable under $\Q$.
\end{lemma}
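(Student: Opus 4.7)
For part (i), the plan is to reduce to the interchange-of-essential-supremum-and-conditional-expectation result recorded in Lemma \ref{lemrm30}(ii). First I would set $Y_{\tau'} := \EMM(\widehat V^{\Aset}_{\stp(\tau',\si)}(\tau',\si) \,|\, \Filt_u)$ for $\tau' \in \Strat^\Aset_u(\tau,\si)$, so that by Definition \ref{defrm44} the identity $U^\si_u(\tau) = \esssup_{\tau' \in \Strat^\Aset_u(\tau,\si)} Y_{\tau'}$ holds. Next I would invoke Lemma \ref{proprm31} with $t$ replaced by $u$ (and with $s=(\tau,\si)$) to deduce that the family $\{Y_{\tau'}\}_{\tau' \in \Strat^\Aset_u(\tau,\si)}$ enjoys the lattice property; the proof given there is insensitive to this change, since the splicing $\tau_1 = \I_{\{Y_{\tau'} \geq Y_{\tau''}\}} \tau' + \I_{\{Y_{\tau'} < Y_{\tau''}\}} \tau''$ uses an $\Filt_u$-measurable event, remains $\FF$-adapted and $\Hist$-predictable, and preserves the truncation $\tau_1|_{[0,u)} = \tau|_{[0,u)}$. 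The integrability hypothesis needed in Lemma \ref{lemrm30} is met via Jensen's inequality combined with Assumption \ref{assrm09}. Applying Lemma \ref{lemrm30}(ii) followed by the tower property then delivers \eqref{eqrm505}.

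For part (ii), a simple domination argument should suffice. Jensen's inequality applied under the essential supremum in Definition \ref{defrm44} yields
\[
|U^\si_t(\tau)| \leq \EMM\Big(\esssup_{s \in \Strat} \big|\widehat V^{\Aset}_{\stp(s)}(s)\big| \,\Big|\, \Filt_t\Big).
\]
By Assumption \ref{assrm09}, the random variable $\esssup_{s \in \Strat} |\widehat V^{\Aset}_{\stp(s)}(s)|$ lies in $L^1(\Q)$, so the right-hand side is the closed $(\Q,\FF)$-martingale generated by an integrable random variable, and is therefore uniformly integrable. Since $U^\si(\tau)$ is dominated in absolute value by a uniformly integrable family, it is itself uniformly integrable under $\Q$.

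The only delicate point I anticipate is verifying that the lattice argument from Lemma \ref{proprm31}, originally carried out with index set $\Strat^\Aset_t(\tau,\si)$, transports to the larger index set $\Strat^\Aset_u(\tau,\si)$ for $u \geq t$. This is really a bookkeeping check rather than a true obstacle: the splicing event is $\Filt_u$-measurable, and strategies that already coincide with $\tau$ on $[0,u)$ stay in $\Strat^\Aset_u(\tau,\si)$ under pasting. Once this verification is in place, both assertions follow by assembling the cited ingredients.
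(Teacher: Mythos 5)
Your proposal is correct and follows essentially the same route as the paper: part (i) is obtained by applying the lattice property of Lemma \ref{proprm31} at time $u$ together with Lemma \ref{lemrm30}(ii) and the tower property, and part (ii) by dominating $|U^\si_t(\tau)|$ with the closed $(\Qrob,\FF)$-martingale generated by $\esssup_{s\in\Strat}|\widehat V^\Aset_{\stp(s)}(s)|$, which is integrable by Assumption \ref{assrm09}. The bookkeeping check you flag (transporting the lattice argument to the index set $\Strat^\Aset_u(\tau,\si)$) is exactly how the paper uses Lemma \ref{proprm31}, so no gap remains.
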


\begin{proof}
(i) Let us consider $t \leq u$. From Lemma \ref{proprm31}, we deduce that, for fixed $\si \in\Strat^{-\Aset}$
and $\tau \in\Strat^{\Aset}$, the set
\begin{align}\label{caldu}
{\cal D}_u = \Big \{\EMM \Big( \widehat V^{\Aset}_{\stp (\tau',\si)}(\tau',\si) \,\big|\,\Filt_u\Big): \tau'\in\Strat^{\Aset}_u (\tau,\si) \Big\}
\end{align}
has the lattice property. Using part (ii) in Lemma \ref{lemrm30} and the tower property of conditioning, we thus obtain
\begin{align*}
\EMM \big(U^\si_u(\tau)\,\big|\,\Filt_t\big)
 = \EMM \Big(\esssup_{\tau'\in\Strat^\Aset_u(\tau,\si)}
\EMM \Big( \widehat V^{\Aset}_{\stp (\tau',\si)} (\tau',\si) \,\big|\,\Filt_u\Big) \,\Big|\,\Filt_t \Big)
=\esssup_{\tau'\in\Strat^\Aset_u(\tau,\si)} \EMM \Big( \widehat V^{\Aset}_{\stp (\tau',\si)} (\tau',\si) \,\big|\, \Filt_t\Big)
\end{align*}
so that \eqref{eqrm505} is valid.

\noindent (ii)  In view of Assumption \ref{assrm09}, the process $M$ given by $M_t = \EMM (M_T \,|\, \Filt_t)$, where
\[
M_T := \esssup_{s\in\Strat} \big|\widehat{V}^\Aset_{\stp (s)}(s)\big|,
\]
is a non-negative and uniformly integrable $(\Qrob , \mathbb{F})$-martingale. Using the conditional Fatou lemma, we obtain, for any $t\in [0,T]$,
\begin{gather*}
\big|U^\si_t(\tau)\big| =  \Big|\esssup_{\tau'\in\Strat^\Aset_t(\tau,\si)} \EMM \Big( \widehat V^{\Aset}_{\stp (\tau',\si)} (\tau',\si) \,\big|\,\Filt_t\Big)\Big| \leq \esssup_{\tau'\in\Strat^\Aset_t(\tau,\si)} \EMM \Big( | \widehat V^{\Aset}_{\stp (\tau',\si)} (\tau',\si)| \,\big|\,\Filt_t\Big)  \\
\leq \EMM \Big( \esssup_{\tau'\in\Strat^\Aset_t(\tau,\si)} \big|\widehat V^{\Aset}_{\stp (\tau',\si)} (\tau',\si)\big| \,\Big|\,\Filt_t\Big) \leq \EMM \Big(\esssup_{s\in\Strat} \big|\widehat V^\Aset_{\stp (s)} (s)\big|\,\Big|\,\Filt_t\Big) = M_t.
\end{gather*}
Hence the process $U^\si (\tau)$ is uniformly integrable under $\Q$.
\end{proof}

\begin{proposition} \label{proprm50}
(i) For any $\tau \in\Strat^{\Aset} $, the process $U^\si(\tau)$ is a $(\Qrob , \FF)$-supermartingale. \\
(ii) In the continuous-time case, if the process $U^\si (\tau)$ has an RCLL modification, then it
is a $(\Qrob , \FF)$-supermartingale of class (D).
\end{proposition}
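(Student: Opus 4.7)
The plan is to derive both parts directly from Lemma \ref{lemm50x}.

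For part (i), I fix $t \leq u$ and apply the identity \eqref{eqrm505} from Lemma \ref{lemm50x}(i) to rewrite
\[
\EMM\big(U^\si_u(\tau)\,\big|\,\Filt_t\big) = \esssup_{\tau'\in\Strat^\Aset_u(\tau,\si)} \EMM\Big( \widehat V^{\Aset}_{\stp (\tau',\si)} (\tau',\si)\,\Big|\,\Filt_t\Big).
\]
The key observation is that $\Strat^\Aset_u(\tau,\si) \subseteq \Strat^\Aset_t(\tau,\si)$ whenever $t \leq u$, because the constraint $h_{[0,u)}(\tau',\si) = h_{[0,u)}(\tau,\si)$ trivially implies $h_{[0,t)}(\tau',\si) = h_{[0,t)}(\tau,\si)$, in view of \eqref{eqrm46}. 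Taking the essential supremum over the larger set $\Strat^\Aset_t(\tau,\si)$ can only enlarge the quantity, so the right-hand side is bounded above by $U^\si_t(\tau)$, which is precisely the supermartingale inequality. Integrability at each $t$ follows from the uniform integrability bound established in Lemma \ref{lemm50x}(ii).

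For part (ii), I assume $U^\si(\tau)$ admits an RCLL modification (still denoted $U^\si(\tau)$). From Lemma \ref{lemm50x}(ii), one has the pointwise bound $|U^\si_t(\tau)| \leq M_t$ almost surely for every $t\in[0,T]$, where $M$ is the non-negative uniformly integrable $(\Qrob,\FF)$-martingale given by $M_t = \EMM(M_T\,|\,\Filt_t)$ with $M_T = \esssup_{s\in\Strat}|\widehat V^\Aset_{\stp(s)}(s)|$. Because both processes are RCLL, this inequality extends, up to an evanescent set, to all $t$ simultaneously, and hence $|U^\si_\rho(\tau)| \leq M_\rho$ almost surely for every $\FF$-stopping time $\rho$ taking values in $[0,T]$. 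Since $M$ is a uniformly integrable martingale, the family $\{M_\rho : \rho \in \STOP_{[0,T]}\}$ is uniformly integrable by optional sampling (as $M_\rho = \EMM(M_T\,|\,\Filt_\rho)$). Domination then transfers uniform integrability to $\{U^\si_\rho(\tau): \rho \in \STOP_{[0,T]}\}$, which is precisely the class (D) property for the RCLL supermartingale $U^\si(\tau)$ established in part (i).

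I expect the only substantive point to justify carefully is the set inclusion $\Strat^\Aset_u(\tau,\si) \subseteq \Strat^\Aset_t(\tau,\si)$ for $t \leq u$, which hinges on the defining condition \eqref{eqrm46}; the rest is a direct combination of the identity in Lemma \ref{lemm50x}(i), the domination in Lemma \ref{lemm50x}(ii), and the standard fact that a UI martingale is of class (D).
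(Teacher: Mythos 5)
Your proposal is correct and follows essentially the same route as the paper: part (i) via the identity of Lemma \ref{lemm50x}(i) combined with the inclusion $\Strat^\Aset_u(\tau,\si)\subseteq\Strat^\Aset_t(\tau,\si)$, and part (ii) via domination of $U^\si(\tau)$ by the uniformly integrable martingale $M$ from Lemma \ref{lemm50x}(ii) and transfer of uniform integrability to the stopped family. Your added remark that the pointwise bound $|U^\si_t(\tau)|\leq M_t$ extends to all $t$ simultaneously (up to evanescence) because both processes are RCLL is a small but welcome clarification of a step the paper passes over silently.
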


\begin{proof} (i) It is clear that $U^\si(\tau)$ is an $\FF$-adapted process. The supermartingale property of $U^\si(\tau)$ follows immediately from part (i) in Lemma \ref{lemm50x} and the obvious inclusion $\Strat^\Aset_u(\tau,\si) \subseteq \Strat^\Aset_t(\tau,\si)$ for all $t \leq u$, since
\begin{align*}
\EMM \big(U^\si_u(\tau)\,\big|\,\Filt_t\big) & = \esssup_{\tau'\in\Strat^\Aset_u(\tau,\si)} \EMM \Big( \widehat V^{\Aset}_{\stp(\tau',\si)}(\tau',\si)\,\big|\,\Filt_t\Big) \\ &\leq \esssup_{\tau'\in\Strat^\Aset_t(\tau,\si)}
\EMM \Big( \widehat V^{\Aset}_{\stp(\tau',\si)}(\tau',\si) \,\big|\,\Filt_t\Big) = U^\si_t(\tau).
\end{align*}
\noindent (ii) Recall that a process $X$ is said to be of \emph{class (D)} if it is RCLL and the family $\{X_\rho : \rho\in\STOP_{[0,T]}\}$ of random variables is uniformly integrable. We assumed that $U^\si (\tau)$
has an RCLL modification. Let $M$ be an RCLL version of the martingale $M$. Then $|U^\si_\rho(\tau)\big|\leq M_\rho$ for any stopping time $\rho\in\STOP_{[0,T]}$ where the family $\{ M_\rho :\rho\in\STOP_{[0,T]}\}$ is uniformly integrable.
Hence the family $\{U^\si_\rho(\tau):\rho\in\STOP_{[0,T]}\}$ is uniformly integrable and thus $U^\si(\tau)$ is a supermartingale of class~(D).
\end{proof}

\subsubsection{$\Hist^{\si }$-Adapted Snell Envelope}

Definition \ref{defrm44x} of the $\Hist^{\si }$-adapted Snell envelope at time $t$ hinges on the assumption
that the decisions of all parties at time $t$ are already known. It other words, it is now assumed
that, at time $t$, the holder of $-\Aset $ observes the market data up to time $t$, as well as
the outcome $h_{[0,t]}(\tau,\si)$. It is thus rather clear that the $\Hist^{\si }$-adapted Snell envelope
is not directly suitable for super-hedging purposes. However, it will serve as a crucial tool in the introduction
of the $\Hist^{\si }$-predictable Snell envelope of the second kind in the continuous-time setup
(see Section \ref{seci7f}).

\begin{definition}\label{defrm44x}{\rm
For a combined tranche $\MGCC^\Aset(\Game)$ and a fixed $\si \in \Strat^{-\Aset}$, we define the \emph{$\Hist^{\si }$-adapted Snell envelope} $\wt{U}^\si \map{\Hist^\si \times[0,T]\times\Omega}{\R}$ by setting
\begin{gather}\label{eoqrm45}
\wt{U}^\si (\tau) = \wt{U}^\si_t( h(\tau,\si) ) := \esssup_{\tau'\in \wt{\Strat}^\Aset_t(\tau,\si)} \EMM \Big(\widehat V^{\Aset}_{\stp(\tau',\si)}(\tau',\si) \,\big|\,\Filt_t\Big)
\end{gather}
where the subset $\wt{\Strat}^\Aset_t(\tau,\si) \subseteq\Strat^\Aset$ is given by}
\begin{equation} \label{defstrx}
\wt{\Strat}^\Aset_t(\tau,\si) := \big\{ \tau' \in\Strat^\Aset : h_{[0,t]}(\tau',\si) =h_{[0,t]}(\tau,\si) \big\} .
\end{equation}
\end{definition}

\begin{lemma} \label{lmcc7b}
(i) The mapping $\wt U^\si $ is $\Hist^\si$-adapted. \\
(ii) Fix $t\in[0,T]$, $\Aset \subset \Mset$ and $s\in\Strat$. Then the following set ${\cal \wt D}_t$ has the lattice property
\begin{align*}
{\cal \wt D}_t : = \Big\{ \EMM \Big(\widehat V^{\Aset}_{\stp(\tau',\si )}(\tau',\si ) \,\big|\,\Filt_t\Big): \tau' \in\wt \Strat^\Aset_t(\tau,\si)\Big\}.
\end{align*}
\end{lemma}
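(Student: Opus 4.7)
The plan is to adapt the proofs of Lemmas~\ref{lmcc7a} and~\ref{proprm31} by systematically replacing the open past history $h_{[0,t)}$ by the closed past history $h_{[0,t]}$ and, correspondingly, the event $\{t\leq\rho\}$ by $\{t<\rho\}$, the stochastic interval $\cll 0,\rho\crr$ by $\cll 0,\rho\orr$, and $\Strat^\Aset_t(s)$ by $\wt\Strat^\Aset_t(s)$ throughout.

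For part~(i), I would mimic the proof of Lemma~\ref{lmcc7a}. Fix $\tau,\tau'\in\Strat^\Aset$ and set $\rho:=\inf\{u:h_u(\tau,\si)\neq h_u(\tau',\si)\}\wedge T$, which is an $\FF$-stopping time by the $\FF$-adaptedness of outcomes. On the $\Filt_t$-measurable event $\{t<\rho\}$, every $s\leq t$ satisfies $s<\rho$, so $h_{[0,t]}(\tau,\si)=h_{[0,t]}(\tau',\si)$ pathwise on $\{t<\rho\}$. Combined with the $\Hist$-adaptedness of $\wh V^\Aset$ and the definition~\eqref{defstrx} of $\wt\Strat^\Aset_t$, this yields — exactly as in Lemma~\ref{lmcc7a} — the identity of the two collections of random variables
\[
\bigl\{\wh V^{\Aset}_{\stp(\wt\tau,\si)}(\wt\tau,\si)\,\I_{\{t<\rho\}}:\wt\tau\in\wt\Strat^\Aset_t(\tau,\si)\bigr\}\ \text{and}\ \bigl\{\wh V^{\Aset}_{\stp(\wh\tau,\si)}(\wh\tau,\si)\,\I_{\{t<\rho\}}:\wh\tau\in\wt\Strat^\Aset_t(\tau',\si)\bigr\}.
\]
Passing to $\Filt_t$-conditional essential suprema then delivers $\wt U^\si_t(\tau)=\wt U^\si_t(\tau')$ on $\{t<\rho\}$, which is precisely the required $\Hist^\si$-adaptedness.

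For part~(ii), I would transcribe the proof of Lemma~\ref{proprm31} almost verbatim. Setting $Y_{\tau'}:=\EMM\bigl(\wh V^{\Aset}_{\stp(\tau',\si)}(\tau',\si)\,\big|\,\Filt_t\bigr)$ for $\tau'\in\wt\Strat^\Aset_t(s)$, the auxiliary implication
\[
\tau'\I_A=\tau''\I_A\ \Longrightarrow\ Y_{\tau'}\I_A=Y_{\tau''}\I_A,\qquad A\in\Filt_t,
\]
follows directly from the $\Hist$-adaptedness of $\stp$ and $\wh V^\Aset$ together with the $\Filt_t$-measurability of $\I_A$, exactly as in~\eqref{eqrm310}. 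Given $\tau',\tau''\in\wt\Strat^\Aset_t(s)$ I would then splice $\tau_1:=\I_{\{Y_{\tau'}\geq Y_{\tau''}\}}\tau'+\I_{\{Y_{\tau'}<Y_{\tau''}\}}\tau''$ and symmetrically $\tau_2$. The crucial observation — actually \emph{easier} in the adapted setting than in the predictable one — is that since $\tau'$ and $\tau''$ already produce the common sample path $h_{[0,t]}(s)$ on the closed interval $[0,t]$ they coincide there, so any $\Filt_t$-measurable splicing preserves this sample path on $[0,t]$, placing $\tau_1,\tau_2$ in $\wt\Strat^\Aset_t(s)$. The identities $Y_{\tau_1}=Y_{\tau'}\vee Y_{\tau''}$ and $Y_{\tau_2}=Y_{\tau'}\wedge Y_{\tau''}$ then follow from the displayed implication applied to $A=\{Y_{\tau'}\geq Y_{\tau''}\}$ and its complement, yielding the lattice property of $\wt{\cal D}_t$.

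I do not anticipate any substantive analytical obstacle; the one point demanding real care is the consistent bookkeeping when transferring the \emph{predictable} combinatorics ($h_{[0,t)}$, $\{t\leq\rho\}$, $\cll 0,\rho\crr$) to their \emph{adapted} counterparts ($h_{[0,t]}$, $\{t<\rho\}$, $\cll 0,\rho\orr$). Everything else transfers mechanically from Lemmas~\ref{lmcc7a} and~\ref{proprm31}.
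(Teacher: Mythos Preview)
Your proposal is correct and follows exactly the approach the paper indicates: the paper's own proof is simply the sentence ``The proof of part~(i) (part~(ii), resp.) is analogous to the proof of Lemma~\ref{lmcc7a} (Lemma~\ref{proprm31}, resp.) and thus it is omitted.'' You have spelled out precisely the adaptations the paper alludes to, with the correct bookkeeping (replacing $h_{[0,t)}$ by $h_{[0,t]}$, $\{t\le\rho\}$ by $\{t<\rho\}$, $\cll 0,\rho\crr$ by $\cll 0,\rho\orr$, and $\Strat^\Aset_t$ by $\wt\Strat^\Aset_t$), so there is nothing to add.
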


\begin{proof}
The proof of part (i)  (part (ii), resp.) is analogous to the proof  of Lemma \ref{lmcc7a}
(Lemma \ref{proprm31}, resp.) and thus it is omitted.
\end{proof}


\begin{proposition} \label{prprm50}
(i) The following equality holds, for any $t\leq u$,
\begin{align*} 
\EMM \big(\wt U^\si_u(\tau)\,\big|\,\Filt_t\big) =\esssup_{\tau'\in\wt \Strat^\Aset_u(\tau,\si)} \EMM \Big( \widehat V^{\Aset}_{\stp (\tau',\si)} (\tau',\si)\,\big|\,\Filt_t\Big).
\end{align*}
(ii) The process $\wt U^\si (\tau)$ is uniformly integrable under $\Q$. In particular,
\[
\EMM \Big(\esssup_{t\in[0,T]}\big| \wt U^\si_t(\tau)\big|\Big) < \infty.
\]
(iii) For any $\tau \in\Strat^{\Aset} $, the process $\wt U^\si(\tau)$ is a $(\Qrob , \FF)$-supermartingale.
\end{proposition}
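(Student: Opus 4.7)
The plan is to transport the three-step argument from Lemma \ref{lemm50x} and Proposition \ref{proprm50}(i) to the $\Hist^\si$-adapted setting, replacing the family $\Strat^\Aset_t(\tau,\si)$ by $\wt \Strat^\Aset_t(\tau,\si)$ throughout and invoking Lemma \ref{lmcc7b}(ii) in place of Lemma \ref{proprm31}. Every structural ingredient needed (the lattice property, the supremum/conditional-expectation interchange of Lemma \ref{lemrm30}, and the integrability bound of Assumption \ref{assrm09}) is already available.

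For part (i), I would fix $t \leq u$ and apply Lemma \ref{lemrm30}(ii) to the family
$\wt{\mathcal{D}}_u := \big\{ \EMM(\wh V^\Aset_{\stp(\tau',\si)}(\tau',\si) \,|\, \Filt_u) : \tau' \in \wt \Strat^\Aset_u(\tau,\si) \big\}$,
whose lattice property is supplied by Lemma \ref{lmcc7b}(ii). Combined with the tower property this yields
\begin{align*}
\EMM\big(\wt U^\si_u(\tau)\,\big|\,\Filt_t\big)
&= \EMM\Big(\esssup_{\tau' \in \wt \Strat^\Aset_u(\tau,\si)} \EMM\big(\wh V^\Aset_{\stp(\tau',\si)}(\tau',\si)\,\big|\,\Filt_u\big) \,\Big|\,\Filt_t\Big) \\
&= \esssup_{\tau' \in \wt \Strat^\Aset_u(\tau,\si)} \EMM\big(\wh V^\Aset_{\stp(\tau',\si)}(\tau',\si)\,\big|\,\Filt_t\big),
\end{align*}
which is the asserted equality.

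For part (ii), I would introduce $M_T := \esssup_{s\in\Strat} |\wh V^\Aset_{\stp(s)}(s)|$, which lies in $L^1(\Q)$ by Assumption \ref{assrm09}, together with the non-negative uniformly integrable $(\Q,\FF)$-martingale $M_t := \EMM(M_T\,|\,\Filt_t)$. Repeating verbatim the Fatou-type chain of inequalities in the proof of Lemma \ref{lemm50x}(ii) (push the absolute value inside the essential supremum, then pull the essential supremum inside the conditional expectation via the lattice property of Lemma \ref{lmcc7b}(ii) and Lemma \ref{lemrm30}(ii)) produces the pointwise domination $|\wt U^\si_t(\tau)| \leq M_t$ for every $t \in [0,T]$. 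Uniform integrability of $\{\wt U^\si_\rho(\tau) : \rho \in \STOP_{[0,T]}\}$ then follows from that of $\{M_\rho\}$, and the essential-supremum bound reduces to the integrability of $\esssup_{t\in[0,T]} M_t$ (immediate in discrete time and obtained in continuous time from the underlying regularity of $M$). For part (iii), $\FF$-adaptedness is immediate from \eqref{eoqrm45}, and the supermartingale inequality follows directly from part (i) combined with the inclusion $\wt \Strat^\Aset_u(\tau,\si) \subseteq \wt \Strat^\Aset_t(\tau,\si)$ for $t \leq u$, which is immediate since $h_{[0,u]}(\tau',\si) = h_{[0,u]}(\tau,\si)$ forces $h_{[0,t]}(\tau',\si) = h_{[0,t]}(\tau,\si)$; taking $\esssup$ over the larger set at time $t$ only increases the right-hand side of (i), yielding $\EMM(\wt U^\si_u(\tau)\,|\,\Filt_t) \leq \wt U^\si_t(\tau)$.

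The main obstacle is essentially bookkeeping: one must be vigilant that the shift from ``strictly before $t$'' (the predictable case) to ``up to and through $t$'' (the adapted case) is correctly handled in each invocation of the $\Hist$-adaptedness of $\wh V^\Aset$ and in the nested conditioning steps. Since Lemma \ref{lmcc7b} already supplies the $\Hist^\si$-adaptedness of $\wt U^\si$ and the lattice property of $\wt{\mathcal{D}}_t$, no genuinely new analytic work is required beyond the verbal adjustments to the first-kind argument.
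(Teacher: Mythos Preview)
Your proposal is correct and follows essentially the same approach as the paper, which simply refers back to the proofs of Lemma \ref{lemm50x} and Proposition \ref{proprm50} with Lemma \ref{lmcc7b} substituted for Lemmas \ref{lmcc7a} and \ref{proprm31}. You have in fact spelled out the transported argument more explicitly than the paper does, including the correct verification of the inclusion $\wt \Strat^\Aset_u(\tau,\si) \subseteq \wt \Strat^\Aset_t(\tau,\si)$ needed for part (iii).
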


\begin{proof}
In view of Lemma \ref{lmcc7b}, all statements  can be established in a similar way as analogous statements
for the mapping $U^\si(\tau)$ (see the proofs of Lemma \ref{lemm50x} and Proposition \ref{proprm50}).
\end{proof}

\subsection{Discrete-Time Case}\label{sec7e}

We now assume a discrete-time arbitrage-free and complete market model $(B,S)$ with a finite underlying probability
 space $\Omega$. Therefore, for any European contingent claim with the payoff $X$ at its maturity date $T$, there exists a self-financing trading strategy $\phi$ satisfying,
for all $t \in [0,T]$,
\[
Z_t (\phi ) := \sum_{i=0}^d \phi^i_{t} S^i_t = \EMM (X_T \,|\, \Filt_t ).
\]
Note that the portfolio $\phi_t$ is established at time $t-1$, so that it is $\Filt_{t-1}$-measurable, meaning that
the process $\phi $ is $\FF$-predictable, and $\phi_{t}$ may depend on the outcome $h_{[0,t-1]}$.  At time $t$, the portfolio is revised from $\phi_t$ to $\phi_{t+1}$ such that
\[
\sum_{i=0}^d \phi^i_{t} S^i_t = \sum_{i=0}^d \phi^i_{t+1} S^i_t ,
\]
$\phi_{t+1}$ is $\Filt_{t}$-measurable and it may depend on $h_{[0,t]}$.  Since the model is arbitrage-free, all trading strategies that replicate $X$ have the same wealth process. The uniqueness of a replicating strategy is not guaranteed, however, since the complete market $(B,S)$ may still have redundancies.

We return to the question posed in the beginning of the section, that is, for a given tranche $\Aset $ we search for the possible prices of $\pi^\Aset_t(\Game,s)$ for some fixed $t \in [0,T]$ and $s \in \Strat $ that avoid arbitrage in $(B,S,\Aset)$. We will first focus on the discrete-time case, before proceeding to the continuous-time one. Recall the convention $[0,t)=[0,t-1]:= \{0, \dots , t-1\}$ for every $t=1,2, \dots ,T$. We denote by $\wh{Z}(\phi )$ the \emph{discounted wealth} given as $\wh{Z}(\phi )=B^{-1} Z(\phi ).$ We will need the following lemma.

\begin{lemma}\label{propro71}
Suppose that a mapping $\wh{M}\map{\Hist\times[0,T]\times\Omega}{\R}$ satisfies the following properties: \hfill \break
(i) for any $s\in\Strat$, the process $\wh{M}(s)=\wh{M}(h(s))$ is a $(\Qrob , \mathbb{F})$-martingale,
\\ (ii) $\wh{M}$ is $\Hist$-predictable. \\
Then there exists an  $\Hist$-predictable mapping  $\phi$  such that, for every $s \in \Strat $, the discounted wealth of a trading strategy $\phi (s)$ satisfies $\widehat Z(\phi (s))=\wh{M}(s)$.
\end{lemma}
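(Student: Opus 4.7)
The plan is to construct the desired mapping $\phi$ via a pathwise application of the discrete-time martingale representation theorem, which is available in a finite complete arbitrage-free market, and then to verify that the resulting mapping inherits $\Hist$-predictability from that of $\widehat M$.

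First, I would recall that in our setup every $(\Q,\FF)$-martingale $N$ can be written as $N_t = N_0 + \sum_{u=1}^t \sum_{i=1}^d \psi^i_u \Delta \widehat S^i_u$ where each $\psi^i_u$ is $\Filt_{u-1}$-measurable; the coefficients $\psi_u$ are obtained, atom-by-atom on $\Filt_{u-1}$, by solving the linear system $\psi_u \cdot \Delta \widehat S_u = N_u - N_{u-1}$, while the bond holding $\psi^0_u$ is then fixed by the self-financing condition together with the initial wealth $N_0$. Applied pathwise to $\widehat M(h)$ for each fixed $h\in\Hist$, this procedure yields a trading strategy $\phi(h)$ with $\widehat Z_t(\phi(h))=\widehat M_t(h)$ for every $t\in[0,T]$.

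The crucial step is to verify that $\phi$ so constructed can be made $\Hist$-predictable. Fix $t\in\{1,\ldots,T\}$ and outcomes $h,h'\in\Hist$ with $h_{[0,t)}=h'_{[0,t)}$. By the $\Hist$-predictability of $\widehat M$, we have $\widehat M_s(h)=\widehat M_s(h')$ on the stochastic interval $\cll 0,t\crr$; in particular, both $\widehat M_{t-1}(h)=\widehat M_{t-1}(h')$ and $\widehat M_t(h)=\widehat M_t(h')$. Hence, on every $\Filt_{t-1}$-atom, the linear system that determines the risky-asset holdings $(\phi^1_t(h),\ldots,\phi^d_t(h))$ coincides for $h$ and $h'$, and the bond holding $\phi^0_t(h)$, fixed by the self-financing condition and by the already-identical values of $\widehat M_{t-1}(h)$ and the prices $S_{t-1}$, also agrees. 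Consequently we may pick $\phi_t(h)=\phi_t(h')$, which gives the desired $\Hist$-predictability (the $\FF$-predictability being automatic from the construction).

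The main obstacle is the potential non-uniqueness of the replicating portfolio caused by redundancies among the primary assets, which could a priori allow inconsistent pathwise choices of $\phi_t(h)$ for outcomes sharing the same pre-$t$ history. I would overcome this by fixing once and for all a canonical selection rule --- for example, the minimum-norm solution of the replicating linear system on each $\Filt_{t-1}$-atom --- which is a deterministic Borel function of the coefficient data $(\widehat M_{t-1}(h),\widehat M_t(h),S_{t-1},S_t)$. Because this data depends on $h$ only through $h_{[0,t)}$, the resulting mapping $\phi$ is automatically $\Hist$-predictable, and the equality $\widehat Z(\phi(s))=\widehat M(s)$ holds for every $s\in\Strat$ by construction.
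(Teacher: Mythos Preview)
Your proposal is correct and follows essentially the same approach as the paper: invoke completeness to produce, for each outcome $h$, a replicating strategy $\phi(h)$ with $\widehat Z(\phi(h))=\widehat M(h)$, and then argue that an $\Hist$-predictable version exists. The paper's proof is in fact terser than yours---it simply appeals to completeness for the existence of $\phi(s)$ and defers the $\Hist$-predictability verification to an external reference---whereas you spell out the mechanism (the increment $\widehat M_t(h)-\widehat M_{t-1}(h)$ determines the risky holdings atomwise, the self-financing condition fixes $\phi^0_t$) and explicitly address the redundancy issue via a canonical selection rule, which is exactly the ingredient the paper leaves implicit.
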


\begin{proof}
For any $s \in \Strat$, we simply define $\phi(s)$ as a replicating (hence necessarily admissible) trading strategy for the claim $B_T \wh{M}_T(s)$. Hence the existence of $\phi (s)$ follows immediately from the postulated completeness of the discrete-time market model $(B,S)$ and $\Qrob$-integrability of $\wh{M}_T(s)$. It is also not hard to show that an $\Hist$-predictable version of the mapping $\phi $ can be selected (for details, see Section 7.2.2 in Guo \cite{GuoPhD}).
\end{proof}

\begin{remark} \label{remtt} {\rm
Note that the statement of the lemma will still hold if we replace $\Hist$ and $[0,T]$ by $\Hist^\si$ and $[t,T]$, respectively.}
\end{remark}

The next result shows that, for any fixed $(\tau,\si)\in\Strat$ and $t \in [0,T]$, there exists an issuer's super-hedging strategy $(\phi^\si,\si)$  on $[t,T]$ whose discounted wealth at time $t$  coincides with the value $U^\si_t(\tau)$ of the $\Hist^{\si }$-predictable  Snell envelope of the first kind. This property is a essential tool in establishing the upper bound in Theorem \ref{thmro10}.

\begin{proposition} \label{propro01}
 Take an arbitrary $\si\in\Strat^{-\Aset}$ and $t\in[0,T]$. If a strategy $\tau\in\Strat^{\Aset}$ is played by the holder on $[0,t-1]$, then there exists a trading strategy $\phi^\si (\tau ) \in\Phi_t(\Hist^\si)$ such that $(\phi^\si (\tau ),\si)$ is an issuer's super-hedging strategy on $[t,T]$ and its discounted time $t$ wealth is equal to the value  $U^\si_t(\tau)$ of the $\Hist^{\si }$-predictable  Snell envelope of the first kind. More explicitly, the discounted wealth process $\widehat \Val_u (\phi^\si(\tau)) := B^{-1}_u\Val_u(\phi^\si(\tau)),\, u \in [t,T]$, satisfies, on the event $\{\stp(\tau,\si)\geq t\}$,
\begin{gather}\label{eqro011}
\widehat \Val_t(\phi^\si(\tau))=U^\si_t(\tau)
\end{gather}
and
\begin{gather}\label{eqro012}
\widehat \Val_{\stp(\tau,\si)}(\phi^\si(\tau))\geq U^\si_{\stp(\tau,\si)}(\tau) \geq \widehat V^{\Aset}_{\stp(\tau,\si)}(\tau,\si).
\end{gather}
\end{proposition}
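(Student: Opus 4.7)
The plan is to construct $\phi^\si(\tau)$ from the Doob decomposition of the $\Hist^\si$-predictable Snell envelope of the first kind. For each $\tau' \in \Strat^\Aset$, the process $U^\si(\tau')$ is a $(\Qrob,\FF)$-supermartingale on $[0,T]$ by Proposition \ref{proprm50}(i), so I would decompose it on $[t,T]$ as $U^\si_u(\tau') = M^\si_u(\tau') - A^\si_u(\tau')$, where
\[
M^\si_u(\tau') = U^\si_t(\tau') + \sum_{k=t+1}^u \bigl[U^\si_k(\tau') - \EMM(U^\si_k(\tau')\,|\,\Filt_{k-1})\bigr]
\]
is the $(\Qrob,\FF)$-martingale part with $M^\si_t(\tau') = U^\si_t(\tau')$ and $A^\si(\tau')$ is the $\FF$-predictable non-decreasing compensator with $A^\si_t(\tau') = 0$.

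The key step is to verify that the mapping $M^\si$ is $\Hist^\si$-predictable. To this end, I would fix $\tau',\tau'' \in \Strat^\Aset$, set $\rho := \rho(h(\tau',\si), h(\tau'',\si))$ as in \eqref{veqsg40}, and show $M^\si_u(\tau') = M^\si_u(\tau'')$ on $\{u \leq \rho\}$ for every $u \in [t,T]$. By Lemma \ref{lmcc7a}, $U^\si_k(\tau') = U^\si_k(\tau'')$ holds on $\{k \leq \rho\}$ for every $k$. Since $\rho$ is an $\FF$-stopping time, $\{u \leq \rho\} \in \Filt_{u-1} \subseteq \Filt_{k-1}$ for every $k \leq u$, and hence
\[
\I_{\{u \leq \rho\}}\EMM(U^\si_k(\tau')\,|\,\Filt_{k-1}) = \EMM(\I_{\{u \leq \rho\}} U^\si_k(\tau')\,|\,\Filt_{k-1}) = \EMM(\I_{\{u \leq \rho\}} U^\si_k(\tau'')\,|\,\Filt_{k-1}) = \I_{\{u \leq \rho\}}\EMM(U^\si_k(\tau'')\,|\,\Filt_{k-1}).
\]
Summing the resulting martingale increments over $t+1 \leq k \leq u$ yields the claimed equality. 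This predictability step is the main obstacle, since the conditional expectations sit one step before the corresponding matching-history events, and one must exploit the $\Filt_{u-1}$-measurability of $\{u \leq \rho\}$ to push the indicator inside the conditioning.

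Once $\Hist^\si$-predictability of $M^\si$ is established, I would invoke Lemma \ref{propro71}, in the form of Remark \ref{remtt}, to produce an $\Hist^\si$-predictable trading strategy $\phi^\si \in \Phi_t(\Hist^\si)$ with $\widehat Z_u(\phi^\si(\tau')) = M^\si_u(\tau')$ for every $\tau' \in \Strat^\Aset$ and $u \in [t,T]$. Setting $u=t$ and $\tau'=\tau$ delivers \eqref{eqro011}. For \eqref{eqro012}, the equality $A^\si_t(\tau) = 0$ together with the monotonicity of $A^\si(\tau)$ gives, on $\{\stp(\tau,\si)\geq t\}$,
\[
\widehat Z_{\stp(\tau,\si)}(\phi^\si(\tau)) = U^\si_{\stp(\tau,\si)}(\tau) + A^\si_{\stp(\tau,\si)}(\tau) \geq U^\si_{\stp(\tau,\si)}(\tau),
\]
while $U^\si_{\stp(\tau,\si)}(\tau) \geq \widehat V^{\Aset}_{\stp(\tau,\si)}(\tau,\si)$ is obtained from \eqref{eqrm45} by taking $\tau$ itself as a competitor in the essential supremum (the corresponding conditional expectation reduces to the $\Filt_{\stp(\tau,\si)}$-measurable random variable $\widehat V^{\Aset}_{\stp(\tau,\si)}(\tau,\si)$). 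To confirm that $(\phi^\si(\tau),\si)$ is an issuer's super-hedging strategy in the sense of Definition \ref{defrm70}, I would repeat the same chain of inequalities with an arbitrary $\tau' \in \Strat^\Aset_t(\tau,\si)$ in place of $\tau$; since $\Theta$ is $\Hist$-adapted, $\stp(\tau',\si) \geq t$ on $\{\stp(\tau,\si)\geq t\}$, and because $A^\si_t(\tau') = 0$ the super-hedging inequality $\widehat Z_{\stp(\tau',\si)}(\phi^\si(\tau')) \geq \widehat V^{\Aset}_{\stp(\tau',\si)}(\tau',\si)$ follows for every admissible $\tau'$.
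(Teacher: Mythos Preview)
Your approach is essentially the paper's: Doob decomposition of $U^\si(\tau')$ on $[t,T]$, verification that the martingale part $M^\si$ is $\Hist^\si$-predictable, and application of Lemma~\ref{propro71} via Remark~\ref{remtt}. The paper differs only in that it takes the Doob decomposition on $[0,T]$ and then shifts by $A^\si_t(\tau)$, and it establishes the $\Hist^\si$-predictability of the martingale part by invoking the \emph{uniqueness} of the Doob decomposition of the stopped process $U^\si_{\cdot\wedge\rho}(\tau)=U^\si_{\cdot\wedge\rho}(\tau')$, rather than by the explicit indicator computation you carry out.

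However, your explicit computation contains a genuine error. You assert that ``$\{u\le\rho\}\in\Filt_{u-1}\subseteq\Filt_{k-1}$ for every $k\le u$'', but the inclusion goes the wrong way: for $k\le u$ one has $\Filt_{k-1}\subseteq\Filt_{u-1}$, so $\{u\le\rho\}$ is \emph{not} in general $\Filt_{k-1}$-measurable, and you cannot pull $\I_{\{u\le\rho\}}$ inside $\EMM(\,\cdot\,|\,\Filt_{k-1})$ as written. The fix is immediate: for each $k\in\{t+1,\dots,u\}$ use instead the event $\{k\le\rho\}=\{\rho\le k-1\}^c\in\Filt_{k-1}$, which satisfies $\{u\le\rho\}\subseteq\{k\le\rho\}$. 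Then
\[
\I_{\{k\le\rho\}}\,\EMM\big(U^\si_k(\tau')\,\big|\,\Filt_{k-1}\big)
=\EMM\big(\I_{\{k\le\rho\}}U^\si_k(\tau')\,\big|\,\Filt_{k-1}\big)
=\EMM\big(\I_{\{k\le\rho\}}U^\si_k(\tau'')\,\big|\,\Filt_{k-1}\big)
=\I_{\{k\le\rho\}}\,\EMM\big(U^\si_k(\tau'')\,\big|\,\Filt_{k-1}\big),
\]
and restricting to the smaller event $\{u\le\rho\}$ gives the equality you need for each summand. With this correction, your argument goes through and matches the paper's; alternatively, the paper's one-line appeal to uniqueness of the Doob decomposition on $\cll 0,\rho\crr$ avoids the computation entirely.
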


\begin{proof}
We fix $\si\in\Strat^{-\Aset}$ and $\tau\in\Strat^{\Aset}$. Recall that the Snell envelope $U^\si $
is $\Hist^{\si }$-predictable and,  by Proposition \ref{proprm50}, for any $\tau \in \Strat^{\Aset }$, the process $U^\si(\tau)$  is a $(\Qrob , \mathbb{F})$-supermartingale. By applying the Doob decomposition theorem, we obtain the unique decomposition $U^\si(\tau) = M^\si(\tau) - A^\si(\tau)$ where  $M^\si(\tau)$ is a $(\Qrob , \mathbb{F})$-martingale and $A^\si(\tau)$  is an increasing, integrable, ${\mathbb F}$-predictable process such that $A^\si_0(\tau)=0$.

Let us take arbitrary $\tau,\tau'\in \Strat^\Aset$ and let us define the $\FF$-stopping time
\begin{align} \label{tautau}
\rho = \rho (\tau , \tau ') := \inf \big\{ u\in[0,T]: h_u(\tau,\si) \neq h_u(\tau',\si) \big\} \wedge T.
\end{align}
Since the mapping $U^\si $ is $\Hist^{\si}$-predictable the equality $U^\si(\tau)=U^\si(\tau')$ holds
on $\cll 0 , \rho \crr $, meaning that $U^\si_{\cdot\wedge\rho}(\tau)=U^\si_{\cdot\wedge\rho}(\tau')$.
By the uniqueness of the Doob decomposition, we thus obtain
\begin{gather} \label{ju7}
M^\si_{\cdot\wedge\rho}(\tau) = M^\si_{\cdot\wedge\rho}(\tau'),
\quad A^\si_{\cdot\wedge\rho}(\tau) = A^\si_{\cdot\wedge\rho}(\tau'),
\end{gather}
For a fixed $t$, we define the process $\widehat{M}^\si_u (\tau) := M^\si_u (\tau)-A^\si_t(\tau), \, u \in [t,T]$, which is a  $(\mathbb{Q},\FF)$-martingale. Then we deduce from \eqref{ju7} that $\wh{M}^\si(\tau)=\wh{M}^\si(\tau')$ on $\cll t , \rho \crr $.
We conclude that the mapping $\widehat{M}^\si \map{\Hist^\si \times[t,T]\times\Omega}{\R}$ is $\Hist^\si$-predictable and thus it satisfies the assumptions of Lemma \ref{propro71} on $[t,T]$ (see also Remark \ref{remtt}).

Let $\phi^\si \in\Phi_t(\Hist^\si)$ be the mapping given by Lemma \ref{propro71} and Remark \ref{remtt}, for the mapping $\widehat{M}^\si $. To establish \eqref{eqro011} and \eqref{eqro012}, we observe that, for all $u \in [t,T]$,
\[
\widehat\Val_u(\phi^\si(\tau)) =\widehat{M}^\si_u (\tau)  = M^\si_u(\tau)-A^\si_t(\tau) = U^\si_u(\tau)+A^\si_u(\tau)-A^\si_t(\tau)\geq U^\si_u(\tau),
\]
where the inequality becomes equality when $u=t$. This shows that all inequalities in \eqref{eqro012} are valid.
\end{proof}

\subsubsection{No-Arbitrage Bounds in Discrete Time}

Proposition \ref{propro01} has the following interpretation. Suppose that the issuer also holds the tranche $-\Aset$
and chooses the strategy $\si $. Assume also we are at time $t$ and the holder of the tranche $\Aset $ played a
strategy $\tau $ on $[0,t-1]$. Then by implementing the $\Hist^\si$-adapted trading strategy $\phi^\si (\tau )$, the issuer is super-hedging his position for all $u \in [t,T]$, so that the wealth $\Val_u(\phi^\si(\tau))$ of his portfolio will always cover the required payoff for the holder of the tranche $\Aset$, no matter how the holder of $\Aset $ will decide to play on $[t,T]$. By construction, $\phi^\si (\tau )$ is also the cheapest (as of time $t$) of such super-hedging trading strategies for a predetermined $\si$, and the discounted wealth at time $t$ of the issuer's portfolio is equal to the value $U^\si_t(\tau)$ of the $\Hist^{\si }$-predictable Snell envelope of the first kind.

By Proposition \ref{proprm71}, an issuer's arbitrage is precluded when the price of the tranche  $\Aset$ is bounded above by the value of any issuer's super hedging strategy. This implies that $U^\si_t(\tau)$ is an upper bound of the discounted arbitrage price at time $t$. In Theorem \ref{thmro10}, we show that the best upper bound is obtained by taking the infimum of $U^\si_t(\tau)$ over all possible choices of $\si\in\Strat_t^{-\Aset}$.

\begin{theorem}\label{thmro10}
Let us fix $\Aset\subseteq\Mset$ and let us consider the tranche $\MGCC^\Aset_t(\Game,s)$. There is no arbitrage in $(B,S,\Aset)$
at time $t$ if and only if the discounted price $\widehat \pi^\Aset_t(\Game,s) = B^{-1}_t\pi^\Aset_t(\Game,s)$ satisfies
\begin{gather*} 
\esssup_{\tau\in\Strat^\Aset_t(s)} \essinf_{\si\in\Strat^{-\Aset}_t(s)} \EMM\Big( \widehat V^\Aset_{\stp (s)}
(\tau,\si)\,\big|\,\Filt_t\Big) \leq \widehat\pi^\Aset_t(\Game,s) \leq \essinf_{\si\in\Strat^{-\Aset}_t(s)} \esssup_{\tau\in\Strat^\Aset_t(s)} \EMM\Big( \widehat V^\Aset_{\stp (s)} (\tau,\si)\,\big|\,\Filt_t\Big)
\end{gather*}
where $\Qrob$ is the unique martingale measure for the market model $(B,S)$.
\end{theorem}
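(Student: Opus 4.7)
The plan is to pair the upper bound with absence of issuer's arbitrage and the lower bound with absence of holder's arbitrage, using Proposition \ref{propro01} together with its mirror on the holder's side, Proposition \ref{proprm71}, and the $(\Qrob,\FF)$-martingale property of the discounted wealth of any admissible trading strategy in the complete arbitrage-free market $(B,S)$.

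For the ``only if'' direction, assume no arbitrage in $(B,S,\Aset)$. For each $\si\in\Strat^{-\Aset}_t(s)$, Proposition \ref{propro01} yields an issuer's super-hedging strategy $(\phi^\si,\si)$ whose discounted time-$t$ wealth equals $U^\si_t(\tau)=\esssup_{\tau'\in\Strat^\Aset_t(s)}\EMM(\widehat V^\Aset_{\stp(\tau',\si)}(\tau',\si)\,|\,\Filt_t)$ (using $\Strat^\Aset_t(\tau,\si)=\Strat^\Aset_t(s)$ for $\tau\in\Strat^\Aset_t(s)$). Proposition \ref{proprm71} then forces $U^\si_t(\tau)\geq\widehat\pi^\Aset_t(\Game,s)$ almost surely, and taking the essential infimum over $\si$ produces the upper bound. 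For the lower bound I would develop a holder-side analog of Proposition \ref{propro01}. Setting $L^\tau_u(\si):=\essinf_{\si'\in\Strat^{-\Aset}_u(\tau,\si)}\EMM(\widehat V^\Aset_{\stp(\tau,\si')}(\tau,\si')\,|\,\Filt_u)$ for fixed $\tau\in\Strat^\Aset$, the process $-L^\tau(\si)$ is a Snell envelope of the payoff $-\widehat V^\Aset$ with the roles of holder and issuer interchanged; it inherits $\Hist^\tau$-predictability (mirror of Lemma \ref{lmcc7a}) and the $(\Qrob,\FF)$-supermartingale property (mirror of Proposition \ref{proprm50}). A Doob decomposition together with Lemma \ref{propro71} applied on the space $\Hist^\tau$ and the interval $[t,T]$ (see Remark \ref{remtt}) then produces a holder's super-hedging strategy $(\phi^\tau,\tau)$ with discounted time-$t$ wealth $-L^\tau_t(\si)$. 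Proposition \ref{proprm71} gives $L^\tau_t(\si)\leq\widehat\pi^\Aset_t(\Game,s)$, and the essential supremum over $\tau$ completes the lower bound.

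For the ``if'' direction, suppose the price lies in the stated interval and, for contradiction, that an issuer's arbitrage $(\phi^\si,\si)$ exists on some event $A\in\Filt_t$ of positive probability. Admissibility of $\phi^\si$ makes its discounted wealth a $(\Qrob,\FF)$-martingale, so optional sampling at the bounded stopping time $\stp(\tau',\si)$ gives, for every $\tau'\in\Strat^\Aset_t(s)$,
\[
\widehat\Val_t(\phi^\si)=\EMM\big(\widehat\Val_{\stp(\tau',\si)}(\phi^\si)\,\big|\,\Filt_t\big)\geq\EMM\Big(\widehat V^\Aset_{\stp(\tau',\si)}(\tau',\si)\,\big|\,\Filt_t\Big).
\]
Taking the essential supremum over $\tau'$ and invoking the upper bound yields $\widehat\Val_t(\phi^\si)\geq U^\si_t(\tau)\geq\essinf_{\si'}U^{\si'}_t(\tau)\geq\widehat\pi^\Aset_t(\Game,s)$ almost surely, contradicting $\widehat\Val_t(\phi^\si)<\widehat\pi^\Aset_t(\Game,s)$ on $A$. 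The mirror-image argument based on the lower bound rules out any holder's arbitrage.

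The main technical obstacle is assembling the holder-side analog of Proposition \ref{propro01}: one must reverify $\Hist^\tau$-predictability, the supermartingale property and uniform integrability for $-L^\tau(\si)$, and then build the super-hedging trading strategy via Doob decomposition and Lemma \ref{propro71} on $[t,T]$. All remaining manipulations — optional sampling, interchanging conditional expectation with essential supremum through the lattice property of Lemma \ref{proprm31}, and the pointwise comparison $\widehat\pi^\Aset_t\leq U^\si_t(\tau)$ for every $\si$ — are routine given the machinery already developed in Section \ref{sec3.3x}.
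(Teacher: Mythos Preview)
Your proposal is correct and follows essentially the same approach as the paper: both pair the upper bound with absence of issuer's arbitrage (via Proposition~\ref{propro01} and Proposition~\ref{proprm71}) and use optional sampling on the discounted wealth martingale for the converse. The only difference is cosmetic: where you explicitly build the holder-side envelope $L^\tau$, the paper simply observes that the lower bound follows by applying the entire upper-bound argument to the auxiliary game $\widetilde\Game$ with payoffs $\widetilde V^{-\Aset}(\tau,\si)=-V^\Aset(\tau,\si)$, which swaps the roles of holder and issuer and turns your $-L^\tau$ into the issuer's Snell envelope for $\widetilde\Game$.
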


\begin{proof}
We will prove that the upper bound holds if and only if no issuer's arbitrage exists. The statement regarding the lower bound can be established by applying analogous arguments to the game contingent claim $\MGCC(\widetilde \Game)$ with the payoffs defined by $\widetilde V^{-\Aset}(\tau,\si) = -V^\Aset (\tau,\si)$. Note that the upper bound 
can be rewritten as
\begin{gather}\label{eqro102}
\widehat\pi^\Aset_t(\Game,s)
\leq \essinf_{\si\in\Strat^{-\Aset}_t(s)} U^\si_t(\tau ).
\end{gather}
\noindent {\it First step.}  We first show that if there is no issuer's arbitrage at time $t$ then \eqref{eqro102} holds. From Proposition \ref{propro01}, we deduce that for any fixed $\si\in \Strat^{-\Aset}$, there exists an issuer's super-hedging strategy $(\phi^\si (\tau ),\si)$ on $[t,T]$ with the discounted value at time $t$ equal to $U^\si_t(\tau )$. Therefore, by Proposition \ref{proprm71}, if no issuer's arbitrage exists at time $t$, then $\widehat\pi^\Aset_t(\Game,s) \leq U^\si_t(\tau )$. The upper bound \eqref{eqro102} follows, as we may apply this argument to every $\si\in\Strat^{-\Aset}_t(s)$.

\noindent {\it Second step.} We will now show that if an issuer's arbitrage exists at time $t$ then \eqref{eqro102} fails to hold.
Assume that $(\phi^\si,\si)$ is an issuer's arbitrage, in the sense of Definition \ref{defrm03}. Then there exists an event $A \in \Filt_t$ with a positive probability $\Prob (A)$ (or, equivalently, $\Qrob (A)$) such that, on the event $A$,
\[
\widehat \Val_t(\phi^\si) < \widehat\pi^\Aset_t(\Game,s),\quad
\widehat \Val_{\stp (s)}(\phi^\si) \geq \widehat V^\Aset_{\stp (s)}(\tau,\si),\quad \forall\,\tau\in\Strat^\Aset_t(s).
\]
Since $\widehat \Val(\phi^\si)$ is a  $(\Qrob , \mathbb{F})$-martingale, from the optional sampling theorem we obtain, on the event $A$,
\begin{align*}
\widehat \Val_t(\phi^\si) = \EMM\Big( \widehat \Val_{\stp (s)}(\phi^\si)\,\big|\,\Filt_t\Big) \geq \EMM\Big( \widehat V^\Aset_{\stp (s)}(\tau,\si)\,\big|\,\Filt_t\Big),\quad \forall\,\tau\in\Strat^\Aset_t(s).
\end{align*}
Hence, by the definition of the essential supremum, we obtain on $A$
\begin{align*}
\widehat\pi^\Aset_t(\Game,s) > \widehat \Val_t(\phi^\si) \geq \esssup_{\tau\in\Strat^\Aset_t(s)} \EMM\Big( \widehat V^\Aset_{\stp (s)} (\tau,\si)\,\big|\,\Filt_t\Big)=U^\si_t(\tau ) \geq \essinf_{\si\in\Strat^{-\Aset}_t(s)} U^\si_t(\tau ).
\end{align*}
Consequently,
$$
\Qrob\Big(\widehat\pi^\Aset_t(\Game,s) >  \essinf_{\si\in\Strat^{-\Aset}_t(s)} U^\si_t(\tau )\Big)>0
$$
and thus \eqref{eqro102} fails to hold.
\end{proof}


\subsection{Continuous-Time Case}\label{sec7f}

In this section, we consider an arbitrage-free and complete market model with the unique martingale measure $\mathbb{Q}$,
for instance, the multi-dimensional Black-Scholes model (see Karatzas and Shreve \cite{Karatzas1}). We emphasise that, in the continuous-time setup, the filtration $\FF$ is assumed to satisfy the usual conditions of $\Qrob$-completeness and right-continuity. For $i=1,\ldots,d$, the discounted stock price processes $\wh{S}^1, \dots , \wh{S}^d$ are $(\Qrob , \mathbb{F})$-local martingales.
Let us denote by ${\cal L}(\wh{S})$ the class of all $\mathbb{R}^d$-valued, $\FF$-predictable processes that are $\wh{S}$-integrable on $[0,T]$. More precisely, we identify any processes $\phi$ and $\wt \phi $ from ${\cal L}(\wh{S})$ such that
\[
\int_{(0,t]} \phi_u \, d \wh{S}_u = \int_{(0,t]} \wt \phi_u \, d \wh{S}_u , \quad \forall \, t \in [0,T],
\]
and we write $\phi \stackrel{\wh{S}} = \wt \phi $ in that case. Due to the local properties of the It\^o stochastic
integral, for any stopping time $\rho \in {\cal T}_{[0,T]}$ and any process $\phi \in {\cal L}(\wh{S})$,
if $\int_{(0, \cdot ]}\phi_u \, d\wh{S}_u = 0$ on $\cll 0, \rho \crr $, meaning that
\[
\int_{(0,t \wedge \rho ]} \phi_u \, d \wh{S}_u = 0 , \quad \forall \, t \in [0,T],
\]
then $\phi \stackrel{\wh{S}} = 0$ on  $\cll 0, \rho \crr $. When searching for super-hedging strategies,
we need to modify accordingly Definition \ref{defpredi}  of an $\Hist$-predictable mapping.

\begin{definition}  {\rm
A mapping $\phi \map{\Hist\times[0,T]\times\Omega}{\mathbb{R}^d}$ is \emph{$(\wh{S},\Hist )$-predictable} if
$\phi (h) \in {\cal L}(\wh{S})$ for every $h \in \Hist$ and, for every $h,h'\in\Hist$, the equality $\phi (h) \stackrel{\wh{S}} =\phi (h')$ holds on $\cll 0, \rho \crr$ where $\rho = \rho (h,h') $ is given by~\eqref{veqsg40}.}
\end{definition}

Throughout Section \ref{sec7f}, we work under the following standing assumption, which complements Assumption \ref{assrm09}.

\begin{assumption} \label{asm09} {\rm
The discounted stock price $\wh{S}= (\wh{S}^1, \dots , \wh{S}^d)$ has the {\it predictable representation property} with
respect to the filtration $\mathbb{F}$, that is, for any  $(\Qrob , \FF)$-martingale $M$ there exists a process
$\phi \in {\cal L}(\wh{S})$ such that $M = \int_{(0, \cdot ]}\phi_u \, d\wh{S}_u$.}
\end{assumption}


\subsubsection{Right-Continuous Games} \label{secstrong}

Recall that the $\Hist^{\si }$-predictable  Snell envelope $U^\si(\tau )$ is a  $(\Qrob , \mathbb{F})$-supermartingale for any fixed $(\tau,\si)\in\Strat$. To be able to apply the Doob-Meyer decomposition theorem, it suffices to show that $U^\si(\tau)$ is an
RCLL supermartingale of class (D). In view of part (ii) in Proposition \ref{proprm50}, it thus remains to show that $U^\si(\tau)$
admits an RCLL modification. To this end, we recall the classic Doob regularisation theorem (see, e.g., Theorem 6.27 in Kallenberg \cite{Kallenberg}).


\begin{theorem} \label{thmsg07}
A $(\Qrob ,\FF )$-supermartingale $U$ admits an RCLL modification if and only if the function $\EMM (U_t),\, t \in [0,T]$ is right-continuous.
\end{theorem}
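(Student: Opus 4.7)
The plan is to prove the two directions separately. The forward direction (existence of an RCLL modification implies right-continuity of $t \mapsto \EMM(U_t)$) is the short half, and the harder reverse direction requires Doob's upcrossing inequality to construct the modification.

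For the forward direction, suppose $\wt U$ is an RCLL modification of $U$. Fix $t \in [0,T]$ and a sequence $s_n \downarrow t$. By right-continuity, $\wt U_{s_n} \to \wt U_t$ almost surely, so $U_{s_n} \to U_t$ in probability (modulo null sets). The supermartingale property and integrability give $U_s^- \leq \EMM(U_T^- \,|\, \Filt_s)$ for $s \geq t$, so the family $\{U_{s_n}^-\}$ is uniformly integrable. Applying Fatou's lemma to $U^+$ and uniform integrability to $U^-$, together with the monotonicity $\EMM(U_{s_n}) \leq \EMM(U_t)$ coming from the supermartingale property, yields $\EMM(U_{s_n}) \to \EMM(U_t)$.

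For the reverse direction, I would follow the Doob regularisation argument. Fix rationals $a < b$ and a finite subset $F \subset \mathbb{Q} \cap [0,T]$; by Doob's upcrossing inequality, the expected number of upcrossings of $[a,b]$ by $(U_r)_{r \in F}$ is bounded by $(\EMM((U_0 - a)^+) + |a|)/(b-a)$, uniformly in $F$. Taking a countable union over rationals $a < b$ and using that $\sup_{r \in \mathbb{Q}\cap[0,T]} |U_r|$ is finite a.s. (via the maximal inequality applied to the supermartingale and its negative part bound), one deduces that, off a $\Qrob$-null set $N$, the limits
\[
\wt U_t(\omega) := \lim_{\substack{s \downarrow t \\ s \in \mathbb{Q}}} U_s(\omega)
\]
exist for every $t \in [0,T]$ and the function $t \mapsto \wt U_t(\omega)$ is RCLL. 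The usual-conditions completeness and right-continuity of $\FF$ ensure $\wt U$ is $\FF$-adapted.

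It then remains to show that $\wt U_t = U_t$ almost surely for every $t$, and that $\wt U$ inherits the supermartingale property. For the identification, the conditional Fatou lemma together with the supermartingale inequality gives $\wt U_t = \lim_{s \downarrow t, s \in \mathbb{Q}} U_s \leq U_t$ a.s. On the other hand, the right-continuity hypothesis on $\EMM(U_t)$ combined with uniform integrability of the family $\{U_s : t \leq s \leq t+1\}$ (justified as above through the bound on $U_s^-$ and the martingale $\EMM(U_T \,|\, \Filt_s)$) yields $\EMM(\wt U_t) = \lim_{s \downarrow t}\EMM(U_s) = \EMM(U_t)$. Since $\wt U_t \leq U_t$ a.s. and the two have equal expectations, they coincide a.s. The supermartingale property of $\wt U$ then follows from that of $U$ by passage to the limit along rationals. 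The main obstacle is this last identification step: without right-continuity of $t \mapsto \EMM(U_t)$, the candidate process $\wt U_t$ could fail to equal $U_t$, and this is precisely where the hypothesis is used.
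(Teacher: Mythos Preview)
The paper does not supply a proof of this theorem at all: it is stated as the classical Doob regularisation theorem with a reference to Theorem~6.27 in Kallenberg~\cite{Kallenberg}, and is then invoked as a black box. So there is nothing to compare against.

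Your sketch is the standard textbook argument and is essentially correct. Two small points worth tightening. First, in the reverse direction you assert uniform integrability of the whole family $\{U_s : t \leq s \leq t+1\}$ to pass the limit through the expectation; in fact only the negative parts $\{U_s^-\}$ are uniformly integrable (via $U_s^- \leq \EMM(U_T^- \,|\, \Filt_s)$), and this is already enough: a.s.\ convergence $U_{s_n} \to \wt U_t$ together with UI of the negative parts gives $\EMM(\wt U_t) \leq \liminf \EMM(U_{s_n})$ by Fatou, while monotonicity of $s \mapsto \EMM(U_s)$ and the right-continuity hypothesis give $\lim \EMM(U_{s_n}) = \EMM(U_t)$, so $\EMM(\wt U_t) \leq \EMM(U_t)$; combined with $\wt U_t \leq U_t$ a.s.\ and the reverse inequality of expectations (which requires a short separate argument, e.g.\ via the reverse-supermartingale convergence theorem applied to $(U_{s_n})_n$), equality follows. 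Second, the explicit upcrossing bound you wrote is not quite the standard form for supermartingales, but any uniform bound suffices and the argument is unaffected.
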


 By applying this result to our case, we see that the supermartingale $U^\si(\tau)$ has an RCLL modification whenever the expected value $\EMM(U^\si_t(\tau))$ is right-continuous. Note also that, from the proof of part (i) in Lemma \ref{lemm50x}, we deduce that
\begin{gather}\label{eqsg102}
\EMM(U^\si_t(\tau)) =\sup_{\tau' \in \Strat^\Aset_t(\tau,\si)} \EMM\Big(\widehat V^\Aset_{\stp(\tau',\si)} (\tau',\si) \Big).
\end{gather}
The right-continuity of the function given by \eqref{eqsg102} is not guaranteed, in general.
Therefore, we need to impose  additional regularity conditions on the game $\Game$ under the martingale measure $\mathbb{Q}$.
To be more specific, in this subsection, we work under the following assumption.

\begin{assumption} \label{asssg20} {\rm
A continuous-time game $\Game$ satisfies the following  {\it right-continuity condition}:
for any $\Aset\subseteq\Mset$ and arbitrary $s=(\tau,\si)\in\Strat^{-\Aset}\times\Strat^\Aset$,
the following equalities hold for every $t\in[0,T]$,}
\begin{align}
\lim_{u\downarrow t} \sup_{\tau' \in \Strat^\Aset_u(\tau,\si)} \EMM\Big(\widehat V^\Aset_{\stp(\tau',\si)}  (\tau',\si) \Big) &= \sup_{\tau' \in \Strat^\Aset_t(\tau,\si)} \EMM\Big(\widehat V^\Aset_{\stp(\tau',\si)} (\tau',\si) \Big),\\
\lim_{u\downarrow t} \inf_{\si' \in \Strat^{-\Aset}_u(\tau,\si)} \EMM\Big(\widehat V^\Aset_{\stp(\tau,\si')}  (\tau,\si') \Big) &= \inf_{\si' \in \Strat^{-\Aset}_t(\tau,\si)} \EMM\Big(\widehat V^\Aset_{\stp(\tau,\si')} (\tau,\si') \Big).
\end{align}
\end{assumption}

Any game $\Game $ satisfying Assumption \ref{asssg20} is referred to as a \emph{right-continuous game}.
The motivation for this name stems from the fact that Assumption \ref{asssg20} allows us to establish the existence of an RCLL modification of the $\Hist^\si$-predictable Snell envelope of the first kind, as the following result shows.

\begin{corollary}\label{propsg22}
Assume that a game $\Game $ is right-continuous. Then, for any fixed $(\tau,\si)\in\Strat$, the process $U^\si(\tau)$ has an RCLL modification which is a $(\mathbb{Q},\FF)$-supermartingale of class~$(D)$.
\end{corollary}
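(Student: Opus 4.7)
The plan is to assemble three already-established ingredients in sequence: the formula \eqref{eqsg102} for $\EMM(U^\si_t(\tau))$, the right-continuity condition in Assumption \ref{asssg20}, and the Doob regularisation theorem (Theorem \ref{thmsg07}). The supermartingale property of $U^\si(\tau)$ is available from part (i) of Proposition \ref{proprm50}, and the class (D) conclusion will then follow immediately from part (ii) of the same proposition.

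First, I would fix $(\tau,\si)\in\Strat$ and recall that \eqref{eqsg102}, which itself was obtained in the proof of Lemma \ref{lemm50x}(i) via the lattice property of ${\cal D}_t$ and Lemma \ref{lemrm30}, gives
\[
\EMM(U^\si_t(\tau)) = \sup_{\tau'\in\Strat^\Aset_t(\tau,\si)} \EMM\Big(\wh V^\Aset_{\stp(\tau',\si)}(\tau',\si)\Big).
\]
The first equality in Assumption \ref{asssg20}, applied with the present $(\tau,\si)$ and $\Aset$, asserts precisely the right-continuity of this map at every $t\in[0,T]$. Thus the deterministic function $t\mapsto \EMM(U^\si_t(\tau))$ is right-continuous.

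Second, by Proposition \ref{proprm50}(i) the process $U^\si(\tau)$ is a $(\Q,\FF)$-supermartingale, and the filtration $\FF$ satisfies the usual conditions by assumption. Theorem \ref{thmsg07} therefore yields an RCLL modification of $U^\si(\tau)$.

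Third, given this RCLL modification, part (ii) of Proposition \ref{proprm50} directly delivers that $U^\si(\tau)$ is of class (D): the proof there shows that $|U^\si_\rho(\tau)| \leq M_\rho$ uniformly over stopping times $\rho\in\STOP_{[0,T]}$, where $M$ is a uniformly integrable $(\Q,\FF)$-martingale, so the family $\{U^\si_\rho(\tau):\rho\in\STOP_{[0,T]}\}$ is uniformly integrable. No step here is substantial; the only subtlety — and the reason Assumption \ref{asssg20} is imposed in the first place — is that one cannot in general move the supremum over $\tau'\in\Strat^\Aset_u(\tau,\si)$ through the limit $u\downarrow t$ without a regularity hypothesis, and this is exactly what the right-continuity assumption supplies.
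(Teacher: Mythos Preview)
Your proposal is correct and follows essentially the same approach as the paper's own proof, which simply notes that Assumption \ref{asssg20} makes the function in \eqref{eqsg102} right-continuous and then invokes Proposition \ref{proprm50} together with Theorem \ref{thmsg07}. You have merely unpacked these steps in more detail.
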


\begin{proof}
From Assumption \ref{asssg20}, it follows that the function given by the right-hand side in \eqref{eqsg102} is right-continuous.
Hence the assertion is an immediate consequence of Proposition \ref{proprm50} and Theorem~\ref{thmsg07}.
\end{proof}

\begin{lemma}\label{propsg04}
Suppose that the mapping $\wh{M}\map{\Hist\times[0,T]\times\Omega}{\R}$ satisfies the following properties: \hfill \break
(i) for any $s\in\Strat$, the process $\wh{M}(s)=\wh{M}(h(s))$ is an RCLL $(\mathbb{Q},\FF)$-martingale, \hfill \break
(ii) $\wh{M}$ is $\Hist$-predictable.\\
Then there exists an $(\wh{S},\Hist )$-predictable mapping $\phi $ such that, for every $s \in \Strat$, the discounted wealth
 of a trading strategy $\phi (s)$ satisfies  $\widehat Z(\phi (s))=\wh{M}(s)$.
The statement is still valid if we replace $\Hist$ and $[0,T]$ by $\Hist^\si$ and $[t,T]$, respectively.
\end{lemma}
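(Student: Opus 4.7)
The plan is to construct $\phi$ pointwise in $h\in\Hist$ via the predictable representation property (Assumption \ref{asm09}), and then deduce the $(\wh{S},\Hist)$-predictability from the $\Hist$-predictability of $\wh{M}$ together with the local property of the It\^o integral recalled just above Assumption \ref{asm09}.

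First, fix $h\in\Hist$. By hypothesis $\wh{M}(h)$ is an RCLL $(\Qrob,\FF)$-martingale, so Assumption \ref{asm09} yields some $\psi(h)\in{\cal L}(\wh S)$ with $\wh M_t(h)-\wh M_0(h)=\int_{(0,t]}\psi_u(h)\,d\wh S_u$ for all $t\in[0,T]$. Define the bond holdings by setting the initial wealth equal to $B_0\wh M_0(h)$ and using the self-financing condition; this determines the $\R^{d+1}$-valued trading strategy $\phi(h)=(\phi^0(h),\psi(h))$ whose discounted wealth is precisely $\wh Z(\phi(h))=\wh M(h)$. As in the discrete case (see the reference to \cite{GuoPhD} cited after Lemma \ref{propro71}), the pointwise construction $h\mapsto\phi(h)$ can be realised as a genuine mapping by an appropriate measurable selection; the same technical argument transfers to the continuous setting and I will invoke it directly rather than reproduce it.

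It remains to check the predictability property. Take $h,h'\in\Hist$ and let $\rho=\rho(h,h')$ be the stopping time defined in \eqref{veqsg40}. By the $\Hist$-predictability of $\wh M$, we have $\wh M(h)=\wh M(h')$ on $\cll0,\rho\crr$, so in particular $\wh M_0(h)=\wh M_0(h')$ and, for every $t\in[0,T]$,
\begin{equation*}
\int_{(0,t\wedge\rho]}\psi_u(h)\,d\wh S_u \;=\; \wh M_{t\wedge\rho}(h)-\wh M_0(h) \;=\; \wh M_{t\wedge\rho}(h')-\wh M_0(h') \;=\; \int_{(0,t\wedge\rho]}\psi_u(h')\,d\wh S_u.
\end{equation*}
Hence $\int_{(0,\cdot]}(\psi(h)-\psi(h'))\,d\wh S=0$ on $\cll0,\rho\crr$, and the local property of the It\^o integral recalled in the excerpt gives $\psi(h)\stackrel{\wh S}{=}\psi(h')$ on $\cll0,\rho\crr$. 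Since the bond component $\phi^0$ is determined by $\psi$ and the (trivially identical) initial wealths, we conclude that $\phi(h)\stackrel{\wh S}{=}\phi(h')$ on $\cll0,\rho\crr$, i.e.~$\phi$ is $(\wh S,\Hist)$-predictable. The extension to the variant with $\Hist^\si$ in place of $\Hist$ and $[t,T]$ in place of $[0,T]$ is verbatim, as $\rho(h,h')\geq t$ whenever the histories agree on $[0,t)$ and the predictable representation can be started at time $t$ in the filtration restricted to $[t,T]$.

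The main obstacle, as in Lemma \ref{propro71}, lies not in the stochastic-analytic content but in the measurable selection that turns the pointwise construction $h\mapsto\psi(h)$ into an honest $(\wh S,\Hist)$-predictable mapping; this is where one must be careful, since the representing integrand is only unique up to the $\stackrel{\wh S}{=}$ equivalence, so the argument requires selecting a canonical representative on each equivalence class of histories agreeing up to a given stopping time. Given that, the remainder of the proof is routine.
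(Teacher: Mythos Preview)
Your proposal is correct and follows essentially the same approach as the paper: apply the predictable representation property (Assumption \ref{asm09}) pointwise in $h$ (or $s$), then use the $\Hist$-predictability of $\wh M$ together with the local property of the It\^o integral to deduce that the integrands agree in the $\stackrel{\wh S}{=}$ sense on $\cll 0,\rho\crr$. Your additional remarks on the bond component and on measurable selection are more explicit than the paper's own proof, which simply invokes the representation and checks predictability without discussing these points here.
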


\begin{proof}
For any fixed $s\in\Strat$, by the postulated predictable representation property of $\wh{S}$ with respect to $\FF$
(see Assumption \ref{asm09}), there exists an $\FF$-predictable, $\R^d$-valued, integrable process $\phi (s)$ such that
\[
\wh{M}_t(s)=\wh{M}_0(s)+\sum_{j=1}^d \int_0^t \phi^j_u(s)\, d\wh{S}^j_u,\quad  t \in[0,T].
\]
It is routine to check that $\phi(s)$ is indeed an admissible trading strategy with $\widehat \Val(\phi(s))=\wh{M}(s)$.
For any $s,s'\in\Strat$, we define the random time  (see \eqref{veqsg40})
\begin{gather}\label{eqsg40}
\rho (s,s') := \inf \big\{ t \in [0,T] : h_t(s) \neq h_t(s') \big\} \wedge T.
\end{gather}
Then $\rho (s,s') $ is an $\FF$-stopping time and, for every $t \in [0,T]$, the event $\{h_{[0,t)}(s)=h_{[0,t)}(s')\}$
coincides with $\{t \leq \rho (s,s') \}$. By the $\Hist$-predictability of the mapping $\wh{M}$, the martingale $\wh{M}(s)-\wh{M}(s')$ vanishes on $\cll 0, \rho \crr $, and thus, one again we conclude
that $\phi (s) - \phi (s')  \stackrel{\wh{S}}=  0$ on $\cll 0, \rho \crr $. This shows that the mapping $\phi $ is $(\wh{S},\Hist )$-predictable.
\end{proof}

The following proposition, which holds under Assumption \ref{asssg20}, is the continuous-time analogue of Proposition \ref{propro01}. We fix here $t$, and thus all mappings are considered on $[t,T]$, rather than $[0,T]$.

\begin{proposition} \label{propsg21}
Fix $\si\in\Strat^{-\Aset}$ and $t\in[0,T]$.  There exists a mapping $\phi^\si  \map{\Hist^{\si } \times[t,T]\times\Omega}{\R}$, which is $(\wh{S},\Hist^\si )$-predictable on $[t,T]$, such that for every $\tau\in\Strat^{\Aset}$, on the event $\{\stp(\tau,\si)\geq t\}$, the pair $(\phi^\si (\tau ),\si)$ is an issuer's super-hedging strategy and its discounted wealth
at time $t$ is equal to the value $U^\si_t(\tau)$ of the $\Hist^{\si }$-predictable Snell envelope of the first kind, so that
\begin{gather}
\widehat Z_{t}(\phi^\si(\tau))=U^\si_t(\tau)=\esssup_{\tau'\in\Strat^\Aset_t(\tau,\si)} \EMM \Big(\widehat V^{\Aset}_{\stp(\tau',\si)}(\tau',\si) \,\Big|\,\Filt_t\Big),\label{eqsg211}
\end{gather}
and
\begin{gather}
\widehat Z_{\stp (s)}(\phi^\si(\tau))\geq U^\si_{\stp(s)}(\tau) \geq  \widehat V^{\Aset}_{\stp(s)}(\tau,\si) . \label{eqsg212}
\end{gather}
\end{proposition}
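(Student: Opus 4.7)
The plan is to translate the discrete-time argument of Proposition \ref{propro01} to the continuous-time setting, replacing the Doob decomposition by the Doob-Meyer decomposition and Lemma \ref{propro71} by its continuous-time analogue Lemma \ref{propsg04}. Corollary \ref{propsg22} is what enables this transition, since it guarantees that, for each fixed $\tau \in \Strat^{\Aset}$, the Snell envelope $U^\si(\tau)$ admits an RCLL modification which is a $(\Qrob,\FF)$-supermartingale of class (D). Accordingly, I would work with such a modification throughout.

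First I would apply the Doob-Meyer decomposition to write $U^\si(\tau) = M^\si(\tau) - A^\si(\tau)$, where $M^\si(\tau)$ is a uniformly integrable $(\Qrob,\FF)$-martingale and $A^\si(\tau)$ is an $\FF$-predictable, non-decreasing, integrable process with $A^\si_0(\tau) = 0$. The next step is to show that this decomposition is compatible with the $\Hist^\si$-predictability of $U^\si$. For any $\tau, \tau' \in \Strat^{\Aset}$, setting $\rho = \rho((\tau,\si),(\tau',\si))$ and using the $\Hist^\si$-predictability of $U^\si$ gives $U^\si(\tau)^\rho = U^\si(\tau')^\rho$; uniqueness of the Doob-Meyer decomposition applied to the stopped supermartingales then yields $M^\si(\tau)^\rho = M^\si(\tau')^\rho$ and $A^\si(\tau)^\rho = A^\si(\tau')^\rho$. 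In particular, for any $\tau' \in \Strat^\Aset_t(\tau,\si)$ one has $\rho \geq t$, hence $A^\si_t(\tau) = A^\si_t(\tau')$.

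With this in hand, I would define, for $u \in [t,T]$, the mapping $\wh{M}^\si_u(\tau) := M^\si_u(\tau) - A^\si_t(\tau)$. Each $\wh{M}^\si(\tau)$ is then an RCLL $(\Qrob,\FF)$-martingale, and the previous paragraph shows that $\wh{M}^\si \map{\Hist^\si \times [t,T] \times \Omega}{\R}$ is $\Hist^\si$-predictable on $[t,T]$ (since both the martingale part and the constant shift $A^\si_t$ agree on $\cll t, \rho \crr$). Lemma \ref{propsg04}, in its $\Hist^\si$/$[t,T]$ variant, then supplies an $(\wh{S},\Hist^\si)$-predictable mapping $\phi^\si$ on $[t,T]$ with $\wh{Z}(\phi^\si(\tau)) = \wh{M}^\si(\tau)$ for every $\tau \in \Strat^\Aset$. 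Equality \eqref{eqsg211} follows at $u = t$ because $\wh{Z}_t(\phi^\si(\tau)) = M^\si_t(\tau) - A^\si_t(\tau) = U^\si_t(\tau)$. For \eqref{eqsg212}, on $\{\stp(s) \geq t\}$ the monotonicity of $A^\si(\tau)$ yields
\[
\wh{Z}_{\stp(s)}(\phi^\si(\tau)) = U^\si_{\stp(s)}(\tau) + A^\si_{\stp(s)}(\tau) - A^\si_t(\tau) \geq U^\si_{\stp(s)}(\tau),
\]
and the second inequality $U^\si_{\stp(s)}(\tau) \geq \wh{V}^\Aset_{\stp(s)}(\tau,\si)$ is immediate from the definition \eqref{eqrm45} of $U^\si$ at time $\stp(s)$ (taking $\tau' = \tau$ as admissible competitor).

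The main technical hurdle I anticipate is the inheritance of $\Hist^\si$-predictability by $M^\si$ via uniqueness of the Doob-Meyer decomposition. Care is required because uniqueness holds only up to indistinguishability and must be applied to the processes stopped at the random time $\rho$; one has to verify that $(M^\si(\tau))^\rho$ remains a uniformly integrable martingale and $(A^\si(\tau))^\rho$ remains $\FF$-predictable and increasing, which follows from optional stopping and the fact that $\rho$ is an $\FF$-stopping time, but the bookkeeping has to be carried out carefully within the class (D) framework so that the uniqueness assertion is genuinely applicable.
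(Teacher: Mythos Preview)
Your proposal is correct and follows essentially the same route as the paper: invoke Corollary \ref{propsg22} to get an RCLL class (D) supermartingale, apply the Doob-Meyer decomposition, transfer $\Hist^\si$-predictability to $M^\si$ and $A^\si$ via uniqueness of the decomposition for the processes stopped at $\rho$, shift by $A^\si_t(\tau)$ to form $\wh{M}^\si$, and then apply Lemma \ref{propsg04}. Your remark about needing to stop at $\rho$ before invoking uniqueness is in fact slightly more careful than the paper's own presentation, which simply asserts the conclusion.
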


\begin{proof}
The proof is analogous to the proof of Proposition \ref{propro01}.
From Proposition \ref{propsg22}, we know that, for any $\tau \in \Strat^\Aset $, the Snell envelope $U^\si(\tau)$ is an RCLL $\Qrob$-supermartingale of class (D). In view of the Doob-Meyer decomposition theorem, there exists a continuous, uniformly integrable $(\Qrob ,\FF)$-martingale $M^\si(\tau)$ and an increasing, $\FF$-predictable process $A^\si(\tau)$ satisfying $U^\si(\tau) = M^\si(\tau) - A^\si(\tau),\,  A^\si_0(\tau)=0$ and this decomposition is unique.
We define a candidate for the discounted wealth process by setting
$\widehat M^\si_u(\tau)= M_u^\si(\tau)- A_t^\si(\tau)$ for all $u \in [t,T]$.
It clear that, for any fixed $\tau \in \Strat^\Aset $, the process $\widehat Z^\si(\tau )$ is a uniformly
integrable $(\Qrob ,\FF)$-martingale.

We will show that the mapping $\widehat M^\si$ is $\Hist^\si$-predictable on $[t,T]$. To this end, we take arbitrary $s=(\tau,\si),\, s'=(\tau',\si)$ and we define the $\FF$-stopping time $\rho = \rho (\tau ,\tau')$ using equation \eqref{tautau}.
Since $U^\si $ is $\Hist^\si$-predictable, we have that $U^\si(\tau)=U^\si(\tau')$ on
$\cll 0 , \rho \crr $. From the uniqueness of the Doob-Meyer decomposition,  we thus obtain
$M^\si(\tau) = M^\si(\tau ')$ and $A^\si(\tau) = A^\si(\tau')$ on $\cll 0 , \rho \crr $.
We deduce easily that $\widehat M^\si(\tau ) = \widehat M^\si(\tau ')$
on $\cll t , \rho \crr $ and thus the $\Hist^\si$-predictability of $\wh{M}^\si $ on $[t,T]$ is established.
From Lemma \ref{propsg04}, we deduce the existence of a mapping $\phi^\si  \map{\Hist^{\si } \times[t,T]\times\Omega}{\R}$, which is $(\wh{S},\Hist^\si )$-predictable and such that the equality $\widehat Z(\phi^\si (\tau ) )=\widehat{M}^\si (\tau )$ holds
on $[t,T]$.

It now remains to check that \eqref{eqsg211} and \eqref{eqsg212} are valid. For any $\FF$-stopping time $\stp (s)$ taking values in $[t,T]$, we obtain, on the event $\{ \stp (s) \geq t \}$,
\[
\widehat Z_{\stp (s)}(\phi^\si(\tau)) =\widehat M_{\stp (s)}(\phi^\si(\tau)) = M^\si_{\stp (s)}(\tau)-A^\si_t(\tau) = U^\si_{\stp (s)}(\tau)+A^\si_{\stp (s)}(\tau)-A^\si_t(\tau)\geq U^\si_{\stp (s)}(\tau)
\]
with equality holding on the event $\{ \stp (s)=t \}$. This completes the proof of the theorem.
\end{proof}


\subsubsection{General Games} \label{seci7f}

Assumption \ref{asssg20} is in fact too strong, since it is easy to produce examples
of games for which it fails to hold. Therefore, this assumption will be now relaxed, and we will
show that a counterpart of Corollary \ref{propsg22} can still be established (see Corollary \ref{lempp9}).
To this end, we will introduce an alternative definition of the $\Hist^\si $-predictable Snell envelope
(see Definition \ref{defirm44x}), so that we need first to analyze some basic relationships between the Snell envelopes $U^\si $ and $\wt U^\si $.  From Section \ref{sec3.3x}, we already know that the processes $U^\si(\tau)$ and $\wt U^\si(\tau)$  are  $(\Qrob , \FF)$-supermartingales, and thus their expected values are decreasing functions of time.

\begin{lemma} \label{lemhh31}
(i) For any $\tau \in\Strat^{\Aset} $ and $t<u$, we have
\begin{gather}  \label{asas}
U^\si_t(\tau) \geq \wt U^\si_t(\tau) \geq \EMM \big(U^\si_u(\tau)\,\big|\,\Filt_t\big)
\geq \EMM \big(\wt U^\si_u(\tau)\,\big|\,\Filt_t\big).
\end{gather}
(ii) The following set $\Isi (\tau)$ is countable
\[
\Isi (\tau) := \big\{ t\in[0,T] : \EMM(U^\si_t(\tau)) > \EMM(\wt U^\si_t(\tau)) \big\}.
\]
(iii) If $t\notin \Isi (\tau)$, then $U^\si_t(\tau) = \wt U^\si_t(\tau)$.
\end{lemma}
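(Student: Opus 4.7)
For part (i), the plan is to unwind the definitions and exploit inclusions between the strategy classes $\Strat^\Aset_t(\tau,\si)$ and $\wt\Strat^\Aset_t(\tau,\si)$. Since the equality $h_{[0,t]}(\tau',\si)=h_{[0,t]}(\tau,\si)$ trivially implies the weaker $h_{[0,t)}(\tau',\si)=h_{[0,t)}(\tau,\si)$, we have $\wt\Strat^\Aset_t(\tau,\si)\subseteq\Strat^\Aset_t(\tau,\si)$, and the first inequality $U^\si_t(\tau)\geq\wt U^\si_t(\tau)$ is immediate from comparing the essential suprema in Definitions \ref{defrm44} and \ref{defrm44x}. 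For the middle inequality, I would observe that for $t<u$ any $\tau'\in\Strat^\Aset_u(\tau,\si)$ agrees with $\tau$ on $[0,u)$, which contains $[0,t]$ in both discrete and continuous time, so $\Strat^\Aset_u(\tau,\si)\subseteq\wt\Strat^\Aset_t(\tau,\si)$. Combining this inclusion with Lemma \ref{lemm50x}(i) then yields
\[
\EMM\big(U^\si_u(\tau)\,\big|\,\Filt_t\big)=\esssup_{\tau'\in\Strat^\Aset_u(\tau,\si)}\EMM\big(\widehat V^\Aset_{\stp(\tau',\si)}(\tau',\si)\,\big|\,\Filt_t\big)\leq\wt U^\si_t(\tau).
\]
The rightmost inequality in the chain is simply the conditional expectation (under $\Qrob$) of the first inequality applied at time $u$.

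For part (ii), I would introduce the functions $f(t):=\EMM(U^\si_t(\tau))$ and $g(t):=\EMM(\wt U^\si_t(\tau))$. Taking $\Qrob$-expectations throughout the chain in part (i) gives, for every $t<u$, the relations $f(t)\geq g(t)\geq f(u)\geq g(u)$. Hence both $f$ and $g$ are non-increasing on $[0,T]$, and letting $u\downarrow t$ the right-hand limit $f(t+):=\lim_{u\downarrow t}f(u)$ satisfies $g(t)\geq f(t+)$. Consequently, if $t\in\Isi(\tau)$, so that $f(t)>g(t)$, then necessarily $f(t)>f(t+)$, meaning $t$ is a right-discontinuity of the monotone function $f$. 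Since a monotone real-valued function on $[0,T]$ possesses at most countably many discontinuities, $\Isi(\tau)$ is countable.

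For part (iii), suppose $t\notin\Isi(\tau)$, so that $\EMM(U^\si_t(\tau))=\EMM(\wt U^\si_t(\tau))$. Part (i) gives $U^\si_t(\tau)\geq\wt U^\si_t(\tau)$ $\Qrob$-a.s., and both random variables are $\Qrob$-integrable by Lemma \ref{lemm50x}(ii) and Proposition \ref{prprm50}(ii). The non-negative difference $U^\si_t(\tau)-\wt U^\si_t(\tau)$ therefore has zero expectation and hence vanishes $\Qrob$-a.s. The only point requiring a touch of care is the verification of the set inclusions in part (i)---in particular, ensuring that the strict inequality $t<u$ really gives $[0,t]\subseteq[0,u)$ in both time frames---but once this is pinned down, everything else reduces to routine manipulation of essential suprema and the standard fact that monotone functions have countably many discontinuities, with no genuine technical obstacle.
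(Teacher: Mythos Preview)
Your proof is correct and follows essentially the same approach as the paper's own argument: both rely on the chain of inclusions $\wt\Strat^\Aset_u(\tau,\si)\subseteq\Strat^\Aset_u(\tau,\si)\subseteq\wt\Strat^\Aset_t(\tau,\si)\subseteq\Strat^\Aset_t(\tau,\si)$ for $t<u$ combined with Lemma~\ref{lemm50x}(i) for part (i), the observation that $\Isi(\tau)$ is contained in the set of discontinuities of a monotone function for part (ii), and the elementary fact that a nonnegative integrable random variable with zero expectation vanishes a.s.\ for part (iii). Your treatment is slightly more explicit in spelling out the argument for part (ii) via the right-hand limit $f(t+)$, but the substance is identical.
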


\begin{proof}
(i) The asserted inequalities follow from part (i) in Lemma \ref{lemm50x} and the following obvious inclusions $\wt \Strat^\Aset_u(\tau,\si)\subseteq \Strat^\Aset_u(\tau,\si)\subseteq \wt \Strat^\Aset_t(\tau,\si)\subseteq \Strat^\Aset_t(\tau,\si)$.

\noindent (ii) We start by noting that the inequality $\EMM(U^\si_t(\tau)) \geq \EMM(\wt U^\si_t(\tau))$ holds for all $t$.
For any $(\tau,\si)$, we define the set  $\Isi (\tau)$ of dates for which the inequality is strict.
In view of \eqref{asas}, the decreasing functions $\EMM(U^\si_t(\tau))$ and $\EMM(\wt U^\si_t(\tau))$ have the same
right-hand and left-hand limits for every $t$. Therefore, they are equal at any $t$ for which at least one of them
(and thus both) are continuous. To conclude the proof of part (ii), we observe that the set of points of discontinuity of a decreasing function is countable.

\noindent (iii) We observe that if $t\notin \Isi (\tau)$ then $\EMM(U^\si_t(\tau)) = \EMM(\wt U^\si_t(\tau))$.
Hence the assertion is a direct consequence of the inequality $U^\si_t(\tau) \geq \wt U^\si_t(\tau)$ (see part (i)).
\end{proof}


We define the mappings $B^\si$ and $\wt B^\si$, which capture the cumulative differences between $U^\si$ and $\wt U^\si$, by setting, for all $t \in [0,T]$ and $\tau \in \Strat^{\Aset }$,
\begin{gather*}
B^\si_t(\tau):=\sum_{u \in \Isi_t (\tau )} U^\si_u(\tau)-\wt U^\si_u(\tau),\\
\wt B^\si_t(\tau):=\sum_{u \in \wtIsi_t (\tau )} U^\si_u(\tau)-\wt U^\si_u(\tau),
\end{gather*}
where we denote $\Isi_t (\tau )= [0,t) \cap \Isi (\tau)$ and $\wtIsi_t (\tau ) = [0,t] \cap \Isi (\tau)$
and we apply the usual convention that the empty sum equals zero.
In view of part (iii) in Lemma \ref{lemhh31}, we have $B^\si_t(\tau) = \wt B^\si_t(\tau)$ for every $t\notin \Isi (\tau)$.

\begin{lemma} \label{lembxb}
(i) The processes $B^\si(\tau)$ and $\wt B^\si(\tau)$ are non-negative and increasing.  \\
(ii) The mapping $B^\si$ ($\wt B^\si$, resp.) is $\Hist^\si$-predictable ($\Hist^\si$-adapted, resp.).\\
(iii) For any $t\leq u<v\leq T$, we have the inequality
\[
\EMM \Big(\wt B^\si_v(\tau)-\wt B^\si_u(\tau)\,\big|\,\Filt_t\Big) \leq \EMM \Big(\wt U^\si_u(\tau)-\wt U^\si_v(\tau)\,\big|\,\Filt_t\Big).
\]
\end{lemma}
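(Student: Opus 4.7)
For part (i), non-negativity is immediate: each summand $U^\si_u(\tau) - \wt U^\si_u(\tau)$ is non-negative by Lemma \ref{lemhh31}(i). Monotonicity in $t$ follows because the index sets $\Isi_t(\tau) = [0,t) \cap \Isi(\tau)$ and $\wtIsi_t(\tau) = [0,t] \cap \Isi(\tau)$ are non-decreasing in $t$, so both $B^\si_\cdot(\tau)$ and $\wt B^\si_\cdot(\tau)$ are obtained by cumulative addition of non-negative terms.

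For part (ii), the plan is to exploit that the summand $D_u(\tau) := U^\si_u(\tau) - \wt U^\si_u(\tau)$ vanishes $\Qrob$-a.s.\ off $\Isi(\tau)$ by Lemma \ref{lemhh31}(iii), which allows me to express both $B^\si_t(\tau)$ and $B^\si_t(\tau')$ as sums over the common countable index set $\Psi := \Isi(\tau) \cup \Isi(\tau')$, modulo a $\Qrob$-null set. Fix $\tau, \tau' \in \Strat^\Aset$ and let $\rho = \rho(h(\tau,\si), h(\tau',\si))$. For each $u \in \Psi$, Lemma \ref{lmcc7a} gives $U^\si_u(\tau) = U^\si_u(\tau')$ on $\{u \leq \rho\}$ and Lemma \ref{lmcc7b}(i) gives $\wt U^\si_u(\tau) = \wt U^\si_u(\tau')$ on $\{u < \rho\}$; together, $D_u(\tau) = D_u(\tau')$ on $\{u < \rho\}$. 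For $B^\si_t$ the index is restricted to $u < t$, so on $\{t \leq \rho\}$ every summand satisfies $u < \rho$, yielding $B^\si_t(\tau) = B^\si_t(\tau')$ on $\cll 0, \rho \crr$, that is, $\Hist^\si$-predictability. For $\wt B^\si_t$ the index runs over $u \leq t$; on $\{t < \rho\}$ one still has $u \leq t < \rho$, giving equality on $\cll 0, \rho \orr$ and hence $\Hist^\si$-adaptedness. Predictability of $\wt B^\si$ generally fails, since at the boundary $u = t = \rho$ the adaptedness (not predictability) of $\wt U^\si$ no longer forces $D_u(\tau) = D_u(\tau')$. The careful null-set bookkeeping needed to re-index the countable sum is the main technical obstacle here, but it is routine given Lemma \ref{lemhh31}.

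For part (iii), I use a telescoping argument driven by the supermartingale inequality $\wt U^\si_s(\tau) \geq \EMM(U^\si_r(\tau)\,|\,\Filt_s)$ for $s < r$ from Lemma \ref{lemhh31}(i). Enumerate the countable set $(u,v] \cap \Isi(\tau) = \{r_1 < r_2 < \cdots\}$ and put $r_0 := u$. Taking $\Filt_t$-conditional expectations and applying the tower property yields, for each $i \geq 1$,
\begin{equation*}
\EMM\bigl(U^\si_{r_i}(\tau) - \wt U^\si_{r_i}(\tau)\,\big|\,\Filt_t\bigr) \leq \EMM\bigl(\wt U^\si_{r_{i-1}}(\tau) - \wt U^\si_{r_i}(\tau)\,\big|\,\Filt_t\bigr).
\end{equation*}
Summing $i = 1, \ldots, n$, the right-hand side telescopes to $\EMM(\wt U^\si_u(\tau) - \wt U^\si_{r_n}(\tau)\,|\,\Filt_t)$, which, since $r_n \leq v$ and $\wt U^\si(\tau)$ is a supermartingale by Proposition \ref{prprm50}(iii), is bounded above by $\EMM(\wt U^\si_u(\tau) - \wt U^\si_v(\tau)\,|\,\Filt_t)$. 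Monotone convergence applied to the non-negative partial sums on the left (integrability being supplied by Proposition \ref{prprm50}(ii)) then lets me pass $n \to \infty$, recovering $\EMM(\wt B^\si_v(\tau) - \wt B^\si_u(\tau)\,|\,\Filt_t)$ on the left and completing the proof.
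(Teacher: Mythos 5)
Parts (i) and (ii) of your proposal follow the paper's proof essentially verbatim: non-negativity and monotonicity come straight from Lemma \ref{lemhh31}, and the re-indexing of both sums over the common countable set $\Isi(\tau)\cup\Isi(\tau')$ (valid up to a null set because the summand vanishes a.s.\ off $\Isi(\tau)$ by Lemma \ref{lemhh31}(iii)) reduces the claim to the predictability of $U^\si$ versus the mere adaptedness of $\wt U^\si$, giving equality on $\cll 0,\rho\crr$ for $B^\si$ and on $\cll 0,\rho\orr$ for $\wt B^\si$. Your side remark on why predictability of $\wt B^\si$ fails at the boundary $u=t=\rho$ is a correct reading of that distinction.

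For part (iii) you rely on the same key inequality as the paper, namely the interleaving $U^\si_w(\tau)\geq\wt U^\si_w(\tau)\geq\EMM\big(U^\si_{w'}(\tau)\,\big|\,\Filt_w\big)$ for $w<w'$ from Lemma \ref{lemhh31}(i), but you organize it as a telescoping sum, whereas the paper phrases it geometrically: the intervals $\big(\EMM\big(\wt U^\si_w(\tau)\,\big|\,\Filt_t\big),\,\EMM\big(U^\si_w(\tau)\,\big|\,\Filt_t\big)\big)$ for $w\in(u,v]\cap\Isi(\tau)$ are pairwise disjoint and all lie inside $\big[\EMM\big(\wt U^\si_v(\tau)\,\big|\,\Filt_t\big),\,\EMM\big(\wt U^\si_u(\tau)\,\big|\,\Filt_t\big)\big]$, so their lengths sum to at most the length of the ambient interval. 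The two arguments are equivalent in substance, but yours has a flaw as written: you cannot in general enumerate the countable set $(u,v]\cap\Isi(\tau)$ as an increasing sequence $r_1<r_2<\cdots$. The set $\Isi(\tau)$ is the discontinuity set of a decreasing function, which may well be dense in $(u,v]$ (think of the rationals), and a countable dense set admits no increasing enumeration. The repair is standard: run the telescoping over an arbitrary \emph{finite} subset $r_1<\cdots<r_n$ of $(u,v]\cap\Isi(\tau)$ with $r_0:=u$, obtain $\sum_{i=1}^n\EMM\big(U^\si_{r_i}(\tau)-\wt U^\si_{r_i}(\tau)\,\big|\,\Filt_t\big)\leq\EMM\big(\wt U^\si_u(\tau)-\wt U^\si_v(\tau)\,\big|\,\Filt_t\big)$, and then take the supremum over all finite subsets, which for non-negative terms recovers the full countable sum. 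The paper's disjoint-interval formulation sidesteps the ordering issue entirely, which is the one advantage it buys over your route.
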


\begin{proof}
(i) By part (ii) in Lemma \ref{lemhh31}, the set $\Isi (\tau)$ is countable and thus the inequality $U^\si_t(\tau)\geq \wt U^\si_t(\tau)$ holds for all $t\in \Isi (\tau)$, with probability 1. Therefore, $B^\si(\tau)$ and $\wt B^\si(\tau)$ are non-negative increasing processes.

\noindent (ii) To show that $B^\si $ is $\Hist^\si$-predictable, it suffices to check that $B^\si(\tau) = B^\si (\tau')$ on $[\![ 0, \rho ]\!]$ where $\rho = \rho (\tau ,\tau')$ is given by \eqref{tautau}. Since, by part (iii) in Lemma \ref{lemhh31},  equality $U^\si_t(\tau) = \wt U^\si_t(\tau)$ holds for every $t\notin \Isi (\tau)$, we obtain
\begin{gather*}
B^\si_t(\tau) 
 =\sum_{u \in \Isi_t (\tau)\cup \Isi_t (\tau')} U^\si_u(\tau)-\wt U^\si_u(\tau),\\
B^\si_t(\tau') 
=\sum_{u \in \Isi_t (\tau)\cup \Isi_t (\tau')} U^\si_u(\tau')-\wt U^\si_u(\tau').
\end{gather*}
Both sums are now over same countable index set, and thus the $\Hist^\si$-predictability of $B^\si$ follows from
the definition of $\Isi_t (\tau )$ and $\Isi_t (\tau' )$, since the $\Hist^\si$-adaptedness of $U^\si$ and $\wt U^\si$
implies that $U^\si(\tau) = U^\si(\tau')$ and $\wt U^\si(\tau) = \wt U^\si(\tau')$ on $[\![ 0, \rho )\!)$.
The $\Hist^\si$-adaptedness of $\wt B^\si $ can be shown using similar arguments.

\noindent (iii) From parts (i) and (ii) in Lemma \ref{lemhh31}, the following set of countable stochastic open intervals
\[
\Big\{ \Big(\EMM \big(\wt U^\si_w(\tau)\,\big|\,\Filt_t\big),\ \EMM \big(U^\si_w(\tau)\,\big|\,\Filt_t\big) \Big)
: w \in(u,v]\cap \Isi(\tau) \Big\}
\]
consists of pairwise disjoint stochastic intervals lying within
\[
 \Big[\EMM \big(\wt U^\si_v(\tau)\,\big|\,\Filt_t\big),\ \EMM \big( \wt U^\si_u(\tau)\,\big|\,\Filt_t\big) \Big],
\]
almost surely. Consequently, the non-negative random variable $\wt B^\si_v(\tau)-\wt B^\si_u(\tau)$ satisfies
\begin{align*}
\EMM \Big(\wt B^\si_v(\tau)-\wt B^\si_u(\tau)\,\big|\,\Filt_t\Big) &= \sum_{w \in(u,v]\cap \Isi (\tau)} \EMM \Big(U^\si_w (\tau)\,\big|\,\Filt_t\Big)-\EMM \Big(\wt U^\si_w (\tau)\,\big|\,\Filt_t\Big)\\
&\leq \EMM \Big( \wt U^\si_u(\tau) - \wt U^\si_v(\tau)\,\big|\,\Filt_t\Big)
\end{align*}
where the equality follows from Fubini's theorem
The inequality $\EMM \big(\wt B^\si_T(\tau)\big) < \infty$ now follows, and thus the process $\wt B^\si (\tau)$
is $\Qrob$-integrable.
\end{proof}

The following lemma establishes right-continuity conditions, which can be seen as an alternative to Assumption~\ref{asssg20}.

\begin{lemma} \label{propsg20}
For every $\Aset\subseteq\Mset$ and $s=(\tau,\si)\in\Strat^{-\Aset}\times\Strat^\Aset$,
the following equalities hold for every $t\in[0,T]$,
\begin{align}
\lim_{u\downarrow t} \sup_{\tau' \in \wt\Strat^\Aset_u(\tau,\si)} \EMM\Big(\widehat V^\Aset_{\stp(\tau',\si)}  (\tau',\si) \Big) &= \sup_{\tau' \in \wt\Strat^\Aset_t(\tau,\si)} \EMM\Big(\widehat V^\Aset_{\stp(\tau',\si)} (\tau',\si) \Big), \label{rrtt} \\
\lim_{u\downarrow t} \inf_{\si' \in \wt\Strat^{-\Aset}_u(\tau,\si)} \EMM\Big(\widehat V^\Aset_{\stp(\tau,\si')}  (\tau,\si') \Big) &= \inf_{\si' \in \wt\Strat^{-\Aset}_t(\tau,\si)} \EMM\Big(\widehat V^\Aset_{\stp(\tau,\si')} (\tau,\si') \Big).
\end{align}
\end{lemma}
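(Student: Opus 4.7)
\emph{Plan.} I focus on the first equality in Lemma \ref{propsg20}; the second follows by the analogous argument with the roles of $\tau$ and $\si$ interchanged. Set $f(u) := \sup_{\tau' \in \wt\Strat^\Aset_u(\tau,\si)} \EMM\big(\widehat V^\Aset_{\stp(\tau',\si)}(\tau',\si)\big)$. The easy direction is by monotonicity: for $u < u'$, the constraint $h_{[0,u']}(\tau',\si) = h_{[0,u']}(\tau,\si)$ is strictly stronger than its analogue at $u$, hence $\wt\Strat^\Aset_{u'}(\tau,\si) \subseteq \wt\Strat^\Aset_u(\tau,\si)$, and $f$ is non-increasing. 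In particular, $\lim_{u \downarrow t} f(u)$ exists and is bounded above by $f(t)$.

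For the reverse inequality, I fix $\tau' \in \wt\Strat^\Aset_t(\tau,\si)$ and a sequence $u_n \downarrow t$, and build an approximating sequence $\tau''_n \in \wt\Strat^\Aset_{u_n}(\tau,\si)$ via the concatenation
\[
\tau''_n(h,r) := \tau(h,r)\I_{\{r \leq u_n\}} + \tau'(h,r)\I_{\{r > u_n\}}.
\]
This $\tau''_n$ inherits $\Hist$-predictability from $\tau$ and $\tau'$, and a straightforward induction over $r \in [0,u_n]$ using the $\Hist$-predictability of strategies shows that the outcome generated by $(\tau''_n,\si)$ coincides with $h(\tau,\si)$ on $[0,u_n]$, so indeed $\tau''_n \in \wt\Strat^\Aset_{u_n}(\tau,\si)$. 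Thus $f(u_n) \geq \EMM(\widehat V^\Aset_{\stp(\tau''_n,\si)}(\tau''_n,\si))$.

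The main task is to prove $\EMM(\widehat V^\Aset_{\stp(\tau''_n,\si)}(\tau''_n,\si)) \to \EMM(\widehat V^\Aset_{\stp(\tau',\si)}(\tau',\si))$. Because $\tau' \in \wt\Strat^\Aset_t(\tau,\si)$, we have $h_t(\tau,\si) = h_t(\tau',\si)$. Assumption \ref{assdv01} guarantees that both outcome paths are right-continuous, and since each $\Act^i$ is countable, every right-continuous $\Act^i$-valued function is locally constant from the right. Consequently the $\FF$-stopping time $\rho := \inf\{r \geq t : h_r(\tau,\si) \neq h_r(\tau',\si)\}$ satisfies $\delta := \rho - t > 0$ almost surely. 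On the event $\{u_n < t + \delta\}$ the histories agree on $[0, u_n]$, and as $\tau''_n$ then follows $\tau'$ from a state common to both, one obtains $h(\tau''_n,\si) = h(\tau',\si)$ throughout $[0,T]$, so $\widehat V^\Aset_{\stp(\tau''_n,\si)}(\tau''_n,\si) = \widehat V^\Aset_{\stp(\tau',\si)}(\tau',\si)$ there. Since $\Prob(\{u_n \geq t + \delta\}) \to 0$ and Assumption \ref{assrm09} provides the $\Qrob$-integrable majorant $\esssup_{s\in\Strat}|\widehat V^\Aset_{\stp(s)}(s)|$, dominated convergence yields the desired convergence. Passing to the limit in $f(u_n) \geq \EMM(\widehat V^\Aset_{\stp(\tau''_n,\si)}(\tau''_n,\si))$ and then taking the supremum over $\tau' \in \wt\Strat^\Aset_t(\tau,\si)$ gives $\lim_{u\downarrow t} f(u) \geq f(t)$.

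\emph{Main obstacle.} The delicate point is that a given $\tau' \in \wt\Strat^\Aset_t(\tau,\si)$ almost never lies in $\wt\Strat^\Aset_{u_n}(\tau,\si)$ for any deterministic $u_n > t$, since the paths of $h(\tau,\si)$ and $h(\tau',\si)$ may diverge arbitrarily soon after $t$ on different events $\omega$. The freeze-then-follow construction of $\tau''_n$ sidesteps this, and its success rests on two inputs working in tandem: right-continuity of outcomes together with countable action sets gives the positive random waiting time $\delta$, while Assumption \ref{assrm09} supplies the uniform integrability needed to push the pointwise agreement on $\{u_n < t+\delta\}$ through to convergence of expectations.
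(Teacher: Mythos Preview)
Your proof is correct and takes a genuinely different, more elementary route than the paper. The paper works at the level of the $\Hist^\si$-adapted Snell envelope $\wt U^\si(\tau)$: for each fixed $\wt\tau\in\wt\Strat^\Aset_t(\tau,\si)$ it observes that on the event $\{w<\rho\}$ one has $\big(\wt U^\si_w(\tau)-\wt V^\Aset_w(\wt\tau,\si)\big)\I_{\{w<\rho\}}\geq 0$, where $\wt V^\Aset_w(\wt\tau,\si)=\EMM\big(\widehat V^\Aset_{\stp(\wt\tau,\si)}(\wt\tau,\si)\,\big|\,\Filt_w\big)$ is a martingale; it then invokes Doob's regularisation of the supermartingale $\wt U^\si(\tau)$ along a countable dense set chosen to contain the discontinuity points of $t\mapsto\EMM(U^\si_t(\tau))$, and passes to the right limit via dominated convergence. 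By contrast, you bypass the Snell-envelope machinery and Doob regularisation entirely by exhibiting explicit concatenated strategies $\tau''_n\in\wt\Strat^\Aset_{u_n}(\tau,\si)$ and proving directly that $\EMM\big(\widehat V^\Aset_{\stp(\tau''_n,\si)}(\tau''_n,\si)\big)\to\EMM\big(\widehat V^\Aset_{\stp(\tau',\si)}(\tau',\si)\big)$. Both arguments rest on the same two pillars---the almost sure positivity of $\rho-t$ (right-continuity of outcomes plus discreteness of action sets, Assumption~\ref{assdv01}) and the $\Qrob$-integrable majorant of Assumption~\ref{assrm09}---but your freeze-then-follow construction is more constructive and self-contained, while the paper's approach integrates naturally with the supermartingale infrastructure it has already developed in Proposition~\ref{prprm50}.
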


\begin{proof}
By symmetry, it suffices to only prove the first equality. From part (i) in Proposition \ref{prprm50}, we have the equality
\[
\EMM\big(\wt U^\si_t(\tau)\big)=\sup_{\tau' \in \wt\Strat^\Aset_t(\tau,\si)} \EMM\Big(\widehat V^\Aset_{\stp(\tau',\si)} (\tau',\si) \Big),
\]
so that \eqref{rrtt} is equivalent to the right-continuity of the function $\EMM\big(\wt U^\si_t(\tau)\big),\, t \in [0,T]$.
Since for $t<w$, we have $\wt\Strat^\Aset_w(\tau,\si) \subseteq \wt\Strat^\Aset_t(\tau,\si)$, and thus the function $\EMM\big(\wt U^\si_t(\tau)\big)$ is decreasing. Hence the right-hand limit $\lim_{w\downarrow t} \EMM\big(\wt U^\si_w(\tau)\big)$ is well-defined and bounded above by $\EMM\big(\wt U^\si_t(\tau)\big)$.

Clearly, to establish equality \eqref{rrtt}, that is, the right-continuity of the function $\EMM\big(\wt U^\si(\tau)\big)$, it is enough to show that $\EMM\big(\wt U^\si_t(\tau)\big)\leq \lim_{w\downarrow t}\EMM\big(\wt U^\si_w(\tau)\big)$ for all $t \in [0,T]$. For this purpose, we fix $\wt \tau\in \wt\Strat^\Aset_t(\tau,\si)$ and we consider the stopping time $\rho $ given by
\[
\rho:= \rho(\tau,\wt \tau)=\inf \big\{ u\in[0,T]: h_u(\tau,\si) \neq h_u(\si,\wt \tau) \big\} \wedge T.
\]
Since $\wt \tau\in \wt\Strat^\Aset_t(\tau,\si)$, we must have $h_{[0,t]}(\tau,\si)=h_{[0,t]}(\wt \tau,\si)$ and thus $\rho >t$. Consider the process $\big(\wt U^\si_w(\tau)-\wt V^\Aset_w(\wt \tau,\si)\big) \I_{\{w<\rho\}}$, for $w \in (t,T]$,
where
\[
\wt V^\Aset_w(\wt \tau,\si) := \EMM\Big(\widehat V^\Aset_{\stp(\wt \tau,\si)} (\wt\tau,\si) \,\Big|\, \Filt_w \Big).
\]
Assumption \ref{assdv01} and the finiteness of the sets of actions yield $\{w < \rho\} = \{h_{[0,w]}(\tau,\si) = h_{[0,w]}(\si,\wt \tau)\}$. Consequently, 
we obtain
\[
\wt V^\Aset_w (\wt \tau,\si)\I_{\{w<\rho\}}
\leq \esssup_{\tau' \in \wt\Strat^\Aset_w(\tau,\si)}\wt V^\Aset_w  (\tau',\si) \I_{\{w<\rho\}}=\wt U^\si_w(\tau) \I_{\{w<\rho\}},
\]
which in turn implies that
\begin{gather}\label{eqsg2002}
\big(\wt U^\si_w(\tau)-\wt V^\Aset_w(\wt\tau,\si)\big) \I_{\{w<\rho\}} \geq 0.
\end{gather}
Observe that $\wt V^\Aset (\wt\tau,\si)$ is a $(\Qrob,\FF)$-martingale and thus, by Theorem \ref{thmsg07},
it admits an RCLL modification. We also know that $\wt U^\si (\tau)$ is a $(\Qrob,\FF)$-supermartingale (see part (iii) of Proposition \ref{prprm50}), but its right-continuity was not yet established. Doob's regularization theorem (see, e.g., Theorem 6.27 in Kallenberg \cite{Kallenberg})
states that for any submartingale $X$ on $\mathbb{R}_+$ with restriction $Y$ on the set on positive rational numbers $\mathcal{Q}_+$, the process $Y^+_{t} :=\lim_{w\downarrow t, w\in\mathcal{Q}} Y_w$ is an RCLL process outside some fixed $\Qrob$-null set. By slightly modifying the proof of this theorem we may and do assume, without loss
of generality, that the countable set
\[
I(\tau ,\si ) := \big\{ t\in[0,T] : \EMM(U^\si_t(\tau)) > \lim_{w\downarrow t} \EMM(\wt U^\si_w(\tau)) \big\},
\]
is included in the countable, dense subset $\wt{\mathcal{Q}}_+$ of $\mathbb{R}_+$. So we may define, for all $t \in \wt{\mathcal{Q}}_+$,
\[
\wt U^\si_{t+}(\tau) := \lim_{w\downarrow t, w\in \wt{\mathcal{Q}}_+} \wt U^\si_w(\tau).
\]
One the one hand, the dominated convergence theorem (recall that we work under Assumption \ref{assrm09}) yields, for all $t \in \wt{\mathcal{Q}}_+$,
\begin{gather}\label{eqsg2003}
\lim_{w\downarrow t} \EMM\big(\wt U^\si_w(\tau)\big) =\lim_{w\downarrow t, w\in \wt{\mathcal{Q}}_+} \EMM\big(\wt U^\si_w(\tau)\big)=\EMM\big(\wt U^\si_{t+}(\tau)\big).
\end{gather}
On the other hand, by applying the dominated convergence theorem to \eqref{eqsg2002}, we obtain
\begin{align} \label{eqsg2004}
0\leq \lim_{w\downarrow t, w\in \wt{\mathcal{Q}}_+} \, \EMM\Big(\big(\wt U^\si_w(\tau)-\wt V^\Aset_w(\wt \tau,\si)\big) \I_{\{w<\rho\}} \Big) = \EMM\big(\wt U^\si_{t+}(\tau)-\wt V^\Aset_t(\wt \tau,\si)\big)
\end{align}
since the process $\I_{\{w<\rho\}},\, w \in [t,T]$, is bounded, with right-continuous sample paths, and $\I_{\{t<\rho\}}=1$.
By combining \eqref{eqsg2003} with \eqref{eqsg2004}, we conclude that
\begin{align}\label{eqsg2005}
\lim_{w\downarrow t} \EMM\big(\wt U^\si_w(\tau)\big)=\EMM\big(\wt U^\si_{t+}(\tau)\big)
\geq \EMM\big(\wt V^\Aset_t(\wt \tau,\si)\big)=\EMM\big(\widehat V^\Aset_{\stp(\wt \tau,\si)} (\wt \tau,\si) \big).
\end{align}
Since \eqref{eqsg2005} holds for all $\wt \tau\in \wt\Strat^\Aset_t(\tau,\si)$, we must have, for all $t \in \wt{\mathcal{Q}}_+$,
\[
\lim_{w\downarrow t} \EMM\big(\wt U^\si_w(\tau)\big)\geq \sup_{\tau' \in \wt\Strat^\Aset_t(\tau,\si)} \EMM\Big(\widehat V^\Aset_{\stp(\tau',\si)} (\tau',\si) \Big)=\EMM\big(\wt U^\si_t(\tau)\big).
\]
In view of the definition of the set $I(\tau ,\si )\subset \wt{\mathcal{Q}}_+$, this completes the proof of the lemma.
\end{proof}

%
%
%

\begin{corollary} \label{lempp9}
For any $(\tau,\si)\in\Strat^{-\Aset}\times\Strat^\Aset$, we have that \\
(i) the process $\wt U^\si(\tau)$ has an RCLL modification,
which is a $(\Qrob , \FF)$-supermartingale of class (D). \\
(ii) the process $\wt B^\si(\tau)$ has an RCLL modification,
which is a $(\Qrob , \FF)$-submartingale of class (D).
\end{corollary}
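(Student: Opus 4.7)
The plan is to deduce part (i) directly from the machinery assembled earlier and then handle part (ii) by reducing it to part (i) via Lemma \ref{lembxb}.

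For part (i), the key observation is that Proposition \ref{prprm50} already gives that $\wt U^\si(\tau)$ is a $(\Qrob,\FF)$-supermartingale which is uniformly integrable, and bounded above in absolute value by the uniformly integrable $(\Qrob,\FF)$-martingale $M$ of Lemma \ref{lemm50x}. By Lemma \ref{propsg20}, the function $t \mapsto \EMM(\wt U^\si_t(\tau))$ is right-continuous on $[0,T]$. Applying the Doob regularization result (Theorem \ref{thmsg07}) to the supermartingale $\wt U^\si(\tau)$ yields the existence of an RCLL modification. To upgrade uniform integrability to the class (D) property on this modification, it suffices to note that for any $\rho \in \STOP_{[0,T]}$ we have $|\wt U^\si_\rho(\tau)| \leq M_\rho$ by the same conditional Fatou argument used in Lemma \ref{lemm50x}(ii) applied pathwise to the RCLL version; the family $\{M_\rho : \rho \in \STOP_{[0,T]}\}$ is uniformly integrable since $M$ is a uniformly integrable RCLL martingale.

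For part (ii), Lemma \ref{lembxb}(i)-(ii) shows that $\wt B^\si(\tau)$ is non-negative, increasing, and $\Hist^\si$-adapted, so the submartingale property will follow as soon as integrability is established. Taking $t=u=0$ and $v=T$ in Lemma \ref{lembxb}(iii) yields
\[
\EMM\big(\wt B^\si_T(\tau)\big) \leq \EMM\big(\wt U^\si_0(\tau)-\wt U^\si_T(\tau)\big) < \infty,
\]
using the uniform integrability of $\wt U^\si(\tau)$ from Proposition \ref{prprm50}(ii). Since $0 \leq \wt B^\si_t(\tau) \leq \wt B^\si_T(\tau) \in L^1(\Qrob)$, the increasing adapted process $\wt B^\si(\tau)$ is an integrable $(\Qrob,\FF)$-submartingale.

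The remaining point is the existence of an RCLL modification. Since $\wt B^\si(\tau)$ is a submartingale, Doob's regularization theorem applies in its submartingale form and reduces the question to the right-continuity of $t \mapsto \EMM(\wt B^\si_t(\tau))$. For $v \downarrow u$, Lemma \ref{lembxb}(iii) with $t=u$ gives
\[
0 \leq \EMM\big(\wt B^\si_v(\tau)-\wt B^\si_u(\tau)\big) \leq \EMM\big(\wt U^\si_u(\tau)-\wt U^\si_v(\tau)\big),
\]
and the right-hand side tends to $0$ as $v \downarrow u$ by the right-continuity of $t \mapsto \EMM(\wt U^\si_t(\tau))$ established in Lemma \ref{propsg20}. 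Hence the RCLL modification exists; class (D) is immediate since $0 \leq \wt B^\si_\rho(\tau) \leq \wt B^\si_T(\tau)$ for every $\rho \in \STOP_{[0,T]}$ and $\wt B^\si_T(\tau) \in L^1(\Qrob)$. The main subtlety I expect is purely bookkeeping, namely making sure that the RCLL modification chosen for $\wt U^\si(\tau)$ is consistent with the one implicit in the definition of $\wt B^\si(\tau)$, since $\wt B^\si(\tau)$ is built from pointwise values of $\wt U^\si(\tau)$ on the countable set $\wtIsi_t(\tau)$; but this poses no real difficulty because the set of points $t$ where the modifications differ is $\Qrob$-null, and the countable union over $\Isi(\tau)$ of such null sets remains null.
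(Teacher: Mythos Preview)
Your proof is correct and follows essentially the same approach as the paper: part (i) via Proposition \ref{prprm50}, Lemma \ref{propsg20}, and Theorem \ref{thmsg07}; part (ii) via the increment bound of Lemma \ref{lembxb}(iii) combined with the right-continuity of $t\mapsto\EMM(\wt U^\si_t(\tau))$ from part (i). If anything, your version is slightly cleaner, since you invoke Lemma \ref{lembxb}(iii) directly rather than re-deriving its content, and you spell out the submartingale and class (D) verifications that the paper leaves implicit.
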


\begin{proof} (i) In view of part (iii) in Proposition \ref{prprm50}, the process $\wt U^\si(\tau)$ is a $(\Qrob , \FF)$-supermartingale. Moreover, by equality \eqref{rrtt} in Lemma \ref{propsg20}, its expected value
$\EMM \big(\wt U^\si_t(\tau)\big)$
is a right-continuous function. The existence of an RCLL modification of $\wt U^\si(\tau)$ thus follows from Theorem \ref{thmsg07}.
To show that it is $(\Qrob , \FF)$-supermartingale of class (D), we use part (iii) in Proposition \ref{prprm50} and we argue as is the proof of part (ii) in Proposition \ref{proprm50}.

\noindent (ii)  To show that $\wt B^\si(\tau)$ has an RCLL modification, it suffices to check that
the function $\EMM \big(\wt B^\si_t(\tau)\big)$ is right-continuous or, more explicitly,
\[
\lim_{u\downarrow t} \sum_{v\in(t,u]\cap \Isi (\tau) } \EMM (U^\si_v(\tau)) -\EMM (\wt U^\si_v(\tau)) = 0.
\]
Since the functions $\EMM \big(U^\si_t(\tau)\big)\geq \EMM \big(\wt U^\si_t(\tau)\big)$ are decreasing and the
equality $\EMM \big(U^\si_t(\tau)\big)=\EMM \big(\wt U^\si_t(\tau)\big)$ holds for all $t \notin \Isi (\tau )$, we obtain
\[
0 \leq \sum_{v \in (t,u]\cap \Isi (\tau)} \EMM (U^\si_v(\tau)) -\EMM (\wt U^\si_v(\tau))
\leq  \EMM (\wt U^\si_t(\tau)) -\EMM (\wt U^\si_u(\tau)),
\]
where the difference tends to 0 as $u\downarrow t$ since, from part (i), the function $\EMM \big(\wt U^\si_t(\tau)\big)$ is right-continuous.
\end{proof}

\subsubsection{$\Hist^{\si }$-Predictable Snell Envelope of the Second Kind}

We are in a position to introduce another version of the concept of the $\Hist^\si$-predictable Snell envelope, which
will serve as a convenient tool when searching for the issuer's super-hedging strategy.

\begin{definition}\label{defirm44x}{\rm
For a combined tranche $\MGCC^\Aset(\Game)$ and a fixed $\si \in \Strat^{-\Aset }$, the \emph{$\Hist^\si$-predictable Snell envelope of the second kind} $\bbUU^\si \map{\Hist^\si \times[0,T]\times\Omega}{\R}$ is given by
\begin{gather}  \label{eqhh30}
\bbUU^\si_t(\tau) :=  \wt U^\si_t(\tau)+ \wt{B}^\si_t(\tau) = U^\si_t(\tau) + B^\si_{t}(\tau)
\end{gather}
where the second equality follows from part (iii) in Lemma \ref{lemhh31}.}
\end{definition}

Intuitively, the difference $ \wt{B}^\si_{t}(\tau) = \bbUU^\si_t(\tau) - \wt{U}^\si_t(\tau)$ represents the cumulative adjustments to the process $\wt{U}^\si(\tau)$ that are needed to ensure that the inequality $\bbUU^\si_t(\tau) \geq U^\si_t(\tau)$ holds for all $t$. It is fair to say, however, that $\bbUU^\si(\tau)$ does not enjoy as nice interpretation as $U^\si(\tau)$, so it should merely be seen as a technical tool to derive the desired results, Propositions \ref{priopog21} and \ref{propog21}.

\begin{proposition} \label{priopog21}
For any $\si\in\Strat^{-\Aset}$: \\
(i) the mapping $\bbUU^\si$ is an $\Hist^\si$-predictable, \\
(ii) for any $\tau \in \Strat^\Aset$, the process $\bbUU^\si(\tau)$ is an RCLL $(\Qrob , \FF)$-supermartingale of class (D).
\end{proposition}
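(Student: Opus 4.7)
The plan is to exploit the two equivalent representations of $\bbUU^\si$ provided in Definition \ref{defirm44x} and to establish each claimed property using whichever representation is more convenient. Specifically, predictability is easier via $\bbUU^\si_t(\tau) = U^\si_t(\tau) + B^\si_t(\tau)$, while the path regularity, supermartingale, and class (D) properties are easier via $\bbUU^\si_t(\tau) = \wt U^\si_t(\tau) + \wt B^\si_t(\tau)$.

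For part (i), I would combine Lemma \ref{lmcc7a} (which gives that $U^\si$ is $\Hist^\si$-predictable) with the $\Hist^\si$-predictability of $B^\si$ established in Lemma \ref{lembxb}(ii). Since the class of $\Hist^\si$-predictable mappings is closed under finite sums, the first representation in \eqref{eqhh30} immediately yields that $\bbUU^\si$ is $\Hist^\si$-predictable.

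For part (ii), I would split the argument into three steps. First, for the existence of an RCLL modification: by Corollary \ref{lempp9}(i) the process $\wt U^\si(\tau)$ admits an RCLL modification, and by Corollary \ref{lempp9}(ii) so does $\wt B^\si(\tau)$. Summing gives an RCLL modification of $\bbUU^\si(\tau)$. Second, for the supermartingale property: fix $t \leq u$ and write
\[
\EMM\bigl(\bbUU^\si_u(\tau) \,\big|\, \Filt_t\bigr) = \EMM\bigl(\wt U^\si_u(\tau) \,\big|\, \Filt_t\bigr) + \EMM\bigl(\wt B^\si_u(\tau) - \wt B^\si_t(\tau) \,\big|\, \Filt_t\bigr) + \wt B^\si_t(\tau).
\]
Applying Lemma \ref{lembxb}(iii) with $v = u$ to the middle term gives
\[
\EMM\bigl(\wt B^\si_u(\tau) - \wt B^\si_t(\tau) \,\big|\, \Filt_t\bigr) \leq \wt U^\si_t(\tau) - \EMM\bigl(\wt U^\si_u(\tau) \,\big|\, \Filt_t\bigr),
\]
and the terms involving $\EMM(\wt U^\si_u(\tau) \,|\, \Filt_t)$ cancel, yielding $\EMM(\bbUU^\si_u(\tau)\,|\,\Filt_t) \leq \wt U^\si_t(\tau) + \wt B^\si_t(\tau) = \bbUU^\si_t(\tau)$.

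Third, for class (D): it suffices to note that $\{\wt U^\si_\rho(\tau) : \rho \in \STOP_{[0,T]}\}$ is uniformly integrable by Corollary \ref{lempp9}(i), while $\wt B^\si(\tau)$ is non-negative and increasing (Lemma \ref{lembxb}(i)) with $\EMM\bigl(\wt B^\si_T(\tau)\bigr) < \infty$ (shown at the end of the proof of Lemma \ref{lembxb}(iii)), so $\{\wt B^\si_\rho(\tau) : \rho \in \STOP_{[0,T]}\}$ is dominated by the integrable random variable $\wt B^\si_T(\tau)$ and is thus uniformly integrable. The sum of two uniformly integrable families is uniformly integrable, completing the proof. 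None of these steps should pose a substantive obstacle; the only mild subtlety is keeping the two representations in \eqref{eqhh30} straight and being careful to use the predictable version for part (i) and the (\emph{a priori} only adapted) $\wt{}$-version for part (ii), which is where the regularity hypotheses have already been verified.
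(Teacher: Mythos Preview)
Your proposal is correct and follows essentially the same approach as the paper: part (i) via the representation $\bbUU^\si = U^\si + B^\si$ together with Lemmas \ref{lmcc7a} and \ref{lembxb}(ii), and part (ii) via the representation $\bbUU^\si = \wt U^\si + \wt B^\si$ together with Corollary \ref{lempp9} and Lemma \ref{lembxb}(iii). Your write-up is in fact more explicit than the paper's (which merely points to the relevant lemmas); the only cosmetic issue is that when you say ``with $v = u$'' you should clarify that you are substituting the lemma's parameters $(t,u,v)$ with your $(t,t,u)$, and note that the case $t=u$ is trivial since Lemma \ref{lembxb}(iii) is stated for $u<v$.
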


\begin{proof} (i)  It suffices to observe that the second equality in \eqref{eqhh30} yields $\bbUU^\si = U^\si + B^\si$ where $U^\si$ and $B^\si$ are $\Hist^\si$-predictable by Lemmas \ref{lmcc7a} and \ref{lembxb}, respectively.

\noindent (iii) The first equality in \eqref{eqhh30} reads $\bbUU^\si(\tau) =  \wt U^\si(\tau)+ \wt B^\si(\tau)$.
Hence, from part (ii) in Lemma \ref{lembxb}, the process $\bbUU^\si(\tau)$ is a $(\Qrob , \FF)$-supermartingale.
Next, from Corollary \ref{lempp9}, it follows that $\bbUU^\si(\tau)$ has an RCLL modification and is of class (D).
\end{proof}

We will now prove the existence of a super-hedging strategy on $[t,T]$. It is
worth noting that we would like to employ here the following arguments, for any fixed $t$: \hfill \break
(i) the process $\bbUU^\si_u(\tau),\, u \in [t,T]$ is an RCLL $(\Qrob , \FF)$-supermartingale of class (D),\hfill \break
(ii) we may and do assume, without loss of generality, that the equality $\bbUU^\si_t(\tau)= U^\si_t(\tau)$
holds, for instance, by replacing the process $\wt{B}^\si(\tau)$ by $\wt{B}^\si(\tau)\I_{[t,T]}$ in formula \eqref{eqhh30}, that is, by neglecting the past adjustments to $\bbUU^\si(\tau)$ on~$[0,t)$.

Formally, in view of the second equality in \eqref{eqhh30}, for a fixed $t \in [0,T]$, to make use of the above
 features (i) and (ii), it is enough to define the auxiliary process $\check U_u^\si (\tau ) = \bbUU^\si_u (\tau ) - B_t ,\, u \in [t,T]$, which obviously satisfies $\check{U}^\si_t (\tau ) = U^\si_t (\tau )$ and is an RCLL $(\Qrob , \FF)$-supermartingale of class (D).

\begin{proposition} \label{propog21}
Fix $\si\in\Strat^{-\Aset}$ and $t\in[0,T]$.  There exists a mapping $\phi^\si  \map{\Hist^{\si } \times[t,T]\times\Omega}{\R}$, which is $(\wh{S},\Hist^\si )$-predictable on $[t,T]$, such that for every $\tau\in\Strat^{\Aset}$, on the event $\{\stp(\tau,\si)\geq t\}$, the pair $(\phi^\si (\tau ),\si)$ is an issuer's super-hedging strategy and its discounted wealth
at time $t$ is equal to the value $\bbUU^\si_t(\tau)$ of the $\Hist^{\si }$-predictable Snell envelope of the second kind, so that
\begin{gather}  \label{xeqsg211}
\widehat Z_{t}(\phi^\si(\tau))=\check{U} ^\si_t(\tau)= U^\si_t(\tau) = \esssup_{\tau'\in\Strat^\Aset_t(\tau,\si)} \EMM \Big(\widehat V^{\Aset}_{\stp(\tau',\si)}(\tau',\si) \,\Big|\,\Filt_t\Big),
\end{gather}
and
\begin{gather} \label{xeqsg212}
\widehat Z_{\stp (s)}(\phi^\si(\tau))\geq \check{U}^\si_{\stp(s)}(\tau) \geq U^\si_{\stp(s)}(\tau) \geq \widehat V^{\Aset}_{\stp(s)}(\tau,\si) .
\end{gather}
\end{proposition}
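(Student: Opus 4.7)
The plan is to mimic the argument used in the proofs of Propositions \ref{propro01} and \ref{propsg21}, with the crucial substitution of the auxiliary process $\check U^\si(\tau)$ for the supermartingale $U^\si(\tau)$, which under Assumption \ref{asssg20} alone may fail to admit an RCLL modification. Concretely, for the fixed $t\in[0,T]$, I introduce $\check U^\si_u(\tau) := \bbUU^\si_u(\tau) - B^\si_t(\tau)$ for $u\in[t,T]$. By part (ii) of Proposition \ref{priopog21}, $\bbUU^\si(\tau)$ is an RCLL $(\Qrob,\FF)$-supermartingale of class (D), and subtracting the $\Filt_t$-measurable random variable $B^\si_t(\tau)$ preserves these properties on $[t,T]$. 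Moreover, by the second equality in \eqref{eqhh30}, $\check U^\si_t(\tau) = U^\si_t(\tau)$.

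The first substantive step is to apply the Doob-Meyer decomposition theorem to $\check U^\si(\tau)$ on $[t,T]$, obtaining the unique decomposition $\check U^\si(\tau) = M^\si(\tau) - A^\si(\tau)$, where $M^\si(\tau)$ is a uniformly integrable $(\Qrob,\FF)$-martingale and $A^\si(\tau)$ is $\FF$-predictable, increasing, integrable, with $A^\si_t(\tau)=0$. The next step is to verify that $M^\si$ is $\Hist^\si$-predictable on $[t,T]$. Given arbitrary $\tau,\tau'\in\Strat^\Aset$, I set $\rho := \rho(\tau,\tau')$ as in \eqref{tautau}. Part (i) of Proposition \ref{priopog21} yields $\bbUU^\si(\tau) = \bbUU^\si(\tau')$ on $\cll 0,\rho\crr$, while part (ii) of Lemma \ref{lembxb} gives $B^\si_t(\tau) = B^\si_t(\tau')$ on $\{t\leq\rho\}$. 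Consequently, $\check U^\si(\tau) = \check U^\si(\tau')$ on $\cll t,\rho\crr$, and the uniqueness of the Doob-Meyer decomposition then forces $M^\si(\tau) = M^\si(\tau')$ and $A^\si(\tau) = A^\si(\tau')$ on that stochastic interval, establishing the desired $\Hist^\si$-predictability of $M^\si$.

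With $M^\si$ in hand, I invoke Lemma \ref{propsg04} (applied on $[t,T]$ and on the space $\Hist^\si$, as the lemma allows) to obtain an $(\wh{S},\Hist^\si)$-predictable mapping $\phi^\si$ for which $\widehat Z_u(\phi^\si(\tau)) = M^\si_u(\tau)$ for every $u\in[t,T]$. To verify \eqref{xeqsg211}, I compute $\widehat Z_t(\phi^\si(\tau)) = M^\si_t(\tau) = \check U^\si_t(\tau) + A^\si_t(\tau) = \check U^\si_t(\tau) = U^\si_t(\tau)$ where the last equality is that $A^\si_t(\tau)=0$ combined with $\check U^\si_t(\tau) = U^\si_t(\tau)$, and the identification with the essential supremum is just Definition \ref{defrm44}. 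For \eqref{xeqsg212}, at the settlement time I use that $A^\si$ is non-decreasing with $A^\si_t(\tau)=0$ to get
\[
\widehat Z_{\stp(s)}(\phi^\si(\tau)) = M^\si_{\stp(s)}(\tau) = \check U^\si_{\stp(s)}(\tau) + A^\si_{\stp(s)}(\tau) \geq \check U^\si_{\stp(s)}(\tau),
\]
and then observe that $\check U^\si_{\stp(s)}(\tau) = U^\si_{\stp(s)}(\tau) + \bigl(B^\si_{\stp(s)}(\tau)-B^\si_t(\tau)\bigr) \geq U^\si_{\stp(s)}(\tau)$ by the monotonicity of $B^\si(\tau)$ from part (i) of Lemma \ref{lembxb}, while the final inequality $U^\si_{\stp(s)}(\tau)\geq \widehat V^\Aset_{\stp(s)}(\tau,\si)$ is immediate from Definition \ref{defrm44}.

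I expect the main obstacle to be the careful tracking of $\Hist^\si$-predictability under the two successive modifications of the Snell envelope: first from $U^\si$ to $\bbUU^\si$ (to gain right-continuity via $\wt B^\si$) and then from $\bbUU^\si$ to $\check U^\si$ (to recover the identification with $U^\si$ at time $t$). The potential pitfall lies in confusing the time arguments in $B^\si$: the adjustment $B^\si_t(\tau)$ is evaluated at the fixed initial date $t$, so its $\Hist^\si$-predictability is only exploited through the identity on $\{t\leq\rho\}\in\Filt_t$ rather than the full stochastic interval, and this distinction must be respected when invoking the uniqueness of the Doob-Meyer decomposition on $[t,T]$.
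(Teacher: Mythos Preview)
Your proof is correct and follows essentially the same approach as the paper: define the shifted process $\check U^\si_u(\tau)=\bbUU^\si_u(\tau)-B^\si_t(\tau)$ on $[t,T]$, apply the Doob--Meyer decomposition, verify the $\Hist^\si$-predictability of the martingale part via the uniqueness of that decomposition on $\cll t,\rho\crr$, and then invoke Lemma \ref{propsg04}. The paper's own proof is quite terse (essentially deferring to the argument for Proposition \ref{propsg21}), whereas you have spelled out the predictability step and the verification of \eqref{xeqsg212} in more detail; in particular, your observation that $B^\si_t(\tau)=B^\si_t(\tau')$ need only hold on the $\Filt_t$-event $\{t\le\rho\}$ is exactly the right point to make explicit.
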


\begin{proof}
The proof goes along the same lines as the proof of Proposition \ref{propsg21} with only minor modifications.
We may now apply the Doob-Meyer decomposition $\check{U}^\si(\tau)=\check{M}^\si(\tau)-\check{A}^\si(\tau)$, to obtain an $\Hist^\si$-predictable martingale mapping $\wh{M}^\si_u(\tau),\, u \in [t,T]$, which dominates $U^\si (\tau)$ on $[t,T]$, and which will play the role of
the wealth process of a super-hedging portfolio on $[t,T]$. It is interesting to note that we obtain the following
decompositions for $(\Qrob , \FF)$-supermartingales $U^\si(\tau)$ and $\wt U^\si(\tau)$
\begin{gather*}
U^\si(\tau) = \check{M}^\si(\tau)- \check{A}^\si(\tau)- B^\si(\tau),\quad
\wt U^\si(\tau) = \check{M}^\si(\tau)- \check{A}^\si(\tau)- \wt{B}^\si(\tau).
\end{gather*}
Recall that $\check{U}^\si_t(\tau)= U^\si_t(\tau)$. We then proceed
as in the proof of Proposition \ref{propsg21}.
\end{proof}

\subsubsection{No-Arbitrage Bounds in Continuous Time}

The following theorem furnishes the no-arbitrage bounds for a price $\pi^\Aset_t(\Game,s)$ of a combined tranche in the continuous-time framework. As expected, its conclusion is exactly the same as in discrete time setup (see Theorem \ref{thmro10}).

\begin{theorem}\label{thmsg30}
Assume that a continuous time game $\Game$ satisfies Assumption \ref{assrm09}. Let us fix $\Aset\subseteq\Mset$ and $t \in [0,T]$, and let us consider the tranche $\MGCC^\Aset_t(\Game,s)$.
There is no arbitrage in $(B,S,\Aset)$ if and only if the discounted price $\widehat \pi^\Aset_t(\Game,s) = B^{-1}_t\pi^\Aset_t(\Game,s)$ at time $t$ satisfies
\begin{gather}\label{eqro301}
\esssup_{\tau\in\Strat^\Aset_t(s)} \essinf_{\si\in\Strat^{-\Aset}_t(s)} \EMM\Big( \widehat V^\Aset_{\stp (s)}
(\tau,\si)\,\Big|\,\Filt_t\Big)
\leq \widehat\pi^\Aset_t(\Game,s)
\leq \essinf_{\si\in\Strat^{-\Aset}_t(s)} \esssup_{\tau\in\Strat^\Aset_t(s)} \EMM\Big( \widehat V^\Aset_{\stp (s)}
(\tau,\si)\,\Big|\, \Filt_t\Big).
\end{gather}
\end{theorem}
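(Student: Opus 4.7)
The proof should follow the template already established for Theorem \ref{thmro10}, with the continuous-time machinery developed in Sections \ref{secstrong}--\ref{seci7f} replacing the discrete-time Doob decomposition argument. I would first note that, by symmetry, it suffices to prove that the upper bound in \eqref{eqro301} is equivalent to the absence of an issuer's arbitrage in $(B,S,\Aset)$; the lower bound then follows by applying the same argument to the auxiliary claim $\MGCC(\widetilde\Game)$ with payoff $\widetilde V^{-\Aset}(\tau,\si)=-V^\Aset(\tau,\si)$, which interchanges the roles of holder and issuer. Moreover, since for every $\si'\in\Strat^{-\Aset}_t(s)$ the equality $\Strat^\Aset_t(\tau,\si')=\Strat^\Aset_t(s)$ holds on the event under consideration, Definition \ref{defrm44} allows us to rewrite the upper bound simply as $\widehat\pi^\Aset_t(\Game,s)\leq \essinf_{\si'\in\Strat^{-\Aset}_t(s)} U^{\si'}_t(\tau)$.

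For the direct implication, I would argue as follows. Fix an arbitrary $\si'\in\Strat^{-\Aset}_t(s)$. By Proposition \ref{propog21} (the continuous-time super-hedging result built on the $\Hist^{\si'}$-predictable Snell envelope of the second kind), there exists an $(\wh S,\Hist^{\si'})$-predictable mapping $\phi^{\si'}$ such that $(\phi^{\si'}(\tau),\si')$ is an issuer's super-hedging strategy on $[t,T]$ with discounted wealth at time $t$ equal to $U^{\si'}_t(\tau)$ on $\{\stp(\tau,\si')\geq t\}$. Invoking Proposition \ref{proprm71}, the absence of an issuer's arbitrage forces $\widehat\pi^\Aset_t(\Game,s)\leq U^{\si'}_t(\tau)$. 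Taking the essential infimum over $\si'\in\Strat^{-\Aset}_t(s)$ delivers the claimed upper bound.

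For the converse, assume that $(\phi^\si,\si)$ is an issuer's arbitrage, so that on some $A\in\Filt_t$ of positive $\Qrob$-measure one has $\widehat Z_t(\phi^\si)<\widehat\pi^\Aset_t(\Game,s)$ and $\widehat Z_{\stp(s)}(\phi^\si)\geq \widehat V^\Aset_{\stp(s)}(\tau',\si)$ for every $\tau'\in\Strat^\Aset_t(s)$. Since $\phi^\si$ is admissible, Assumption \ref{asm09} together with the completeness of the market model ensures that $\widehat Z(\phi^\si)$ is a uniformly integrable $(\Qrob,\FF)$-martingale. The bounded $\FF$-stopping time $\stp(s)$ takes values in $[t,T]$, so Doob's optional sampling theorem yields, on $A$,
\[
\widehat Z_t(\phi^\si) = \EMM\!\left[\widehat Z_{\stp(s)}(\phi^\si)\,\big|\,\Filt_t\right]\ \geq\ \EMM\!\left[\widehat V^\Aset_{\stp(s)}(\tau',\si)\,\big|\,\Filt_t\right],\qquad \forall\,\tau'\in\Strat^\Aset_t(s).
\]
Taking the essential supremum over $\tau'$ and using Definition \ref{defrm44} gives $\widehat\pi^\Aset_t(\Game,s)>U^\si_t(\tau)\geq \essinf_{\si'\in\Strat^{-\Aset}_t(s)}U^{\si'}_t(\tau)$ on $A$, contradicting the upper bound.

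The substantive content of the continuous-time case has already been shouldered by the construction preceding the theorem, so I do not expect any real obstacle in the proof itself. The only subtlety worth flagging is that Proposition \ref{propro01} of the discrete-time setting is no longer directly applicable, because $U^\si(\tau)$ need not admit an RCLL modification in full generality; this is precisely why one must invoke the second-kind envelope $\bbUU^\si$ furnished by Corollary \ref{lempp9} and Proposition \ref{propog21}. The crucial point making the above argument go through is that, despite the technical detour through $\wt U^\si$ and $\wt B^\si$, Proposition \ref{propog21} still ensures that the discounted wealth of the constructed super-hedging strategy coincides with $U^\si_t(\tau)$ at the initial time $t$, which is exactly what is needed to identify the tightest upper bound and to match the formulation of the theorem.
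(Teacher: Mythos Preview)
Your proposal is correct and follows exactly the paper's approach: the paper's own proof simply states that, in view of Propositions \ref{propsg21} and \ref{propog21}, the argument is identical to that of Theorem \ref{thmro10}, and you have faithfully fleshed out those details, correctly identifying that Proposition \ref{propog21} (via the second-kind envelope) is the key continuous-time replacement for Proposition \ref{propro01}. Your remark about why the first-kind envelope alone would not suffice is also on point and matches the motivation behind Section \ref{seci7f}.
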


\begin{proof}
In view of Propositions \ref{propsg21} and \ref{propog21}, the arguments used in the proof of Theorem \ref{thmsg30} are exactly the same as the demonstration of Theorem \ref{thmro10} and thus we omit the details.
\end{proof}

\subsection{A Counter-Example} \label{countr}

Unlike for two-person game options, the arbitrage-free condition of Definition \ref{defrm03} restricts the price to an interval, rather than produces a single value. This is due to the fact that we are considering a general game $\Game$ with arbitrary payoff functions. It is difficult to make further progress without being more specific about $\Game$, for instance, by restricting
attention to zero-sum games. It is worth stressing that in the non-zero case, even if the inequalities \eqref{eqro301} are satisfied and arbitrage prices of single tranches are uniquely defined, they may still be problematic from the financial point of view.
The following example illustrates this statement.

\begin{example}\label{eqrx12} {\rm
This example is based on the famous \emph{prisoner's dilemma.} Specifically, let $\Game$ be a deterministic game with two players, each having the action space of $\Strat^i = \{0, 1\}$ for $i=1,2$. The payoff function is given by
\begin{center}
\begin{tabular}{| c | c  c |}
\hline
$V(\tau,\si)$ & $\tau=0$ & $\tau=1$ \\ \hline
$\si=0$ & $(1,1)$ & $(-1,2)$ \\
$\si=1$ & $(2,-1)$ & $(0,0)$ \\
\hline
\end{tabular}
\end{center}
Note that there is an optimal equilibrium at $(\tau^*,\si^*)=(1,1)$ and thus the unique value is $(0,0)$.

Consider the contract $\MGCC(\Game)$ in a market model with null interest rate.
It is easy to see that for each individual tranches, the upper and lower prices are both $\$0$. Hence the arbitrage price for each tranche must be $\pi^1(\Game)=\pi^2(\Game)=0$.  But when we consider the collection of tranches $\Mset=\{1,2\}$, the unique arbitrage price is
\begin{gather}
\pi^\Mset(\Game) = \esssup_{s\in\Strat} \big( V^1(s)+V^2(s)\big) = V^1(1,1) + V^2(1,1)=2.
\end{gather}
In particular, note that $\pi^1(\Game)+\pi^2(\Game)\neq \pi^\Mset(\Game)$.

Now interpret this from the perspective of the issuer and the two prospective holders who are about to enter the contract $\MGCC(\Game)$. For the issuer, the total selling price for the two tranches cannot be less than $\$2$, since there is nothing stopping the two holders from working together for a combined payoff of $\$2$. For the prospective holders of the tranche  1, the buying price cannot be more than $\$0$, since there is nothing stopping the holder of the tranche  2 from playing 1, and limiting the payoff of the tranche  1 to at most 0. In fact, it is strictly better for the holder of the tranche  2 to play 1 than 0 in all cases. A similar analysis can be made for the prospective holder of the tranche~2.

So it is impossible for any deal to strike between the three parties. The issuer will always be demanding more than what the prospective holders are willing to pay. The only way to resolve this situation is for at least two of the three parties to agree to work together and share their payoffs in some way. For example, the two holders agree to buy the two tranches for $\$2$ and play $(0,0)$; or the issuer agrees to sell the tranches for $\$0$, but requiring one of the holders to play 1.}
\end{example}

\subsection{No-Arbitrage Versus Super-Hedging}\label{sec7da4}

We have shown in Theorems \ref{thmro10} and \ref{thmsg30} that there is no arbitrage in the market $(B,S,\Aset)$ if and only if the discounted price of the tranche  $\MGCC^\Aset_t(\Game,s)$ satisfies
\begin{gather}\label{eqrx01}
\underline \pi^\Aset_t(\Game,s)\leq \widehat\pi^\Aset_t(\Game,s) \leq \overline \pi^\Aset_t(\Game,s)
\end{gather}
where the \emph{upper price} $\overline \pi^\Aset_t(\Game,s)$ and the \emph{lower price} $\underline \pi^\Aset_t(\Game,s)$ are defined by
\begin{align*}
\overline \pi^\Aset_t(\Game,s) &:= \essinf_{\si\in\Strat^{-\Aset}_t(s)} \esssup_{\tau\in\Strat^\Aset_t(s)}
\widehat V^\Aset_t (\tau,\si),\\
\underline \pi^\Aset_t(\Game,s) &:= \esssup_{\tau\in\Strat^\Aset_t(s)} \essinf_{\si\in\Strat^{-\Aset}_t(s)}
\widehat V^\Aset_t (\tau,\si),
\end{align*}
where
\[
\widehat V^\Aset_t(\tau,\si ) := \EMM\big( \widehat V^\Aset_{\stp} (\tau,\si)\,\big|\,\Filt_t\big).
\]
In Theorems \ref{thmro10} and \ref{thmsg30}, the upper and lower prices were derived using arbitrage pricing arguments. As expected, they also have an interpretation in terms of super-hedging strategies introduced in Definition \ref{defrm70}.
To this end, we will need the following property.

\begin{lemma}\label{proprx12}
There exist two sequences of strategies, $(\tau_n)_{n\in\N}$ from $\Strat^{\Aset}_t(s)$ and $(\si_n)_{n\in\N}$ from $\Strat^{-\Aset}_t(s)$, such that the sequences
\[
\bigg(\esssup_{\tau\in\Strat^\Aset_t(s)} \widehat V^\Aset_t (\tau,\si_n) \bigg)_{n\in\N} \quad\text{and}\quad \bigg(\essinf_{\si\in\Strat^{-\Aset}_t(s)} \widehat V^\Aset_t (\tau_n,\si ) \bigg)_{n\in\N}
\]
are almost surely non-increasing and non-decreasing, respectively. Furthermore, they converge to the upper and lower prices,
specifically,
\begin{gather*}
\lim_{n\to\infty} \esssup_{\tau\in\Strat^\Aset_t(s)} \widehat V^\Aset_t (\tau,\si_n) = \inf_{n\in\N} \esssup_{\tau\in\Strat^\Aset_t(s)} \widehat V^\Aset_t (\tau,\si_n)=  \overline \pi^\Aset_t(\Game,s),
\end{gather*}
and
\begin{gather*}
\lim_{n\to\infty} \essinf_{\si\in\Strat^{-\Aset}_t(s)} \widehat V^\Aset_t (\tau_n,\si) = \sup_{n\in\N} \essinf_{\si\in\Strat^{-\Aset}_t(s)} \widehat V^\Aset_t (\tau_n,\si)=\underline \pi^\Aset_t(\Game,s).
\end{gather*}
\end{lemma}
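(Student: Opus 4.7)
The plan is to reduce the statement to a direct application of Lemma \ref{lemrm30}. That lemma produces monotone approximating sequences whenever the family of random variables being essentially supremised/infimised is $\Filt_t$-measurable, integrably bounded, and has the lattice property. Accordingly, I would apply it to the two families
\[
\mathcal{F}_1 := \Big\{ \esssup_{\tau \in \Strat^\Aset_t(s)} \widehat V^\Aset_t(\tau,\si) : \si \in \Strat^{-\Aset}_t(s) \Big\}, \qquad
\mathcal{F}_2 := \Big\{ \essinf_{\si \in \Strat^{-\Aset}_t(s)} \widehat V^\Aset_t(\tau,\si) : \tau \in \Strat^\Aset_t(s) \Big\}.
\]

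First I would verify the measurability and integrability hypotheses. Since $\widehat V^\Aset_t(\tau,\si) = \EMM(\widehat V^\Aset_{\stp(s)}(\tau,\si) \,|\, \Filt_t)$, each member of $\mathcal{F}_1$ and $\mathcal{F}_2$ is $\Filt_t$-measurable. Under Assumption \ref{assrm09}, the conditional Jensen inequality and the tower property give $|\widehat V^\Aset_t(\tau,\si)| \leq \EMM(\esssup_{s'\in\Strat}|\widehat V^\Aset_{\stp(s')}(s')|\,|\,\Filt_t)$, so the essential suprema of $|\cdot|$ over either family are dominated by a $\Qrob$-integrable martingale, establishing the integrability required by Lemma \ref{lemrm30}.

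The main step is the lattice property, and this is where the proof mirrors Lemma \ref{proprm31}. For $\mathcal{F}_1$, given $\si_1,\si_2 \in \Strat^{-\Aset}_t(s)$, set
\[
A := \Big\{ \esssup_{\tau} \widehat V^\Aset_t(\tau,\si_1) \leq \esssup_{\tau} \widehat V^\Aset_t(\tau,\si_2) \Big\} \in \Filt_t,
\]
and define $\si_3 := \si_1 \I_A + \si_2 \I_{A^c}$. I would check that $\si_3$ is an admissible element of $\Strat^{-\Aset}_t(s)$: $\FF$-adaptedness and $\Hist$-predictability follow because $\I_A$ is $\Filt_t$-measurable and both $\si_1,\si_2$ agree with $s$ on $[0,t)$, so the combined mapping still yields $h_{[0,t)}(\tau,\si_3) = h_{[0,t)}(s)$ for every $\tau$. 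Then, reasoning as in the derivation of \eqref{eqrm310} in Lemma \ref{proprm31}, the $\Hist$-adaptedness of $\stp$ and $\widehat V^\Aset$ yields $\widehat V^\Aset_t(\tau,\si_3)\I_A = \widehat V^\Aset_t(\tau,\si_1)\I_A$ and $\widehat V^\Aset_t(\tau,\si_3)\I_{A^c} = \widehat V^\Aset_t(\tau,\si_2)\I_{A^c}$ for every $\tau \in \Strat^\Aset_t(s)$. Taking essential suprema over $\tau$ on each event therefore gives
\[
\esssup_{\tau} \widehat V^\Aset_t(\tau,\si_3) = \esssup_{\tau} \widehat V^\Aset_t(\tau,\si_1) \wedge \esssup_{\tau} \widehat V^\Aset_t(\tau,\si_2),
\]
which provides the required lower element of $\mathcal{F}_1$. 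A symmetric construction with the reversed inequality produces an upper element, so $\mathcal{F}_1$ has the lattice property; an analogous argument handles $\mathcal{F}_2$.

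Having verified all the hypotheses, part (i) of Lemma \ref{lemrm30} applied to $\mathcal{F}_1$ yields a sequence $(\si_n)_{n\in\N}$ in $\Strat^{-\Aset}_t(s)$ with $\esssup_{\tau} \widehat V^\Aset_t(\tau,\si_n)$ almost surely non-increasing and converging to $\essinf_\si \esssup_\tau \widehat V^\Aset_t(\tau,\si) = \overline\pi^\Aset_t(\Game,s)$; the dual application to $\mathcal{F}_2$ yields $(\tau_n)_{n\in\N}$ with the non-decreasing property converging to $\underline\pi^\Aset_t(\Game,s)$. The principal obstacle is purely the lattice-property verification, since the "switching" strategy $\si_3$ must be shown to remain in $\Strat^{-\Aset}_t(s)$ and to transport cleanly through the payoff mapping; once this is in place, Lemma \ref{lemrm30} does the rest.
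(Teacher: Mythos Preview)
Your proposal is correct and follows essentially the same approach as the paper's own proof: establish the lattice property of the two families $\mathcal{F}_1$ and $\mathcal{F}_2$ by the same $\Filt_t$-measurable switching construction used in Lemma~\ref{proprm31}, then invoke part~(i) of Lemma~\ref{lemrm30}. The paper's proof is terser---it simply says ``using arguments identical to Lemma~\ref{proprm31}'' without spelling out the switching or the integrability check---but your expanded version is exactly what that sentence unpacks to.
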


\begin{proof}
Using arguments identical to Lemma \ref{proprm31}, the sets
\begin{gather}
\bigg\{ \esssup_{\tau\in\Strat^\Aset_t(s)} \widehat V^\Aset_t (\tau,\si) : \si\in\Strat^{-\Aset}_t(s)\bigg\},\quad
\bigg\{ \essinf_{\si\in\Strat^{-\Aset}_t(s)} \widehat V^\Aset_t (\tau,\si) : \tau\in\Strat^\Aset_t(s)\bigg\}
\end{gather}
have the lattice property. Hence the existence of the desired sequences follows by part (i) in Lemma~\ref{lemrm30}.
\end{proof}

Suppose the issuer holds the tranche $-\Aset$ and plays the strategy $\si\in\Strat^{-\Aset}_t(s)$ while super-hedging using $(\phi^\si,\si)$. The upper price $\overline \pi^\Aset_t(\Game,s)$ is the infimum of $\widehat \Val_t(\phi^\si)$ (the discounted time $t$ value) over all possible $\si\in\Strat^{-\Aset}_t(s)$. Even though the infimum is not necessarily achieved, Lemma \ref{proprx12} shows that there exists a sequence of $(\si_n)_{n\in\N}$ such that $\widehat \Val_t(\phi^{\si_n})$ is non-increasing and converging to $\overline \pi^\Aset_t(\Game,s)$.
Intuitively, the buyer of the tranche  $\Aset$ will be reluctant to pay more than $\overline \pi^\Aset_t(\Game,s)$ at time $t$. Otherwise, his expected discounted payoff could be restricted to be lower than the price paid and, as shown by Theorems \ref{thmro10} and \ref{thmsg30}, leads to an issuer's arbitrage.

Similarly, the holder of the tranche $\Aset$ plays the strategy $\tau\in\Strat^\Aset_t(s)$ and super-hedges the negative of his payoff using $(\phi^\tau,\tau)$. The lower price $\underline \pi^\Aset_t(\Game,s)$ is the supremum of $-\widehat \Val_t(\phi^\tau)$ over all possible $\tau\in\Strat^\Aset_t(s)$. Again the supremum is not necessarily achieved, but Lemma \ref{proprx12} shows that there exists a sequence of $(\tau_n)_{n\in\N}$ such that $-\widehat \Val_t(\phi^{\si_n})$ is non-decreasing and converges to $\underline \pi^\Aset_t(\Game,s)$.
Intuitively, it is undesirable for the issuer to sell the tranche  $\Aset$ for a price cheaper than $\underline \pi^\Aset_t(\Game,s)$. Otherwise the holder of the tranche  $\Aset$ will be able to guarantee an expected discounted payoff greater than the price. As shown in Theorems \ref{thmro10} and \ref{thmsg30}, this leads to a holder's arbitrage.

\section{Pricing of Multi-Person Game Claims} \label{sec7e2}

In the previous section, we focussed on arbitrage valuation of a fixed combined tranche $\Aset $ of $\MGCC^\Aset_t(\Game,s)$. In Theorems \ref{thmro10} and \ref{thmsg30}, we have shown that, in the absence of arbitrage in the market model $(B,S,\Aset)$ (see Definition \ref{defrm03}), an arbitrage price at time $t$ of a combined tranche $\Aset $ lies between the upper and lower prices, that is, \eqref{eqrx01} holds. If there is no arbitrage in $(B,S,\Aset)$ for \emph{any} non-empty subset of tranches $\Aset\subseteq\Mset$, then \eqref{eqrx01} actually describes a total of $2(2^m-1)$ inequalities, which are not easy to handle,
 in general. In this section, we shall further explore how the prices of different collections of tranches are related to one another. In particular, we address in this section the following questions:
\begin{itemize}
\item Apart from \eqref{eqrx01}, should there be other constraints on prices of tranches?
\item Should the price of a combined tranche be equal to the sum of prices of all single tranches, which
constitute a combined tranche?
\item Under which assumptions, a consistent valuation of all tranches (hence of a multi-person game claim)
is feasible?
\end{itemize}

\subsection{Super-Additivity of Lower Prices}\label{sc7e2}

We first show that the lower prices $\underline \pi^\Aset_t(\Game,s)$ are inherently super-additive.

\begin{proposition}\label{propry02}
The lower price of $\MGCC^\Aset_t(\Game,s)$, which is defined by
\[
\underline \pi^\Aset_t(\Game,s) = \esssup_{\tau\in\Strat^\Aset_t(s)} \essinf_{\si\in\Strat^{-\Aset}_t(s)} \widehat V^\Aset_t (\tau,\si),
\]
has the following super-additive property: if $\Aset_1, \Aset_2,\ldots, \Aset_k$ is a partition of $\Aset$, then
\[
\underline \pi^\Aset_t(\Game,s) \geq \sum_{i=1}^k \underline \pi^{\Aset_i}_t(\Game,s).
\]
\end{proposition}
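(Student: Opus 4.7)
My plan is to show the inequality by constructing, for any choice of strategies in the pieces of the partition, a concatenated strategy that witnesses a lower bound for $\underline{\pi}^\Aset_t(\Game,s)$, and then to pass to the supremum.

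First I would exploit the additivity of the payoff in the tranche index: since $\widehat V^\Aset_t = \sum_{i=1}^k \widehat V^{\Aset_i}_t$, for any $\tau=(\tau_1,\ldots,\tau_k)\in\Strat^{\Aset_1}_t(s)\times\cdots\times\Strat^{\Aset_k}_t(s)\subseteq \Strat^\Aset_t(s)$ and any $\si\in\Strat^{-\Aset}_t(s)$, the trivial estimate \emph{essential infimum of a sum dominates the sum of essential infima} yields
\begin{equation*}
\essinf_{\si\in\Strat^{-\Aset}_t(s)}\widehat V^\Aset_t(\tau,\si) \;\geq\; \sum_{i=1}^k \essinf_{\si\in\Strat^{-\Aset}_t(s)} \widehat V^{\Aset_i}_t\bigl(\tau_i,(\tau_{-i},\si)\bigr),
\end{equation*}
where $\tau_{-i}$ denotes the tuple of $\tau_j$'s with $j\ne i$.

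Second, I would observe that, for fixed $\tau_{-i}$, the map $\si\mapsto(\tau_{-i},\si)$ sends $\Strat^{-\Aset}_t(s)$ into a subset of $\Strat^{-\Aset_i}_t(s)$ (since $\Strat^{-\Aset_i}_t(s)$ contains all strategies for players in $-\Aset_i$ agreeing with $s$ on $[0,t)$, without constraining the $\tau_j$ components). Taking the essential infimum over the larger set can only decrease the value, giving
\begin{equation*}
\essinf_{\si\in\Strat^{-\Aset}_t(s)} \widehat V^{\Aset_i}_t\bigl(\tau_i,(\tau_{-i},\si)\bigr) \;\geq\; \essinf_{\si_i\in\Strat^{-\Aset_i}_t(s)} \widehat V^{\Aset_i}_t(\tau_i,\si_i).
\end{equation*}
Combining the two displays, and then taking the essential supremum over $\tau=(\tau_1,\ldots,\tau_k)$, I obtain
\begin{equation*}
\underline{\pi}^\Aset_t(\Game,s) \;\geq\; \esssup_{(\tau_1,\ldots,\tau_k)} \sum_{i=1}^k \essinf_{\si_i\in\Strat^{-\Aset_i}_t(s)} \widehat V^{\Aset_i}_t(\tau_i,\si_i).
\end{equation*}

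Finally, since the $i$-th summand depends only on $\tau_i$ and the variables $\tau_1,\ldots,\tau_k$ range independently, I would swap the essential supremum with the finite sum. This is the step that needs a small lattice argument exactly as in Lemma \ref{proprm31} and Lemma \ref{lemrm30}: for each $i$, the family $\{\essinf_{\si_i} \widehat V^{\Aset_i}_t(\tau_i,\si_i):\tau_i\in\Strat^{\Aset_i}_t(s)\}$ has the lattice property (using indicators on $\Filt_t$-measurable events to paste strategies, as in the proof of Lemma \ref{proprm31}), so one can extract non-decreasing approximating sequences $(\tau_i^n)_n$ converging almost surely to each $\underline{\pi}^{\Aset_i}_t(\Game,s)$; the concatenated strategies $\tau^n=(\tau_1^n,\ldots,\tau_k^n)$ then belong to $\Strat^\Aset_t(s)$ and, by monotone convergence at the level of partial sums,
\begin{equation*}
\esssup_{(\tau_1,\ldots,\tau_k)}\sum_{i=1}^k \essinf_{\si_i} \widehat V^{\Aset_i}_t(\tau_i,\si_i) \;=\; \sum_{i=1}^k \underline{\pi}^{\Aset_i}_t(\Game,s),
\end{equation*}
which delivers the claim. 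The only delicate step is this final interchange of $\esssup$ with the finite sum; the other two inequalities are essentially bookkeeping on the size of the admissible strategy sets and the additivity of $V^\Aset$.
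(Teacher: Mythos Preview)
Your argument is correct and follows essentially the same route as the paper's proof: both rely on the additivity $\widehat V^\Aset_t=\sum_i \widehat V^{\Aset_i}_t$, the enlargement of the infimum set from $\Strat^{-\Aset}_t(s)$ to $\Strat^{-\Aset_i}_t(s)$, and the lattice property to pass from approximating sequences to the essential supremum. The only cosmetic difference is that the paper front-loads the sequence construction by invoking Lemma~\ref{proprx12} at the outset, whereas you carry out the two bookkeeping inequalities first and invoke the lattice argument (which is precisely the content of Lemma~\ref{proprx12}) at the end.
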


\begin{proof}
According to Lemma \ref{proprx12}, for each of the sets $\Aset_i$, we may choose a sequence $(\tau^{\Aset_i}_n)_{n\in\N}$ from $\Strat^{\Aset_i}_t(s)$ such that the sequence
\[
\essinf_{\si^{-\Aset_i}\in\Strat^{-\Aset_i}_t(s)} \widehat V^{\Aset_i}_t
\left(\tau^{\Aset_i}_n, \si^{-\Aset_i}\right),\quad n\in\N
\]
is non-decreasing and converges to $\underline \pi^{\Aset_i}_t(\Game,s)$.
Now define the strategy $\tau_n\in\Strat^\Aset_t(s)$ for $n\in\N$ by $\tau_n:=(\tau^{\Aset_1}_n,\ldots,\tau^{\Aset_k}_n)$. For each $i=1,\ldots,k$ and for all $\si\in\Strat^{-\Aset}_t(s)$, we have
\[
\widehat V^{\Aset}_t (\tau_n,\si) = \sum_{i=1}^k \widehat V^{\Aset_i}_t (\tau_n,\si) \geq \sum_{i=1}^k \essinf_{\si^{-\Aset_i}\in\Strat^{-\Aset_i}_t(s)} \widehat V^{\Aset_i}_t \left( \tau^{\Aset_i}_n,\si^{-\Aset_i}\right).
\]
Since this holds for all $\si\in\Strat^{-\Aset}_t(s)$, the above inequality implies that
\[
\essinf_{\si\in\Strat^{-\Aset}_t(s)} \widehat V^{\Aset}_t (\tau_n,\si) \geq \sum_{i=1}^k \essinf_{\si^{-\Aset_i}\in\Strat^{-\Aset_i}_t(s)} \widehat V^{\Aset_i}_t \left( \tau^{\Aset_i}_n,\si^{-\Aset_i}\right)
\]
and thus
\begin{align*}
\underline \pi^\Aset_t(\Game,s) &= \esssup_{\tau\in\Strat^\Aset_t(s)} \essinf_{\si\in\Strat^{-\Aset}_t(s)} \widehat V^{\Aset}_t (\tau,\si) \geq \sup_{n\in\N} \essinf_{\si\in\Strat^{-\Aset}_t(s)} \widehat V^{\Aset}_t (\tau_n,\si)  \\
& \geq \sup_{n\in\N}  \sum_{i=1}^k \essinf_{\si^{-\Aset_i}\in\Strat^{-\Aset_i}_t(s)} \widehat V^{\Aset_i}_t
\left(\tau^{\Aset_i}_n, \si^{-\Aset_i}\right)
 = \sum_{i=1}^k \underline \pi^{\Aset_i}_t(\Game,s),
\end{align*}
which completes the proof.
\end{proof}
The super-additive property from Proposition \ref{propry02} can be easily explained as follows. First note that, by Lemma \ref{proprx12}, the holder of any individual tranche may choose a strategy to guarantee an expected discounted payoff which is arbitrarily close to its lower price. So by playing these strategies over a collection of tranches, one can guarantee a total expected discounted payoff arbitrarily close to the sum of individual lower prices, which is in turn no greater than the lower price of the collection of tranches.

Unfortunately, no analogous additive properties exist for the upper prices $\overline \pi^\Aset_t(\Game,s)$. Furthermore, from the analysis so far, there is no reason for the prices $\pi^\Aset_t(\Game,s)$ themselves to satisfy any additive properties either. Does that mean the bounds described by \eqref{eqrx01} are the only important constraints when pricing game contingent claims?

\subsection{Additivity of Prices}\label{sec72}

As illustrated by Example \ref{eqrx12}, even if the bounds of \eqref{eqrx01} hold (and thus the no-arbitrage property of Definition \ref{defrm03} is satisfied), they do not necessarily lead to sensible pricing of the contract. There are two main causes of this discrepancy. First, each party simply cannot rule out the possibilities of the others colluding. This perceived possibility of collusion can also be interpreted as the uncertainty over the identity of the other holders, that is, multiple holders (or issuer) may actually be the same person. Second, the arbitrage prices in Example \ref{eqrx12} do not satisfy a natural requirement of price additivity.

\begin{definition}\label{defrx45}
{\rm We say that the \emph{price additivity} holds whenever}
\begin{gather*}
\pi^\Aset_t(\Game,s)=\sum_{k\in\Aset}\pi^k_t(\Game,s),\quad\forall\,\Aset\subseteq\Mset.
\end{gather*}
\end{definition}

In general, the problem highlighted by Example \ref{eqrx12} occurs under the following conditions. Suppose there are $k$ prospective buyers, who intend to purchase tranches $\Aset_1,\ldots, \Aset_k$, respectively, while the issuer decides to keep the tranche $-\Aset=-(\Aset_1 \cup\cdots\cup \Aset_k)$. It is not possible to reach agreements on prices if the lower price of the tranche $\Aset$ exceeds the sum of upper prices of the tranche $\Aset_i$,
\begin{gather}\label{eqrx50}
\sum_{i=1}^k \overline \pi^{\Aset_i}_t(\Game,s) < \underline \pi^{\Aset}_t(\Game,s) \quad\implies\quad \sum_{i=1}^k \pi^{\Aset_i}_t(\Game,s) < \pi^{\Aset}_t(\Game,s).
\end{gather}
Indeed, the issuer will then always demand more than the what the holders are willing to pay. However, if the price additivity holds, then strict inequality \eqref{eqrx50} cannot occur.

Example \ref{eqrx12} offered some justification to why the additivity of prices is a desirable condition. To further consolidate this belief, we shall introduce a second form of arbitrage called \emph{immediate reselling arbitrage}. In the previous section, Theorems \ref{thmro10} and \ref{thmsg30} prevent enter-and-hold arbitrage opportunities (as per Definition \ref{defrm03}), which are executed by holding a fixed collection of tranches from time $t$ until the settlement time. An immediate reselling arbitrage simply involves buying some tranches at time $t$ and then instantly reselling them in a different configuration with the goal of collecting a higher total price. We will define this formally.

\begin{definition} \label{imm33} {\rm
Consider the game contingent claim $\MGCC_t(\Game,s)$ at time $t$. For each $\Aset\subseteq\Mset$, denote the price of the tranche $\Aset$ by $\pi^\Aset_t(\Game,s)$. A \emph{immediate reselling arbitrage} consists of two different partitions of some subset $\Aset\subseteq\Mset$,
\[
\Aset = \Aset_1 \cup \cdots \cup \Aset_k = \Aset'_1\cup\cdots\cup\Aset'_l,
\]
such that the inequality
\[
\sum_{i=1}^k \pi^{\Aset_i}_t(\Game,s) < \sum_{j=1}^l \pi^{\Aset'_j}_t(\Game,s)
\]
holds on some $\Filt_t$-measurable event $A$ with a positive probability.}
\end{definition}

In European and American option pricing models, immediate reselling arbitrages are implicitly avoided due to the law of one price. In multi-person game contingent claims, the ability to purchase collections of tranches has reintroduced this possibility. However,
it is easy to see that immediate reselling arbitrages are avoided whenever the price additivity holds.
The following result is an immediate consequence of Definition \ref{imm33}.

\begin{proposition}\label{propry20}
There is no immediate reselling arbitrage if and only if the additivity of prices holds. In other words,
\[
\pi^{\Aset}_t(\Game,s)=\sum_{k\in\Aset}\pi^k_t(\Game,s),\quad\forall\,\Aset\subseteq\Mset.
\]
\end{proposition}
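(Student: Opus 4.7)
The plan is to prove both directions of the biconditional directly from the definitions; the statement is essentially a tautological consequence of Definition \ref{imm33} once one chooses the right pair of partitions.

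For the implication ``price additivity $\Rightarrow$ no immediate reselling arbitrage,'' I would fix any subset $\Aset \subseteq \Mset$ and any two partitions $\Aset = \Aset_1 \cup \cdots \cup \Aset_k = \Aset'_1 \cup \cdots \cup \Aset'_l$. Applying additivity twice to each side gives
\[
\sum_{i=1}^k \pi^{\Aset_i}_t(\Game,s) = \sum_{i=1}^k \sum_{j \in \Aset_i} \pi^{j}_t(\Game,s) = \sum_{j\in \Aset} \pi^{j}_t(\Game,s) = \sum_{j=1}^l \sum_{k \in \Aset'_j} \pi^{k}_t(\Game,s) = \sum_{j=1}^l \pi^{\Aset'_j}_t(\Game,s),
\]
so the two sides coincide pointwise and no event of positive probability of strict inequality exists.

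For the converse ``no immediate reselling arbitrage $\Rightarrow$ price additivity,'' I would, for each fixed $\Aset\subseteq\Mset$, compare the trivial partition $\{\Aset\}$ with the atomic partition $\{\{k\}:k\in\Aset\}$. Suppose for contradiction that $\pi^{\Aset}_t(\Game,s) \neq \sum_{k\in\Aset}\pi^k_t(\Game,s)$ on some $\Filt_t$-measurable event of positive probability. Splitting this event into the two $\Filt_t$-measurable pieces where $\pi^{\Aset}_t(\Game,s) < \sum_{k\in\Aset}\pi^k_t(\Game,s)$ and $\pi^{\Aset}_t(\Game,s) > \sum_{k\in\Aset}\pi^k_t(\Game,s)$, at least one has positive probability; on that piece we obtain an immediate reselling arbitrage (in the first case buy tranche $\Aset$ and resell it as singletons, in the second case do the reverse), which contradicts the hypothesis. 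Hence equality must hold almost surely for every $\Aset$.

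There is no real obstacle here: the proof is a one-line verification in each direction, since Definition \ref{imm33} is formulated precisely so that the pair of partitions $\bigl(\{\Aset\},\{\{k\}:k\in\Aset\}\bigr)$ detects any failure of additivity. The only subtlety worth mentioning explicitly in the write-up is that additivity is an almost sure statement, and that the event on which additivity fails decomposes into two $\Filt_t$-measurable subevents corresponding to the two directions of strict inequality, each of which, when nonnull, directly produces an instance of the arbitrage in Definition \ref{imm33}.
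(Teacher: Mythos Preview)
Your proof is correct and is precisely the routine verification the paper has in mind; the paper itself gives no explicit proof, stating only that the result is an immediate consequence of Definition~\ref{imm33}. Your write-up simply spells out that immediate consequence by pairing the trivial partition $\{\Aset\}$ with the atomic partition $\{\{k\}:k\in\Aset\}$, which is exactly the intended argument.
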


By combining Proposition \ref{propry20} with Theorems \ref{thmro10} and \ref{thmsg30}, we obtain the following characterisation of prices which avoids both types of arbitrage mentioned. As usual, we consider a multi-person stochastic game whose payoffs satisfy the appropriate conditions (Assumption \ref{assrm09} for the discrete-time set-up and Assumption \ref{assrm09} for the continuous-time set-up).

\begin{theorem}\label{thmry30}
 For every $\Aset\subseteq\Mset$, let $\widehat \pi^\Aset_t(\Game,s)$ be the discounted time $t$ price of the combined tranche $\Aset $ in $\MGCC^\Aset_t(\Game,s)$. Then there is no arbitrage in $(B,S,\Aset)$ for all $\Aset\subseteq\Mset$ and no immediate reselling arbitrage if and only if the following two conditions are satisfied: \\
(i) the equality
\[
\widehat \pi^\Aset_t(\Game,s)=\sum_{k\in\Aset}\widehat \pi^k_t(\Game,s)
\]
holds for all $\Aset\subseteq\Mset$,
\\ (ii) the random vector $(\widehat \pi^1_t(\Game,s),\ldots,\widehat \pi^m_t(\Game,s) )$ lies in the following random subset of $\R^m$,
\begin{gather}\label{eqry30}
\bigg\{(X_1,\ldots,X_m) : \esssup_{\tau\in\Strat^\Aset_t(s)} \essinf_{\si\in\Strat^{-\Aset}_t(s)}
\widehat V^\Aset_t (\tau,\si) \leq \sum_{i\in\Aset} X_i\\
\leq \essinf_{\si\in\Strat^{-\Aset}_t(s)} \esssup_{\tau\in\Strat^\Aset_t(s)} \widehat V^\Aset_t
(\tau,\si),\quad\forall\,\Aset\subseteq\Mset \bigg\},\nonumber
\end{gather}
where $X_1, \dots , X_m$ are ${\cal F}_t$-measurable random variables.
\end{theorem}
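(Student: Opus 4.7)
The plan is to observe that Theorem \ref{thmry30} is essentially a direct assembly of two results already established earlier in the paper: Proposition \ref{propry20}, which characterises the absence of immediate reselling arbitrage as price additivity across all subsets, and Theorems \ref{thmro10} (in discrete time) and \ref{thmsg30} (in continuous time), which characterise the absence of arbitrage in $(B,S,\Aset)$ for each individual $\Aset$ via the upper/lower price bounds. Once one recognises this, the two conditions (i) and (ii) in the statement are, respectively, a restatement of price additivity and a restatement of the no-arbitrage bounds on each combined tranche after additivity is imposed. No additional analytic work is needed; only the substitution of (i) into the bounds for the combined tranches.

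For the necessity part, I would proceed as follows. Assume there is no arbitrage in $(B,S,\Aset)$ for every non-empty $\Aset\subseteq\Mset$ and no immediate reselling arbitrage. Applying Proposition \ref{propry20} directly gives
\[
\widehat\pi^\Aset_t(\Game,s) = \sum_{k\in\Aset}\widehat\pi^k_t(\Game,s),\qquad \forall\,\Aset\subseteq\Mset,
\]
which is condition (i). Then, for each fixed $\Aset$, Theorem \ref{thmro10} or Theorem \ref{thmsg30} (depending on whether the market is discrete or continuous time) applied to the tranche $\MGCC^\Aset_t(\Game,s)$ provides the sandwich
\[
\esssup_{\tau\in\Strat^\Aset_t(s)}\essinf_{\si\in\Strat^{-\Aset}_t(s)}\widehat V^\Aset_t(\tau,\si)
\leq \widehat\pi^\Aset_t(\Game,s)
\leq \essinf_{\si\in\Strat^{-\Aset}_t(s)}\esssup_{\tau\in\Strat^\Aset_t(s)}\widehat V^\Aset_t(\tau,\si).
\]
Substituting the additivity relation from (i) into the middle term shows that the random vector $(\widehat\pi^1_t(\Game,s),\ldots,\widehat\pi^m_t(\Game,s))$ belongs to the set defined in \eqref{eqry30}, which is precisely condition (ii).

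For the sufficiency part, assume (i) and (ii). Condition (i) is the statement that price additivity holds, so by Proposition \ref{propry20}, no immediate reselling arbitrage is possible. Next, fix any non-empty $\Aset\subseteq\Mset$. By (i), the combined-tranche price decomposes as $\widehat\pi^\Aset_t(\Game,s)=\sum_{k\in\Aset}\widehat\pi^k_t(\Game,s)$, and condition (ii) asserts exactly that this sum lies between the lower and upper prices of the tranche $\Aset$. Hence, the no-arbitrage characterisation of Theorem \ref{thmro10} (discrete time) or Theorem \ref{thmsg30} (continuous time) applies in the reverse direction, ensuring the absence of enter-and-hold arbitrage in $(B,S,\Aset)$. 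Since $\Aset$ was arbitrary, both types of arbitrage are simultaneously excluded.

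No single step constitutes a genuine obstacle; the proof is essentially a bookkeeping exercise. The only subtlety worth flagging is that Theorems \ref{thmro10} and \ref{thmsg30} were originally stated for a fixed $\Aset$ and must be invoked uniformly over all $\Aset\subseteq\Mset$, which is harmless because the conditions on the payoffs (Assumption \ref{assrm09}, and in the continuous-time case the additional standing assumptions underpinning Theorem \ref{thmsg30}) are postulated to hold for every $\Aset$. Thus the proof reduces to citing the two earlier results and performing the substitution above, which is why this theorem can be stated with a very short formal proof.
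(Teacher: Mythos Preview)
Your proposal is correct and matches the paper's approach exactly: the paper states this theorem as an immediate consequence of combining Proposition~\ref{propry20} with Theorems~\ref{thmro10} and~\ref{thmsg30}, without giving a separate formal proof. Your write-up in fact provides more detail than the paper, which simply introduces the theorem with the sentence ``By combining Proposition~\ref{propry20} with Theorems~\ref{thmro10} and~\ref{thmsg30}, we obtain the following characterisation\ldots''.
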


\begin{remark} {\rm
In some game contingent claims $\MGCC(\Game)$, such as Example \ref{eqrx12}, the conditions of Theorem \ref{thmry30} cannot be satisfied because the set in \eqref{eqry30} is empty. This means that some game contingent claims $\MGCC(\Game)$ simply do not admit prices under our arbitrage-free assumptions. Occasionally this may be remedied if the identities of the holders (or the possibility of collusion) is known to everyone in advance, but in these cases $\MGCC(\Game)$ is equivalent to a (simpler) contract with less tranches. Another solution is to alter the definitions of arbitrage and weaken some of the pricing bounds. In any case, it is fair to admit that condition \eqref{eqry30} is not handy and thus we will present more explicit conditions in
Theorem \ref{thmrx10} below.}
\end{remark}

\subsection{Optimal Equilibrium and Arbitrage Prices}

From a purely game-theoretic perspective, the upper and lower prices are also the minimax and maximin values of the coalition $\Aset$ in the subgame of $\Game$ on $[t,T]$. Note that the expected discounted payoff under the martingale measure $\Qrob$ is used in this case. As shown in Section \ref{sec7da4}, the properties of minimax and maximin strategies may be interpreted using super-hedging arguments.

In general, the infimum in the upper price and the supremum in the lower price are not necessarily achieved. But if $\Game$ is known to have an optimal equilibrium $s^*$ then, from Corollary \ref{coraa03}, the maximin and minimax values for each player (or equivalently, the lower and upper prices of individual tranches) are achieved by $\widehat V_t(s^*)$, i.e., by the unique value of the game. Consequently, the vector of prices of individual tranches must coincide with the unique value of the game.

 As demonstrated by means of Example \ref{eqrx12}, the mere existence of optimal equilibria still does not lead to additive prices. However, Proposition \ref{propab01a1} in the appendix shows that if the game $\Game$ is also known to satisfy some further conditions (e.g., zero-sum), then coalition values exist and satisfy the additivity property. In that case, the prices of combined tranches are also unique and the price additivity holds.

We summarise these results in the following theorem. Note that the date $t \in [0,T)$ is fixed, but arbitrary.
We work under our standing assumptions, that is, Assumption \ref{assrm09} for the discrete-time case and Assumption \ref{assrm09} for the continuous-time case.

\begin{theorem} \label{thmrx10}
Consider a game $\Game$  and the associated game contingent claim $\MGCC(\Game)$. Assume that the strategy profile $s_{[0,t)}\in\Strat_{[0,t)}$ is chosen before time $t$ and $\stp(s)\geq t$. Let the expected discounted payoff under the unique martingale measure $\Qrob $ be denoted by $\widehat V_t(\tau,\si) = \EMM\big( \widehat V_{\stp} (\tau,\si)\,\big|\,\Filt_t\big)$.

Suppose that the subgame on the time interval $[t,T]$ has an optimal equilibrium $s^*=(\tau^*,\si^*)\in\Strat_t(s)$, meaning that
\begin{gather}
\widehat V^k_t (\tau^*,\si^*) = \essinf_{\si\in \Strat^{-k}_t(s)} \widehat V^k_t (\tau^*, \si) = \esssup_{\tau\in\Strat^k_t(s)}\widehat V^k_t (\tau ,\si^*),\quad\forall\,k\in\Mset.
\end{gather}
(i) Then, for any $k\in\Mset$, there is no arbitrage in $(B,S,k)$ if and only if the discounted time $t$
price of a single tranche $k$ satisfies
\[
\widehat \pi^k_t(\Game,s)=\widehat V^k_t (s^*).
\]
\\ (ii) Assume, in addition, that the optimal equilibrium $s^*$ satisfies the following condition
\begin{gather}\label{eqrx101}
\sum_{k\in\Mset} \widehat V^k_t (s^*) = \esssup_{s'\in\Strat_t(s)} \sum_{k\in\Mset} \widehat V^k_t (s').
\end{gather}
Then, for any $\Aset\in\Mset$, there is no arbitrage in $(B,S,\Aset)$ if and only if the discounted time $t$
price of a combined tranche $\Aset$ has the following additivity property
\begin{gather}\label{eqrx102}
\widehat \pi^\Aset_t(\Game,s)=\sum_{k\in\Aset}\widehat V^k_t (s^*) = \widehat V^\Aset_t (s^*).
\end{gather}
Furthermore, there is no immediate reselling arbitrage if and only if \eqref{eqrx102} is satisfied for all $\Aset\subseteq\Mset$.
\end{theorem}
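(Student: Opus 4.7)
The plan is to combine Theorems \ref{thmro10} and \ref{thmsg30}, which trap any arbitrage-free price $\widehat \pi^\Aset_t(\Game,s)$ inside the interval $[\underline \pi^\Aset_t(\Game,s),\overline \pi^\Aset_t(\Game,s)]$, with the equilibrium identities of $s^*$ and Proposition \ref{propry20}. The main task is to show that under the stated hypotheses these no-arbitrage bounds collapse to the single value $\widehat V^\Aset_t(s^*)$, after which the reselling-arbitrage statement is a direct consequence of Proposition \ref{propry20}.

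For part (i), fix $k\in\Mset$. Plugging $\tau = s^{*k}$ into $\underline\pi^k_t = \esssup_\tau \essinf_\si \widehat V^k_t(\tau,\si)$ and invoking the first equilibrium identity yields $\underline\pi^k_t \geq \widehat V^k_t(s^*)$; the dual argument with $\si = s^{*-k}$ gives $\overline\pi^k_t \leq \widehat V^k_t(s^*)$. Combined with the generic inequality $\underline\pi^k_t \leq \overline\pi^k_t$, both bounds collapse to $\widehat V^k_t(s^*)$, and Theorems \ref{thmro10} and \ref{thmsg30} translate this into the equivalence of the absence of arbitrage in $(B,S,k)$ with the equality $\widehat \pi^k_t(\Game,s) = \widehat V^k_t(s^*)$.

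For part (ii), I squeeze $[\underline\pi^\Aset_t,\overline\pi^\Aset_t]$ around $\widehat V^\Aset_t(s^*)$. The lower bound follows by testing $\tau=s^{*\Aset}$ in the outer essential supremum and invoking that the essinf of a sum dominates the sum of essinfs:
\[
\underline\pi^\Aset_t \geq \essinf_{\si\in\Strat^{-\Aset}_t(s)} \sum_{k\in\Aset} \widehat V^k_t(s^{*\Aset},\si) \geq \sum_{k\in\Aset}\essinf_{\si\in\Strat^{-\Aset}_t(s)} \widehat V^k_t(s^{*\Aset},\si) \geq \sum_{k\in\Aset}\widehat V^k_t(s^*),
\]
where the last step uses, for each $k\in\Aset$, that every profile $(s^{*\Aset\setminus\{k\}},\si)$ with $\si\in\Strat^{-\Aset}_t(s)$ lies in $\Strat^{-k}_t(s)$, so the inner essinf is bounded below by the equilibrium essinf for player $k$. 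The upper bound is the delicate step and is the place where the efficiency condition \eqref{eqrx101} enters. Testing $\si=s^{*-\Aset}$ in the outer essinf, I aim to show $\esssup_\tau \widehat V^\Aset_t(\tau,s^{*-\Aset})\leq \widehat V^\Aset_t(s^*)$. The key observation is that, for every $k\in -\Aset$ and every $\tau\in\Strat^\Aset_t(s)$, the tuple $(\tau, s^{*-\Aset\setminus\{k\}})$ belongs to $\Strat^{-k}_t(s)$, so the equilibrium identity for player $k$ forces $\widehat V^k_t(\tau,s^{*-\Aset})\geq \widehat V^k_t(s^*)$; summing over $k\in -\Aset$ gives $\widehat V^{-\Aset}_t(\tau,s^{*-\Aset})\geq \widehat V^{-\Aset}_t(s^*)$. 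The efficiency hypothesis \eqref{eqrx101}, rewritten as
\[
\widehat V^\Aset_t(\tau,s^{*-\Aset}) + \widehat V^{-\Aset}_t(\tau,s^{*-\Aset}) \leq \widehat V^\Aset_t(s^*) + \widehat V^{-\Aset}_t(s^*),
\]
then forces $\widehat V^\Aset_t(\tau,s^{*-\Aset})\leq \widehat V^\Aset_t(s^*)$ uniformly in $\tau$, so $\overline\pi^\Aset_t\leq \widehat V^\Aset_t(s^*)$.

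Chaining the two bounds with $\underline\pi^\Aset_t\leq \overline\pi^\Aset_t$ forces equality with $\widehat V^\Aset_t(s^*)$, and Theorems \ref{thmro10} and \ref{thmsg30} yield \eqref{eqrx102}. For the final assertion, note that, via part (i), the validity of \eqref{eqrx102} for every $\Aset\subseteq\Mset$ is precisely the additivity $\widehat\pi^\Aset_t = \sum_{k\in\Aset}\widehat\pi^k_t$, so Proposition \ref{propry20} equates this with the absence of immediate reselling arbitrage. The principal obstacle will be the upper-bound step of (ii): without the efficiency hypothesis \eqref{eqrx101} a coalition $\Aset$ could in principle deviate jointly to lift its aggregate payoff above $\widehat V^\Aset_t(s^*)$, which is exactly the pathology exhibited by Example \ref{eqrx12}.
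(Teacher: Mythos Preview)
Your proof is correct and follows essentially the same route as the paper. The paper's own proof is very brief: it cites Corollary \ref{coraa03} and Proposition \ref{propab01a1} from the appendix and then appeals to Theorems \ref{thmro10} and \ref{thmsg30} together with Proposition \ref{propry20}. Your argument simply inlines the content of those appendix results; the lower-bound squeeze in (ii) is exactly the derivation of \eqref{eqab01a2} in Proposition \ref{propab01a1}, and your upper-bound step combining the guaranteed payoff for $-\Aset$ with the efficiency condition \eqref{eqrx101} is exactly the derivation of \eqref{eqab01a3}.
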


\begin{proof}
(i) Corollary \ref{coraa03} from the appendix shows that all optimal equilibria achieve the minimax and maximin values simultaneously. Since we know from Theorems \ref{thmro10} and \ref{thmsg30} that an arbitrage price must lie between the upper and lower prices, which are identical to the minimax and maximin values, an arbitrage price must in fact coincide with the unique value achieved by any optimal equilibrium.

\noindent (ii) Proposition \ref{propab01a1} from the appendix states that, under \eqref{eqrx101}, the coalition values exist and enjoy the additivity property. Formula \eqref{eqrx102} thus follows immediately from Theorems \ref{thmro10} and \ref{thmsg30}, since the arbitrage price of a combined tranche must coincide with the coalition value. The lack of the immediate reselling arbitrage is a consequence of Proposition \ref{propry20} and the price additivity.
\end{proof}

\subsection{Classes of Solvable Multi-Person Games}

Part (ii) in Theorem \ref{thmrx10} is of great significance, since it further justifies the choice of the optimal equilibrium as a solution concept when studying multi-person game contingent claims, as it produces unique arbitrage prices for individual tranches.
In general, as shown in Section \ref{countr}, this would not be possible to achieve with weaker solution concepts, such as the Nash equilibrium.

Let us thus focus on multi-player games that admit an optimal equilibrium under the unique martingale measure $\Qrob$.
In this situation, the existence of unique and additive arbitrage prices for single tranches of a multi-person game claim
hinges on condition \eqref{eqrx101}. Hence the following natural problem arises: describe sufficiently large classes of multi-player games, which admit an optimal equilibrium and satisfy condition \eqref{eqrx101}.

This issue is of interest even in the case of two-person non-zero-sum Dynkin games, although it is fair to say that in most existing papers devoted to Dynkin games, the authors focus on the existence of a Nash equilibrium (see, for instance, Hamad\`{e}ne and Hassani \cite{HH-2011,HH-2012}, Hamad\`{e}ne and Zhang \cite{Hamadene},  Laraki and Solan \cite{Laraki}, Peskir \cite{Peskir} or Ohtsubo \cite{Ohtsubo}). In particular, in \cite{Hamadene}, the authors examine indifference pricing of financial contracts, which are based on non-zero-sum Dynkin games. This should be contrasted with the case of a zero-sum Dynkin game, for which the arbitrage
valuation theory can be applied in a relatively straightforward manner, as demonstrated by Kifer \cite{Kifer1} (for related results in this vein, see also \cite{Dolinsky,Dolinsky11,Kallsen1,Kallsen2,Kifer2,Kuehn,Kyprianou}).

For the multi-player case with $m \geq 3$, an explicit specification of a sensible game admitting an optimal equilibrium is a challenging issue, which was studied in some recent works (see \cite{Guo2,Guo3,Nie}). Due to limited space, we do not discuss this issue in detail here, and we instead restrict ourselves to providing some examples of multi-person games with an optimal equilibrium satisfying condition~\eqref{eqrx101}. In particular, in Section \ref{ssc4.5}, we solve a particular valuation problem for a multi-person contract with puttable tranches, which was described in Section \ref{mutyy}.

As a first example of a game admitting an optimal equilibrium, we may quote any two-player zero-sum game in which only one player has meaningful decisions to make or, equivalently, any single player game. In fact, this class of relatively simple stochastic games covers most of conventional financial derivatives in which the `issuer' has no right to make any decisions about a sold contract.
Obviously, we have in mind here plain-vanilla American options, for which the valuation hinges on solving an optimal stopping problems, but also more complex derivatives of an American style with some `exotic' features, which may allow the holder
to make other relevant decisions, not only exercising (e.g., an American chooser option). Original examples of more sophisticated multi-player games satisfying the assumptions of part (ii) in Theorem  \ref{thmrx10} are collected in the following proposition.

\begin{proposition}
Assume that a game $\Game $ belongs to either of the following classes of multi-player games: \hfill \break
(i)  zero-sum (or constant-sum) games that admits an optimal equilibrium, \hfill \break
(ii) multi-period redistribution games, denoted as $\MRG(X,\alpha)$, \hfill \break
(iii) affine stopping games, denoted as $\ASG(X,G)$, with a $\Kmot$-matrix $G$ with non-negative column sums.\hfill \break
Then ${\cal G}$ has an optimal equilibrium satisfying condition \eqref{eqrx101}
\end{proposition}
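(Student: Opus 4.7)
The plan is to treat each of the three classes separately, with case (i) being immediate and cases (ii) and (iii) relying on the existence of optimal equilibria already proved in the companion papers \cite{Guo1,Guo2,Guo3}. The remaining task in those two cases is to verify that the equilibrium identified there is \emph{Pareto-efficient in the aggregate}, in the sense of \eqref{eqrx101}; that is the content of condition \eqref{eqrx101}, which says that $s^*$ not only equilibrates individual tranches but also maximises the total discounted payoff.

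For case (i), if $\Game$ is constant-sum then $\sum_{k\in\Mset} V^k_{\stp(s)}(s)$ is a deterministic constant $C$, so $\sum_{k\in\Mset}\widehat V^k_t(s) = \EMM(B^{-1}_{\stp(s)}C\,|\,\Filt_t)$. In the zero-sum case this is identically zero. In the constant-sum case $B^{-1}C$ is a $(\Qrob,\FF)$-martingale, and the optional sampling theorem shows that its conditional expectation at $\stp(s)$ is independent of $s\in\Strat_t(s)$. Either way, \eqref{eqrx101} holds trivially for any optimal equilibrium $s^*$.

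For case (ii), the multi-period redistribution games $\MRG(X,\alpha)$ of \cite{Guo2,Guo3} are constructed so that the aggregate payoff equals a common quantity $X_{\stp(s)}$, invariant under the redistribution parameter $\alpha$, since the redistribution mechanism is internal and conservative. Consequently $\sum_{k\in\Mset}\widehat V^k_t(s)$ depends on $s$ only through the termination time $\stp(s)$, and the right-hand side of \eqref{eqrx101} reduces to $\esssup_{s'\in\Strat_t(s)}\EMM(B^{-1}_{\stp(s')}X_{\stp(s')}\,|\,\Filt_t)$, a single-agent optimal stopping problem for the discounted process associated with $X$. The existence result of \cite{Guo3} produces an optimal equilibrium $s^*$ whose termination time is precisely the optimal stopping time for this problem; this $s^*$ therefore attains the essential supremum, giving \eqref{eqrx101}.

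For case (iii), affine stopping games $\ASG(X,G)$ have payoffs of an affine form in which the matrix $G$ encodes cross-effects of stopping decisions. Summing over players, the coefficient in $\sum_{k\in\Mset} V^k_{\stp(s)}(s)$ of the indicator that player $j$ has stopped equals the $j$th column sum of $G$; the non-negativity of these column sums then implies that the total payoff is monotone non-decreasing in each player's decision to stop, so that the aggregate is maximised by the configuration in which every player stops at her respective optimal time. The $\Kmot$-matrix structure of $G$, as developed in \cite{Guo3}, is precisely what forces the equilibrium stopping configuration to coincide with this aggregate-maximising one. I expect case (iii) to be the main obstacle, since cases (i) and (ii) reduce to single-agent optimisation, whereas here one must carefully re-read the definition of $\ASG(X,G)$ and of its optimal equilibrium in \cite{Guo3} and combine the $\Kmot$-structure with the column-sum hypothesis to align the equilibrium outcome with the unconstrained aggregate maximum.
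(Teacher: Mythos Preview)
Your approach and the paper's diverge in ambition. The paper's proof is purely citational: for (i) it observes that both sides of \eqref{eqrx101} vanish (or equal the same constant), and for (ii) and (iii) it simply invokes Theorems~2 and~3 of \cite{Guo2} and Theorem~3.2 of \cite{Guo3}, respectively, which already establish both the existence of an optimal equilibrium and the aggregate-maximisation property \eqref{eqrx101}. No structural argument is attempted in the paper.

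Your sketch instead tries to explain \emph{why} those cited results should hold, which is more informative but introduces issues. In (i), the claim that $B^{-1}C$ is a $(\Qrob,\FF)$-martingale is false: $B$ is the savings account, so $B^{-1}_tC$ is deterministic and typically strictly decreasing, and $\EMM(B^{-1}_{\stp(s)}C\,|\,\Filt_t)$ genuinely depends on $\stp(s)$ when $C\neq 0$. The constant-sum case is therefore not as immediate as you (or, to be fair, the paper) suggest unless one assumes $B\equiv 1$ or interprets ``constant-sum'' at the level of discounted payoffs. In (ii), your reduction of the aggregate to a single-agent optimal stopping problem is plausible for the zero-sum variant, but the paper explicitly distinguishes zero-sum and non-zero-sum $\MRG$ (citing separate theorems for each), so your conservation argument does not obviously cover the non-zero-sum case, and the additional claim that the equilibrium termination time coincides with the single-agent optimal stopping time requires verification from \cite{Guo2}. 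In (iii) you correctly flag the argument as incomplete; the column-sum monotonicity heuristic is reasonable intuition but not a proof.

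In short, your plan is a roadmap towards \emph{reproving} the cited theorems, whereas the paper treats those theorems as black boxes. If a self-contained argument is the goal, the gaps above need to be closed by returning to the precise definitions and proofs in \cite{Guo2,Guo3}; otherwise, a direct citation as in the paper suffices.
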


\begin{proof}
(i) This part is rather obvious, since both sides in \eqref{eqrx101} vanish identically (or are equal to the same constant)
for a zero-sum (or a constant-sum) game.

\noindent (ii) Multi-period redistribution games $\MRG(X,\alpha)$ are studied in \cite{Guo2}, where it was shown that these
zero-sum (non-zero-sum, resp.) redistribution games have an optimal equilibrium 
by virtue of Theorem 2 (Theorem 3, resp.).

\noindent (iii)
Recall that $\Kmot$-matrix is a square matrix with a non-negative determinant, positive proper principal minors, and with  non-positive off-diagonal terms (see Definition 3.2 in \cite{Guo3}). The desired properties of an affine stopping game $\ASG(X,G)$ with a $\Kmot$-matrix $G$ were established in Theorem 3.2 in \cite{Guo3}.
\end{proof}

In the same way a general game contingent claim is defined, one can define a multi-person game contingent claim based on a
multi-player game $\Game $.  Of course, a classic example of this approach is the definition of the two-person game claim, which corresponds to a zero-sum two-person Dynkin game. It should be mentioned, however, that in a multi-period (or continuous-time) setup, the redistribution and affine games introduced in Guo \cite{GuoPhD}  and Guo and Rutkowski \cite{Guo1,Guo2,Guo3} are defined recursively, meaning that the payoffs at settlement depend on the value of analogous virtual game continued after the settlement date (this concept is reminiscent of a continuation value for an American or game claim). For this reason, the optimal equilibria are obtained either through the backward induction or by solving a multi-dimensional reflected BSDE (see Nie and Rutkowski \cite{Nie}). The latter approach hinges on a multi-dimensional extension of the one-dimensional doubly-reflected BSDE for the Dynkin game studied in Cvitani\'c and Karatzas \cite{Cvitanic}.

\subsection{Pricing of Multi-Person Contracts with Puttable Tranches} \label{ssc4.5}

We conclude this work by examining arbitrage pricing of a multi-person contract with puttable tranches. Recall that the mechanism of this contract was outlined in Section \ref{mutyy}. The key idea is to reduce the valuation problem into the search of optimal equilibria for the associated multi-player stochastic game under a unique martingale measure for the underlying market model. We then apply the results from our previous papers \cite{Guo1,Guo2} to find the solution to the game. In fact, the results from \cite{Guo2} need to be slightly extended, since the redistribution games studied in \cite{Guo2} are formally postulated to terminate as soon as at least one of the players decides to exercise (i.e., stop the game), whereas in Section \ref{mutyy} we introduced a contract, which always continues till its maturity date $T = T_{n+1}$. Note that we work here with either a discrete- or continuous-time setting.

Let us consider the date $T_n$. Suppose that there exists a unique martingale measure $\Q$ for the underlying market model, say, the CRR binomial model or the Black and Scholes model. We assume, for simplicity, that the interest rate is null (otherwise, one would work with the discounted values). Then the prices $P^1_n,\ldots, P^m_n$ of tranches at time $T_n$, before decisions of all parties at time $T_n$ were made (dubbed the {\it continuation values} at time $T_n$), are prices of European or American options and thus they can be found by the usual arbitrage pricing method independently for each tranche. For instance, if the terminal payoffs of European contingent claims are $X^1_{n+1}, \dots , X^m_{n+1}$, then we obtain the equality $P^i_n =\EMM( X^{i}_{n+1}\,|\,\Filt_{T_n})$ for every $i$ where $\mathbb{F} = (\Filt_{t})_{t \in [0,T]}$ is the underlying filtration.

A strategy for the holder of tranche $i$ on $[0,T_n]$ is denoted by $s^i$. In the present setting, it
consists of choosing between `hold' or `put' at each date $T_1,\ldots,T_n$. These decisions are made
 by the current holder of tranche $i$ as time progresses, which may not be the same person, due to the nature of the contract.
Let us denote a generic strategy profile by $s$ and its restriction to $[T_l,T_n]$ by $s_l$.
It will be convenient to represent a strategy profile $s_l$ as $s_l = (\tau_l ,\sigma_l )$ where $\tau_l = s^i_l$ ($\sigma_l = s_l^{-i}$, resp.) is the strategy of player $i$ (the strategy profile for all other players, resp.) on $[T_l,T_n]$.
We denote by $\Eset_l (s_l)$ the set of players who decide to put the tranche at time $T_l$ when a strategy profile
$s_l$ is played on $[T_l,T_n]$. To proceed further, we need to introduce some notation, which is adapted from \cite{Guo2}. For any $X = (X^1, \dots, X^m) \in \R^m$ and an arbitrary subset $\Eset\subseteq\Mset$, we define the hyperplane
\[
\H (\Eset , X,c) := \bigg\{ {x} \in \R^m :\, \sum\limits_{i=1}^m x_i = c   \ \mbox{and}\ x_i = X^{i}, \ \forall\, i\in \Eset\bigg\}
\]
and the simplex
\[
\S (X,c):=\bigg\{ {x} \in \R^m :\, \sum\limits_{i=1}^m x_i = c   \ \mbox{and}\
 x_i\geq X^{i}, \, \forall \, i \in {\cal M} \bigg\}.
\]
Let an $\Filt_{T_l}$-measurable random variable $(C^1_l, \dots, C^m_l)$ and an $\Filt_{T_l}$-measurable subset $\Eset_l$ of $\Mset $ be given. Then we define the random, $\Filt_{T_l}$-measurable hyperplane $\H (\Eset_l, X_l,c_l)$ and the random $\Filt_{T_l}$-measurable simplex $\S (X_l,c_l)$, where $c_l := \sum_{i=1}^m C^i_l$. Let us finally observe that, since for the uniform redistribution we have that $\alpha_i = \frac{1}{m}$ for all $i$, the inner product given by equation (3.9) in \cite{Guo2} reduces to the Euclidean inner product and thus the projection $\pi^0$ utilised in Theorem 3.12 in \cite{Guo2} is the usual orthogonal projection in $\R^m$; it will be henceforth denoted as $\pi$.
The next definition hinges on Lemma 3.11 in \cite{Guo2}, which furnishes an explicit characterisation
\eqref{sftvf2} of the effective (or {\it modified}) payoff in a zero-sum single-period redistribution game.
We denote by $V^*_{t+1}$ the {\it value} of the subgame started at $T_{t+1}$ and thus the definition
of the game is recursive (its correctness is formally supported by Proposition \ref{pvrpry20} where the existence and uniqueness
of the value process is established).

\begin{definition} \label{deftuu} {\rm
Assume that an $\mathbb{F}$-adapted process $X_l = (X^1_l, \dots , X^m_l),\, l=1, \dots ,n$ and the continuation value
$P_n = (P^1_n, \dots , P^m_n)$ at time $T_n$ are given. Let $V^i_l(s_l)$ denote the $\Filt_{T_l}$-conditional expected risk-neutral payoff at time $T_l$ for the current holder of tranche $i$ given that a strategy profile $s_l$ is played on $[T_l,T_n]$. In the {\it multi-player game with puttable tranches}, for any strategy profile $s$,  $V_l(s_l)$ is defined recursively through the following expression, for every $l=1, \dots , n$ and $i=1,\ldots,m$,
\begin{gather} \label{sftvf2}
 V^i_l(s_l) := X^i_l \I_{\{\vartheta^i  (s_l) =l\}} + \bigg(\big[\pi_{\H (\Eset_l (s_l), X_l, p_l)}(P_l)\big]^i-P^i_l + \EMM \big(  V^i_{l+1}(s_{l+1})\,|\,\Filt_{T_l} \big)\bigg)\I_{\{\vartheta^i (s_l) >l\}}
\end{gather}
where the equality $\vartheta^i (s_l)= l$ (the inequality $\vartheta^i (s_l) >l$, resp.) means that the holder
of the $i$th tranche decides to put (decides not to put, resp.) at time $T_l$ and
\begin{gather} \label{sfitvf}
p_l :=\sum_{i=1}^m P^i_l
\end{gather}
where $P^i_l :=  \EMM \big( V^{i*}_{l+1} \,|\,\Filt_{T_l} \big)$ for every $i=1,\dots, m$ and $l=1,\dots ,n-1$.
Furthermore, we set $V_0(s) = \EMM( V_{1}(s_1))$ since putting a tranche at time 0 is not allowed.}
\end{definition}

Observe that the {\it continuation value} of tranche $i$ at time $T_l$ is given as $P^i_l =\EMM(V^{i*}_{l+1}\,|\,\Filt_{T_l})$
for every~$i$. In equation \eqref{sftvf2}, the term $X^i_l$ represents the payoff received by holder of tranche $i$ when he decides to put at time $T_l$, and the quantity in front of the indicator $\I_{\{\vartheta^i >l\}}$ represents the expected payoff of holder when he decides not to put the tranche.  Upon summation, we can represent the right-hand side in \eqref{sftvf2} as follows
\begin{gather} \label{sftvf}
 V^i_l(s_l) :=  \EMM \bigg( X^i_{\vartheta^i (s_l)} + \sum_{t=l}^{\vartheta^i (s_l)-1} \Big[\pi_{\H (\Eset_t (s_t), X_t, p_t)} \big(\EMM \big(V^*_{t+1}\,|\,\Filt_{T_t} \big)  \big)-\EMM \big(V^*_{t+1}\,|\,\Filt_{T_t} \big) \Big]^i\,\bigg|\,\Filt_{T_l} \bigg)
\end{gather}
where $\vartheta^i (s_l)$ represents the index such that the random time $T_{\vartheta^i (s_l)} \geq T_l$ is the first instance at which tranche $i$ is put. Hence the summation term in \eqref{sftvf} represents the accumulated redistribution holder $i$ is required to pay as a result of other holders putting their tranches during $[T_l,T_{\vartheta^i (s_l)})$.

It should be noted that game described in Definition  \ref{deftuu} is nothing more than a suitable variant of the game given by Definition 4.1 in \cite{Guo2}. The only difference is that, in \cite{Guo2}, the multi-person game is stopped as soon as at least one player decides to put his tranche. In that case, the expected payoff is determined using, in particular, the value process of a virtual game played from this moment onwards. Therefore, Definition  4.1 in \cite{Guo2} refers in fact to a family of the embedded games that start at any date $T_l$; each of them ends as soon as someone `exercises' (i.e., a holder puts his tranche, using the present terminology).  By contrast, in Definition \ref{deftuu}, we formally specify a single game, which is always played till expiration date $T$, but is otherwise specified in the same way as the game studied in \cite{Guo2}. It is thus natural to conjecture that the two games share the same value process and an optimal equilibrium. The validity of this conjecture will be established in Proposition \ref{pvrpry20}. The proof of this result utilises Theorem 1 in \cite{Guo1} (see also Theorem 3.12 in \cite{Guo2}).

We acknowledge that the game introduced in  Definition  \ref{deftuu} departs from the usual way
in which multi-person extensions of classic Dynkin games are defined (see, for instance, Hamad\`{e}ne and Hassani \cite{HH-2011,HH-2012} or Karatzas and Li \cite{KaratzasLi}),
since we postulate that the expected payoff at time $T_l$ depends not only on exogenously specified payoff
processes $X^1, \dots , X^m$, but also on the value of the embedded game started at $T_l$. This way of defining a game
could seem unusual in the context of the game theory, but it is perfectly consistent with the arbitrage pricing
of derivative securities and the concept of the market value, where the future decisions of investors need to be taken into account. Also, in the case of a two-person game option, the corresponding Dynkin game is formally defined using only
the payoff processes, say $X^1$ and $X^2$. However, an equivalent formulation that utilises processes $X^1 \geq X^2$
and the continuation value $P$ of the contract  (or, equivalently, the associated Dynkin game) is also possible
and in fact it provides a better insight into competitive character of a game option. However, since the gain of one party matches the loss of the counterparty, the trivial mechanism of redistribution is already implicit in the two-person
version of the game. Of course, when $m=2$, the game of Definition \ref{deftuu} reduces to the classic Dynkin game and
a multi-person contract with puttable tranches becomes a two-person game option.

Since it is possible for the issuer to simultaneously hold multiple tranches at one point, the price additivity
(see Definition \ref{defrx45}) becomes an important issue. We hence need to check a sub-zero-sum condition on the value $V^*_0$. The most convenient solution is to enforce a restriction such as \eqref{rfvfv}, so that it will never be optimal for all holders to simultaneously put, and thus the contract will be a zero-sum game between the holders.

To ensure this property in the case of a contract described in Section \ref{mutyy}, one should choose strikes $K_1, \dots , K_m$, classes of options (calls or puts), and the {\it put payoffs} $X^1_l, \dots , X^m_l,\, l=1, \dots ,n$ in such a way that it will be not `optimal' (in the sense of maximisation of the value of a tranche) for all holders to simultaneously give their tranches back to the issuer at some date $T_l$.

\begin{proposition} \label{pvrpry20}
Assume that the zero-sum condition
\begin{gather} \label{rfvfv}
\sum_{i=1}^m X^i_l \leq  \sum_{i=1}^m P^i_l
\end{gather}
is satisfied recursively for $l=1, \dots , n$. Then the multi-player game with puttable tranches admits an optimal equilibrium $s^*$ and a unique value process $V^*=V(s^*)$.
\end{proposition}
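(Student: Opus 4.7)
The plan is to argue by backward induction on the puttable dates $T_n, T_{n-1}, \ldots, T_1$, reducing each one-step problem to a single-period redistribution game of the type analysed in Theorem~1 of \cite{Guo1} (equivalently, Theorem~3.12 of \cite{Guo2}). At the terminal decision date $T_n$, the continuation values $P^i_n$ are the uniquely determined arbitrage prices of the underlying European or American options, and assumption \eqref{rfvfv} at level $n$ gives precisely the inequality $\sum_i X^i_n \leq p_n$ required by the single-period theorem. That theorem supplies an $\Filt_{T_n}$-measurable equilibrium action $s^*_n$ together with an $\Filt_{T_n}$-measurable value vector $V^*_n$.

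For the induction step, suppose $V^*_{l+1}$ and an optimal equilibrium $s^*_{l+1}$ for the subgame on $[T_{l+1}, T_n]$ have been constructed, and set $P^i_l := \EMM(V^{i*}_{l+1}\,|\,\Filt_{T_l})$. Conditionally on $\Filt_{T_l}$, the put-versus-hold choice captured by \eqref{sftvf2} is exactly a single-period redistribution game with put payoffs $X_l$ and continuation values $P_l$; assumption \eqref{rfvfv} at level $l$ again supplies the zero-sum hypothesis, so the same single-period theorem yields an $\Filt_{T_l}$-measurable equilibrium action $s^*_l$ and a value vector $V^*_l$. Concatenating the measurable selections into $s^* := (s^*_1, \ldots, s^*_n)$ and setting $V^* := V(s^*)$, the global optimal-equilibrium property for $s^*$ then follows from a standard Bellman-type verification: for a fixed player $i$ and any strategy $\tau^i$, conditioning on $\Filt_{T_l}$ shows that deviating unilaterally at $T_l$ while keeping $s^{*,i}$ on $[T_{l+1}, T_n]$ cannot improve the conditional expected payoff (by the single-period optimality at stage $l$), and a downward induction on $l$ extends this to arbitrary deviations across the whole horizon. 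Uniqueness of $V^*$ follows from the uniqueness of the value in the single-period redistribution game at each layer.

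The main technical obstacle is the propagation of the zero-sum identity through the induction: one needs $\sum_i V^{i*}_l = p_l$ at every stage, since this identity, together with $P^i_{l-1} = \EMM(V^{i*}_l\,|\,\Filt_{T_{l-1}})$, preserves the total-mass condition on the continuation values at the preceding date and thus keeps \eqref{rfvfv} applicable. I expect this identity to follow directly from \eqref{sftvf2}: on the put-branch $X^i_l$ coincides with $[\pi_{\H(\Eset^*_l, X_l, p_l)}(P_l)]^i$ by the very definition of $\H(\Eset^*_l, X_l, p_l)$, whereas on the hold-branch the term $-P^i_l + \EMM(V^{i*}_{l+1}\,|\,\Filt_{T_l})$ vanishes by the choice of $P^i_l$; summing over $i$ and using that the projection lands on the hyperplane of total mass $p_l$ gives $\sum_i V^{i*}_l = p_l$. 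Beyond this bookkeeping, I would invoke the measurable-selection and, in the continuous-time variant, regularity arguments already carried out in \cite{Guo1,Guo2} to splice the pointwise constructions into a genuinely $\FF$-adapted, $\Hist$-predictable strategy profile with an RCLL value process.
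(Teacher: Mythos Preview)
Your approach is essentially the paper's: backward induction in $l$, applying the single-period redistribution result (Theorem~1 of \cite{Guo1}, Theorem~3.12 of \cite{Guo2}) at each date with continuation data $P^i_l=\EMM(V^{*i}_{l+1}\,|\,\Filt_{T_l})$, and then a verification that the concatenated profile is an optimal equilibrium on the full horizon. Your observation that $\sum_i V^{*i}_l=p_l$ propagates (because the value is the projection onto the simplex $\S(X_l,p_l)$ and the put coordinates agree with the corresponding components of that projection) is correct and is implicit in the paper's identity $V_l(s^*_l)=\pi_{\S(X_l,p_l)}(P_l)$.

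One point to tighten: your Bellman verification only sketches the Nash direction (player $i$ deviates, others play $s^{*,-i}$). An \emph{optimal} equilibrium also requires the guaranteed-payoff inequality $V^i_l(\tau^*_l,\si_l)\geq V^i_l(s^*_l)$ for arbitrary $\si_l\in\Strat^{-i}$. The paper handles both directions by a short case split at each step (according to whether the relevant strategy puts tranche $i$ at $T_l$ or not), combining the single-period saddle inequalities with the induction hypothesis \eqref{vrttx}. The argument is symmetric to what you wrote, but you should state it. Also, the remarks about measurable selection and RCLL regularity are unnecessary here: the decision dates are the finite set $T_1,\ldots,T_n$, the single-period value is given by an explicit projection formula, and adaptedness is immediate.
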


\begin{proof}
In the first step, we consider the penultimate date $T_n$. Then the continuation values $P^1_n ,\dots , P^m_n$ are
obtained by computing the continuation values for all tranches at time $T_n$ using the standard arbitrage pricing method.
Consequently, the multi-person game with puttable tranches can be reduced at time $T_n$ to a single-period zero-sum redistribution game $\ZRG(X_n,P_n, \alpha)$ with $\alpha_i = \frac{1}{m}$ for all $i$ (see Definitions 3.1 and 3.10 in \cite{Guo2}). Therefore, the existence of an optimal equilibrium $s^*_n = (\tau^*_n,\sigma^*_n)$ (hence also of a unique value $V^*_n$ for the game at time $T_n$) follows from Theorem 1 in \cite{Guo1} (see also Theorem 3.12 in \cite{Guo2}). Specifically, the unique value for each individual tranche at time $T_n$ equals
\begin{gather*}
V^*_n =V_1(s^*_n)=\pi_{\S (X_n,p_n)}(P_n)
\end{gather*}
provided that
\begin{gather*}
\sum_{i=1}^m X^i_n \leq p_n = \sum_{i=1}^m P^i_n,
\end{gather*}
that is, it is not optimal for all players to put tranches at time $T_n$.

We now proceed by backward induction. We fix $l=1, \dots ,n-1$, and we suppose that the value at time $T_{l+1}$ is $V^*_{l+1}=V_{l+1}(s^*_{l+1})$ where a strategy profile $s^*_{l+1}$ is an optimal equilibrium for the game on the time interval $[T_{l+1},T_n]$, meaning the following inequalities hold for every $i$ and all strategies $\tau_{l+1}$ and $\sigma_{l+1}$
\begin{gather} \label{vrttx}
V^i_{l+1}(\tau^*_{l+1},\si_{l+1}) \geq V^{*i}_{l+1} \geq V^i_{l+1}(\tau_{l+1},\si^*_{l+1}).
\end{gather}
Recall that the vector of continuation values at time $T_l$ is given by $P^i_l :=\EMM(V^{*i}_{l+1}\,|\,\Filt_{T_l})$ for all $i$.
Let us then suppose that the value $V^*_{l+1}$ of the game at time $T_{l+1}$, and thus also the continuation values $P^1_l,\ldots, P^m_l$ of tranches, before players' decisions at time $T_l$ were made, are given.

 We define the strategy profile $s^*_{l}$ on $[T_l,T_n]$ by postulating that: (i)  the set of players who decide to put the tranche at the time $T_l$ equals
\begin{gather} \label{kiki}
\Eset_l(s^*_{l}) = \big\{\, i \in \{ 1,\ldots,m \} : \big[\pi_{\S (X_l,p_l)}(P_l)\big]^i = X^i_l \big\}
\end{gather}
and (ii) $s^*_{l}$ restricted to $[T_{l+1},T_n]$ coincides with $s^*_{l+1}$. Using \eqref{sftvf}--\eqref{sfitvf} and property (ii), we obtain
\begin{gather*}
 V_l(s^*_l) = \pi_{\H (\Eset_l (s^*_l), X_l,p_l)}( P_l )-P_l+\EMM \big( V_{l+1}(s^*_{l+1})\,|\,\Filt_{T_l} \big)
 = \pi_{\H ( \Eset_l (s^*_l),X_l,p_l)}(P_l)
\end{gather*}
where the set $\Eset_l (s^*_l)$ is given by \eqref{kiki} and $p_l$ is given by \eqref{sfitvf}.
We aim to demonstrate that the strategy profile $s^*_{l}$ is an optimal equilibrium on $[T_l,T_n]$ and $V^*_l :=V_l(s^*_{l})$ is the game value at time $T_l$. To this end, we need to show that, for every $i$,
\begin{gather} \label{tutut}
 V^i_{l}(\tau^*_{l},\si_{l})  \geq V^i_l(s^*_{l}) = V^i_l (\tau^*_l,\sigma^*_l) \geq V^i_{l}(\tau_{l},\si^*_{l})
\end{gather}
 for arbitrage strategy profiles $\tau_{l}$ and $\sigma_{l}$ on $[T_l,T_n]$, not necessarily coinciding with $\tau^*_{l+1}$
and $\sigma^*_{l+1}$ on $[T_{l+1},T_n]$.

 From Theorem 1 in \cite{Guo1} (or Theorem 3.12 in \cite{Guo2}), it is known that condition (i) characterises an optimal equilibrium in the single-period game $\ZRG(X_l,P_l, \alpha )$ and thus
\begin{gather} \label{drfe}
 V_l(s^*_l) = \pi_{\H (\Eset_l (s^*_l), X_l,p_l)}(P_l) = \pi_{\S (X_l,p_l)}(P_l)
\end{gather}
provided that $\sum_{i=1}^m X^i_l \leq p_l$. To be more specific, for every $i=1,\dots ,m$ and all strategies $\tau_{l}$ and $\sigma_{l}$ such that $\tau_{l+1} = \tau^*_{l+1}$ and $\sigma_{l+1} = \sigma^*_{l+1}$, we have
\begin{align}  \label{drfe1}
\bar{V}^i_l(\tau^*_{l},\si_{l}) &= \big[ \pi_{\H (\Eset_l(\tau^*_{l},\si_{l}), X_l,p_l)}(P_l) \big]^i \geq
 [ \pi_{\S (X_l,p_l)}(P_l)]^i \\ & \geq \big[ \pi_{\H (\Eset_l(\tau_{l},\si^*_{l}),X_l,p_l)}(P_l) \big]^i = \bar{V}^i_l(\tau_{l}, \si^*_{l}) \nonumber
\end{align}
where we write $\bar{V}^i_l$, rather than $V^i_l$, to emphasise that we deal here with a single-period game on $[T_l,T_{l+1}]$, which is obtained by postulating that all strategy profiles that appear in the formula above 
necessarily coincide with $s^*_{l+1}$ on $[T_{l+1},T_n]$.

To show the Nash equilibrium property of $s^*_l$ at time $T_l$ for the multi-player game given by Definition \ref{deftuu},
let us take any strategy profile $\tau_l$ for the holder of tranche $i$. If $\tau_l$ implies putting tranche $i$ at time $T_l$, then from \eqref{drfe1}, we obtain
\begin{align*}
V^i_l(\tau_{l},\si^*_{l}) &= X^i_l = \big[ \pi_{\H (\Eset_l(\tau_{l},\si^*_{l}), X_l,p_l)}(P_l)\big]^i\leq [\pi_{\S (X_l,p_l)}(P_l)]^i=V^i_l(s^*_{l}).
\end{align*}
If $\tau_l$ does not imply putting tranche $i$ at time $T_l$, then from \eqref{vrttx}, \eqref{drfe} and \eqref{drfe1}, we obtain
\begin{align*}
V^i_l(\tau_{l},\si^*_{l}) &= \big[ \pi_{\H (\Eset_l(\tau_{l},\si^*_{l}), X_l,p_l)}(P_l)\big]^i- P_l^i + \EMM\big(V^i_{l+1}(\tau_{l+1},\si^*_{l+1})\,|\,\Filt_{T_l}\big)\\
&\leq [\pi_{\S (X_l,p_l)}(P_l)]^i- P^i_l + \EMM(V^{*i}_{l+1}\,|\,\Filt_{T_l})=V^i_l(s^*_{l}) .
\end{align*}
To verify the guaranteed payoff inequality, we take an arbitrary strategy profile $\si_l$ and we use similar arguments as above. If $\tau^*_l$ implies putting tranche $i$ at time $T_l$, then from \eqref{drfe1}, we obtain
\begin{align*}
V^i_l(\tau^*_{l},\si_{l}) &= X^i_l = \big[ \pi_{\H (\Eset_l(\tau^*_{l},\si_{l}), X_l,p_l)}(P_l)\big]^i\geq [\pi_{\S (X_l,p_l)}(P_l)]^i=V^i_l(s^*_{l}).
\end{align*}
If $\tau^*_l$ does not imply putting tranche $i$ at time $T_l$, then from \eqref{vrttx}, \eqref{drfe} and \eqref{drfe1}, we obtain
\begin{align*}
V^i_l(\tau^*_{l},\si_{l}) &= \big[ \pi_{\H (\Eset_l(\tau^*_{l},\si_{l}), X_l,p_l)}(P_l)\big]^i- P_l^i + \EMM\big(V^i_{l+1}(\tau^*_{l+1},\si_{l+1})\,|\,\Filt_{T_l}\big)\\
&\geq [\pi_{\S (X_l,p_l)}(P_l)]^i- P^i_l + \EMM(V^{*i}_{l+1}\,|\,\Filt_{T_l})=V^i_l(s^*_{l}) .
\end{align*}
We have thus shown that  \eqref{tutut} holds for every $i$ and all strategies $\tau_{l}$ and $\sigma_{l}$. This means that $s^*_l$ is indeed an optimal equilibrium on $[T_l,T_n]$ and $V^*_l$ is the value of the game at time $T_l$. Finally, the value of the game at time 0 equals $V^*_0=\EMM (V^*_1)$, since putting a tranche at time 0 is not allowed.
\end{proof}

One could also consider a simplified version of the game of Definition \ref{deftuu}, which is specified as follows: we assume that the vector $P_n = (P^1_n, \dots , P^m_n)$ is given and, for any strategy profile $s$ on $[0,T_n]$, we set
\begin{gather} \label{yy778}
V_n(s_n) := \pi_{\H (\Eset_n (s_n), X_n, p_n)} (P_n)
\end{gather}
and we define $V_l(s_l)$ recursively, for every $l=1, \dots , n-1$,
\begin{gather*}
V_l(s_l) := \pi_{\H (\Eset_l (s_l), X_l, p_l)}\big( \EMM \big( V^*_{l+1} \,|\,\Filt_{T_l} \big) \big)
\end{gather*}
where  $V^*_{l+1}$ is the value of the subgame started at $T_{l+1}$ and $p_l,\, l=1, \dots ,n-1$ is given by \eqref{sfitvf}
with $P^i_l :=  \EMM \big( V^{*i}_{l+1} \,|\,\Filt_{T_l} \big)$. For this game, Proposition \ref{pvrpry20} is still valid
and its proof can be essentially simplified.

It is less clear whether the following definition of the game would be suitable:
we assume that the vector $P_n = (P^1_n, \dots , P^m_n)$ is given and,
for any strategy profile $s$ on $[0,T_n]$, we postulate that \eqref{yy778} holds
and  we set, for every $l=1, \dots , n-1$,
\begin{gather*}
V_l(s_l) := \pi_{\H (\Eset_l (s_l), X_l, \bar p_l (s_{l+1}))}\big( \EMM \big( V_{l+1}(s_{l+1})\,|\,\Filt_{T_l} \big) \big)
\end{gather*}
where, for every $l=1, \dots , n-1$,
\begin{gather*}
\bar p_l (s_{l+1}) := \sum_{i=1}^m \EMM \big( V^i_{l+1}(s_{l+1})\,|\,\Filt_{T_l} \big).
\end{gather*}
Unfortunately, this more flexible game is unlikely to possess an optimal equilibrium, in general, and thus its specification should be complemented by further conditions.

\begin{remark} {\rm
One may wonder whether the multi-person game introduced in  Definition  \ref{deftuu} can be solved using a reflected BSDE, as was demonstrated by Cvitani\'c and Karatzas \cite{Cvitanic} for two-person Dynkin games. This is indeed the case, but it requires a judicious specification of the behaviour of a solution to a multi-dimensional BSDE at the boundary.  In \cite{Guo3}, it was shown that the BSDE approach can be applied in a discrete-time setting to solve the so-called {\it multi-player affine games}, which cover the game of Definition  \ref{deftuu} as a special case.  The interested reader is also referred to \cite{Nie} where continuous-time multi-player stopping games are solved using multi-dimensional BSDEs with oblique reflection.}
\end{remark}

\vskip 15 pt


\noindent {\bf Acknowledgement.}
The research of Ivan Guo and Marek Rutkowski was supported under Australian Research
Council's Discovery Projects funding scheme (DP120100895).


\newpage

\section{Appendix: Nash and Optimal Equilibria}\label{sec1d}

We summarise here the basic results for Nash and optimal equilibria in a multi-player game.
Let ${\cal M} = \{1, \dots, m\}$ be the set of players and let $\Strat $ stand for the class of all strategy profiles $s = (s^1, \dots , s^m)$. For each $s \in \Strat $, we denote by $V^k(s)= V^k(s^1, \dots , s^m)$ a (possibly random) payoff of the $k$th player. It is convenient to write  $s = (\si^{k}, \si^{-k})$ and $\Mset^{-k} = \Mset \setminus \{k\}$.  An analogous notational convention will be later applied to any proper subset $\Aset $ of $\Mset $.

\begin{definition} \label{defaa01} {\rm
A strategy profile $\si= (\si^{1},\ldots,\si^{m}) \in\Strat$ is called a \emph{Nash equilibrium}
if the inequality $V^k(\si^{k}, \si^{-k}) \geq V^k(s^k, \si^{-k})$ holds for each $k$ and all $s^k\in\Strat^k$.
In other words, for each $k\in\Mset$,
\begin{gather}\label{eqaa011}
V^k(\si^{k}, \si^{-k}) = \esssup_{s^k\in\Strat^k} V^k(s^k, \si^{-k}).
\end{gather}
Often \eqref{eqaa011} is written as $V^k(\si^{k}, \si^{-k}) \geq V^k(s^k, \si^{-k})$ for all $s^k\in\Strat^k$.}
\end{definition}

Let us now recall the notion of an optimal equilibrium, which is stronger than a by far more widely
used concept of a Nash equilibrium (although the two concepts coincide in the case of a two-person zero-sum game).

\begin{definition} \label{defaa02} {\rm
A Nash equilibrium $\si= ( \si^{1},\ldots,\si^{m})\in\Strat$ is called an \emph{optimal equilibrium} if the inequality
$V^k(\si^k, \si^{-k})\leq V^k(\si^k, s^{-k})$ holds for each $k\in\Mset$ and all $s^{-k} \in\Strat^{-k}$ or,
equivalently, for each $k\in\Mset$,
\begin{gather}\label{eqaa021}
V^k(\si^k, \si^{-k}) = \essinf_{s^{-k}\in \Strat^{-k}} V^k(\si^k, s^{-k}).
\end{gather}
Combining with condition \eqref{eqaa011} of a Nash equilibrium, $\si$ satisfies
\begin{gather}\label{eqaa0221}
V^k(\si^k, \si^{-k}) = \essinf_{s^{-k}\in \Strat^{-k}} V^k(\si^k, s^{-k}) = \esssup_{s^k\in\Strat^k} V^k(s^k, \si^{-k}),
\end{gather}
or equivalently,
\begin{gather}\label{eqaa022}
V^k(\si^k, s^{-k})\geq V^k(\si^k, \si^{-k})
\geq V^k(s^k, \si^{-k}), \quad \forall \, s^k\in\Strat^k, s^{-k}\in \Strat^{-k},
\end{gather}
for each $k\in\Mset$.}
\end{definition}

It is clear that an optimal equilibrium is essentially a saddle point. In addition to the properties of a Nash equilibrium, each player can guarantee his optimal equilibrium payoff without knowing the actions of other players. Furthermore, as shown in Corollary \ref{coraa03} below, all optimal equilibria achieve the same value. The concept of an optimal equilibrium is also closely related to the maximin and minimax values of the game. The \emph{maximin  value} is the maximum payoff player $k$ can guarantee.

\begin{definition}\label{defminimax}{\rm
The \emph{maximin  value} $\underline V^k$ equals
\begin{gather}\label{eqaa51}
	\underline V^k:=\esssup_{s^k\in\Strat^k}\essinf_{s^{-k}\in\Strat^{-k}} V^k(s^k, s^{-k}).
\end{gather}
A \emph{maximin strategy} for player $k$ is any strategy $\wh s^k\in\Strat^k$ such that $\essinf_{s^{-k}\in\Strat^{-k}} V^k(\wh s^k, s^{-k}) = \underline V^k$, that is, a strategy $\wh s^k$ realises the supremum in \eqref{eqaa51}.}
\end{definition}

The \emph{minimax  value} is the lowest payoff that the other players can force upon player $k$.

\begin{definition}\label{deinimax}{\rm
The \emph{minimax value} $\overline V^k$ equals
\begin{gather}\label{eqaa52}
\overline V^k:=\essinf_{s^{-k}\in\Strat^{-k}}\esssup_{s^k\in\Strat^k} V^k(s^k, s^{-k}).
\end{gather}
A \emph{minimax strategy profile} for the player set $\Mset^{-k}$ is any strategy profile $\wh s^{-k}\in\Strat^{-k}$ such that
the equality $\esssup_{s^k\in\Strat^k} V^k(s^k,\wh s^{-k})=\overline V^k$ holds, that is, $\wh s^{-k}$ realises the infimum in \eqref{eqaa52}.}
\end{definition}

\subsection{Value of a Game}

In general, the maximin value never higher than the minimax value, since the players from the set $\Mset^{-k}$ cannot force the payoff of player $k$ to be lower than an amount that can be guaranteed by him. For this reason, the maximin value
 (minimax value, resp.)  is also known as the \emph{lower value} (the \emph{upper value}, resp.).
 The most fundamental properties of an optimal equilibrium are summarised in the following proposition.

\begin{proposition}\label{propaa02}
(i) The inequality $\overline V^k \geq \underline V^k$ is valid for all $k$.\\
(ii) If $\sigma$ is a Nash equilibrium, then $V^k(\sigma)\geq \overline V^k$  for all $k$.\\
(iii) If $\sigma$ satisfies \eqref{eqaa021}, then for all $k$, $V^k(\sigma)\leq \underline V^k$.\\
(iv) If $\sigma$ is an optimal equilibrium, then  $V^k(\sigma)= \overline V^k=\underline V^k$ for all $k$.\\
(v) If $\sigma$ is an optimal equilibrium, then for all $k$ the strategy $\sigma^k$ (the strategy profile $\sigma^{-k}$, resp.)
 is a maximin strategy for player $k$ (a minimax strategy profile the players set $\Mset^{-k}$, resp.).
\end{proposition}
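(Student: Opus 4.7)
The five parts of Proposition \ref{propaa02} are standard consequences of the max-min/min-max ordering, and I would establish them in the order listed since each subsequent part uses the previous ones.

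For part (i), I would start from the trivial pointwise inequality
\[
V^k(s^k, s^{-k}) \leq \esssup_{t^k\in\Strat^k} V^k(t^k, s^{-k}),
\]
valid for every $s^k\in\Strat^k$ and $s^{-k}\in\Strat^{-k}$. Taking $\essinf$ over $s^{-k}$ on both sides yields $\essinf_{s^{-k}} V^k(s^k, s^{-k}) \leq \overline V^k$, and then $\esssup$ over $s^k$ gives $\underline V^k \leq \overline V^k$, which is part (i). For part (ii), the Nash property \eqref{eqaa011} gives $V^k(\sigma) = \esssup_{s^k\in\Strat^k} V^k(s^k,\sigma^{-k})$, so
\[
V^k(\sigma) \geq \essinf_{s^{-k}\in\Strat^{-k}} \esssup_{s^k\in\Strat^k} V^k(s^k,s^{-k}) = \overline V^k.
\]
Symmetrically, for part (iii), property \eqref{eqaa021} gives $V^k(\sigma) = \essinf_{s^{-k}} V^k(\sigma^k,s^{-k})$, and then
\[
V^k(\sigma) \leq \esssup_{s^k}\essinf_{s^{-k}} V^k(s^k,s^{-k}) = \underline V^k.
\]

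Parts (iv) and (v) are immediate corollaries. If $\sigma$ is an optimal equilibrium, combining (ii) and (iii) gives $\overline V^k \leq V^k(\sigma) \leq \underline V^k$; together with (i) this forces the chain of equalities $\overline V^k = V^k(\sigma) = \underline V^k$, proving (iv). For part (v), the definition of the maximin strategy is that $\essinf_{s^{-k}} V^k(\sigma^k,s^{-k}) = \underline V^k$; by \eqref{eqaa021} the left-hand side equals $V^k(\sigma)$, which coincides with $\underline V^k$ by (iv). The corresponding statement for $\sigma^{-k}$ as a minimax strategy profile follows in the same manner from \eqref{eqaa011} and $V^k(\sigma)=\overline V^k$.

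There is no real obstacle here; the only mild care needed is the handling of essential suprema and infima rather than ordinary ones, but since the inequalities are pointwise before passage to $\esssup$/$\essinf$, and since these operators preserve monotonicity and the iterated inequality $\esssup\essinf \leq \essinf\esssup$, all the arguments go through verbatim from the deterministic case. No lattice property or measurable selection is needed for this proposition, so the proof should be short and direct.
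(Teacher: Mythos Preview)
Your proof is correct and follows essentially the same approach as the paper's own proof: the pointwise inequality for (i), the one-line inf/sup arguments for (ii) and (iii), and then (iv)--(v) as immediate consequences. Your handling of $\esssup/\essinf$ is in fact slightly more careful than the paper's, which silently reverts to ordinary $\sup/\inf$ in the proof.
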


\begin{proof}
(i) For every $\widehat s^k \in \Strat^k$ and $\widehat s^{-k}\in \Strat^{-k} $, we have that
\[
G(\widehat s^{-k} ) := \sup_{s^{k}\in \Strat^k} V^k ( s^k , \widehat s^{-k}) \geq V^k( \widehat s^k  ,\widehat s^{-k} )\geq
\inf_{s^{-k}\in \Strat^{-k} } V^k( \widehat s^k , s^{-k} ) =: H(\widehat s^{k} )
\]
and thus $G(s^{-k} ) \geq H(s^{k} )$ for every $s^{k}$ and $s^{-k}$. Consequently,
\begin{gather*}
\overline V^k = \inf_{s^{-k}\in\Strat^{-k}}\sup_{s^k\in\Strat^k} V^k(s^k ,s^{-k} ) = \inf_{s^{-k}\in\Strat^{-k}} G(s^{-k} )\\
\geq \sup_{s^k\in\Strat^k} H(s^{k} ) = \sup_{s^k\in\Strat^k}\inf_{s^{-k}\in\Strat^{-k}} V^k(s^k ,s^{-k} )=\underline V^k.
\end{gather*}

\noindent (ii) If condition \eqref{eqaa011} holds then
\[
V^k(\si^k,\si^{-k})= \sup_{s^k\in\Strat^k} V^k(s^k,\si^{-k}) \geq \inf_{s^{-k}\in\Strat^{-k}}\sup_{s^k\in\Strat^k} V^k(s^k,s^{-k})=\overline V^k.
\]

\noindent (iii) If condition \eqref{eqaa021} holds then
\[
V^k(\si^k,\si^{-k})= \inf_{s^{-k}\in\Strat^{-k}}V^k(\si^k,s^{-k}) \leq \sup_{s^k\in\Strat^k}\inf_{s^{-k}\in\Strat^{-k}} V^k(s^k ,s^{-k} )=\underline V^k.
\]
\\(iv) If both conditions \eqref{eqaa011} and \eqref{eqaa021} hold, then by (ii) and (iii),
\[
\underline V^k \geq V^k(\sigma)\geq \overline V^k.
\]
In view of (i), we thus obtain the equality $V^k(\sigma)= \overline V^k=\underline V^k$.

\noindent (v) Combining (iv) with condition \eqref{eqaa022}, we have,
\begin{align*}
V^k(\si^k, s^{-k})\geq V^k(\si^k, \si^{-k}) &= \underline V^k, \quad \forall \, s^{-k}\in \Strat^{-k},\\
V^k(s^k, \si^{-k})\leq V^k(\si^k, \si^{-k}) &= \overline V^k, \quad \forall \, s^k\in\Strat^k.
\end{align*}
Hence $\sigma^k$ and $\sigma^{-k}$ are maximin and minimax strategies, respectively.
\end{proof}

Proposition \ref{propaa02} motivates the following definition of values of the game.

\begin{definition}\label{defaa03}{\rm
If the equality $\overline V^k = \underline V^k$ holds, then $V^{*k}:=\overline V^k = \underline V^k$ is the \emph{value for player} $k$. The \emph{value of the game} $\bm{V}^*$ is the vector $(V^{*1},\ldots,V^{*m})$ of values for all players.}
\end{definition}

Since the equality is not necessarily achieved in part (i) in Proposition \ref{propaa02}, the existence of the value is not guaranteed. However, by part (iv) in this proposition, the existence of an optimal equilibrium implies the existence of the
value of the game, so that we may state the following corollary.

\begin{corollary}\label{coraa03}
(i) If the value of the game $\bm{V}^*$ exists, then it is unique.
\\(ii) If there exists an optimal equilibrium $\si$, then the value exists for every player and
\[
\bm{V}(\si)=(V^{*1},\ldots,V^{*m})=\bm{V}^*.
\]
Hence every optimal equilibrium $\sigma $ achieves the same value.
\end{corollary}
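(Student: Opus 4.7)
The corollary is an almost immediate bookkeeping consequence of Proposition \ref{propaa02} and Definition \ref{defaa03}, so the plan is short and routine.

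For part (i), I would simply unwind the definition: if the value exists for player $k$, then by Definition \ref{defaa03} it is, by declaration, the common value $V^{*k} := \overline V^k = \underline V^k$. Both $\overline V^k$ and $\underline V^k$ are defined as specific essential suprema/infima of $V^k(\cdot)$ over the strategy spaces (Definitions \ref{defminimax} and \ref{deinimax}), which are uniquely determined (up to $\Prob$-null sets, as always for essential suprema) by the data of the game. Hence $V^{*k}$ is unique, and so is the vector $\bm{V}^* = (V^{*1}, \ldots, V^{*m})$.

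For part (ii), let $\sigma$ be an optimal equilibrium. Then part (iv) of Proposition \ref{propaa02} gives $V^k(\sigma) = \overline V^k = \underline V^k$ for every $k \in \Mset$. The equality $\overline V^k = \underline V^k$ is precisely the condition in Definition \ref{defaa03} for the value $V^{*k}$ of player $k$ to exist, and it equals $V^k(\sigma)$. Collecting over $k$ yields $\bm{V}(\sigma) = \bm{V}^*$. Finally, since the right-hand side does not depend on the choice of optimal equilibrium, any two optimal equilibria $\sigma$ and $\sigma'$ must satisfy $\bm{V}(\sigma) = \bm{V}^* = \bm{V}(\sigma')$, so all optimal equilibria deliver the same payoff vector.

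There is no real obstacle here: the substantive content (in particular, the sandwich $\underline V^k \leq V^k(\sigma) \leq \overline V^k$ combined with $\overline V^k \geq \underline V^k$) has already been carried out in parts (i)--(iv) of Proposition \ref{propaa02}, so the corollary is purely a restatement. The only thing to be careful about is being explicit that ``the value exists'' is a statement about the coincidence of $\overline V^k$ and $\underline V^k$, rather than about the existence of an auxiliary object, so that uniqueness in (i) is a tautology and in (ii) one does not need to argue uniqueness twice.
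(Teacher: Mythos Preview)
Your proposal is correct and matches the paper's own proof essentially line for line: the paper says that (i) is implicit in Definition \ref{defaa03}, that (ii) follows immediately from Proposition \ref{propaa02}(iv), and that the final sentence follows from (i) and (ii). Your version is merely a more detailed unpacking of the same short argument.
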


\begin{proof}
(i) is implicit in Definition \ref{defaa03} and (ii) follows immediately from Proposition \ref{propaa02} (iv). Finally (iii) follows immediately from (i) and (ii).
\end{proof}

\subsection{Coalition Values}

The maximin and minimax values, as well as strategy profiles, can be analogously defined for any proper subset $\Aset$ of the set $\Mset$ of players, by simply replacing $k$ with $\Aset$.
For any subset $\Aset \subset \Mset $, we denote $V^\Aset(s):=\sum_{i\in\Aset} V^i(s)$
and we write $s = (s^{\Aset}, s^{-\Aset})$ for any strategy profile $s \in {\cal S}$.

Let us now focus on a special case of a zero-sum game.
It is worth noting that even in zero-sum games, not all Nash equilibria are optimal equilibria.

\begin{definition}\label{defab00}
{\rm A game is called \emph{zero-sum} if $V^\Mset(s) : =\sum_{i\in\Mset} V^i(s)=0$ for all $s\in\Strat$.}
\end{definition}

In the case of a zero-sum game, further properties an optimal equilibrium can be derived.

\begin{proposition}\label{propab01}
Suppose that the game is zero-sum and denote by $\sigma$ any strategy profile. Then the following statements are equivalent:
\\(i) For each $k\in\Mset$, $V^k(\si^k,\si^{-k}) \leq V^k(\si^k,s^{-k})$ for all $s^{-k}\in\Strat^{-k}$.
\\(ii) For any proper subset $\Aset\subset\Mset$, $V^\Aset(\si^{\Aset},\si^{-\Aset}) \geq V^\Aset(s^{\Aset},\si^{-\Aset})$ for all $s^{\Aset}\in\Strat^{\Aset}$.
\\(iii) For any proper subset $\Aset\subset\Mset$, $V^\Aset(\si^{\Aset},\si^{-\Aset}) \leq V^\Aset(\si^{\Aset},s^{-\Aset})$ for all $s^{-\Aset}\in\Strat^{-\Aset}$.
\\(iv) The strategy profile $\si$ is an optimal equilibrium.
\end{proposition}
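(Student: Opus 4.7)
My plan is to exploit the zero-sum identity $V^{\Aset}(s) = -V^{-\Aset}(s)$ (valid for every $s\in\Strat$ and every proper $\Aset\subset\Mset$) together with a simple coalition-wise summation. I would close the cycle by establishing (iv)$\Rightarrow$(i), (i)$\Leftrightarrow$(iii), (ii)$\Leftrightarrow$(iii), and (ii)$\Rightarrow$(iv).

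The implication (iv)$\Rightarrow$(i) is immediate, since (i) is nothing but condition \eqref{eqaa021} in Definition \ref{defaa02}. For the equivalence (i)$\Leftrightarrow$(iii), the direction (iii)$\Rightarrow$(i) follows by specialising $\Aset = \{k\}$. For the converse, fix a proper subset $\Aset\subset\Mset$ and an arbitrary $s^{-\Aset} \in \Strat^{-\Aset}$. For each $k \in \Aset$, the profile $(\si^{\Aset \setminus \{k\}}, s^{-\Aset})$ belongs to $\Strat^{-k}$, so (i) yields $V^k(\si^k,(\si^{\Aset\setminus\{k\}},s^{-\Aset})) \geq V^k(\si^k,\si^{-k})$, that is, $V^k(\si^\Aset, s^{-\Aset}) \geq V^k(\si)$. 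Summing over $k \in \Aset$ gives exactly (iii). The equivalence (ii)$\Leftrightarrow$(iii) is then a direct consequence of the zero-sum identity: replacing $V^\Aset$ by $-V^{-\Aset}$ throughout (iii) and relabelling the proper subset $-\Aset$ as the acting coalition transforms (iii) into (ii), and vice versa; note that $\Aset$ is a proper subset of $\Mset$ if and only if $-\Aset$ is, so the quantifier ranges coincide.

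For (ii)$\Rightarrow$(iv), setting $\Aset = \{k\}$ in (ii) produces the Nash condition \eqref{eqaa011}, while the chain (ii)$\Leftrightarrow$(iii)$\Leftrightarrow$(i) already established delivers condition \eqref{eqaa021}. Both halves of Definition \ref{defaa02} are therefore satisfied, so $\si$ is an optimal equilibrium, closing the cycle. The only step requiring more than routine relabelling is the summation argument in (i)$\Rightarrow$(iii), where one has to notice that the specific profile $(\si^{\Aset\setminus\{k\}},s^{-\Aset})$ is a legitimate instance of $s^{-k}$; the remaining steps are direct specialisation and one application of the zero-sum identity, so I anticipate no genuine obstacle.
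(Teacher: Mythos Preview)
Your proof is correct and follows essentially the same approach as the paper: the summation argument to pass from the single-player guarantee (i) to the coalition guarantee (iii), the zero-sum identity $V^{\Aset}=-V^{-\Aset}$ to interchange (ii) and (iii), and specialisation to $\Aset=\{k\}$ for the reverse implications and to extract the Nash condition. Your organisation of the cycle is in fact cleaner than the paper's, which contains some mislabelled implications (e.g., the step labelled ``(i)$\Rightarrow$(ii)'' there actually proves (iii), and the claim that ``(i) is identical to \eqref{eqaa011}'' should read \eqref{eqaa021}).
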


\begin{proof}\
[(i)$\implies$(ii)]\, Intuitively, if each player $k\in\Aset$ can guarantee his payoff by using $\si^k$, then collectively, the players of $\Aset$ can guarantee their total payoff by playing $\si^\Aset$. Since for all $s^{-\Aset}\in\Strat^{-\Aset}$ we have $(\si^{\Aset\setminus\{k\}},s^{-\Aset})\in\Strat^{-k}$. By (i),
\begin{gather}\label{eqab011}
V^k(\si^\Aset,\si^{-\Aset})=V^k(\si^k,\si^{\Aset\setminus\{k\}},\si^{-\Aset})\leq V^k(\si^k,\si^{\Aset\setminus\{k\}},s^{-\Aset})=V^k(\si^\Aset,s^{-\Aset}).
\end{gather}
Summing \eqref{eqab011} over $k\in\Aset$, we have
\[
V^\Aset(\si^\Aset,\si^{-\Aset})=\sum_{k\in\Aset} V^k(\si^\Aset,\si^{-\Aset})\leq \sum_{k\in\Aset} V^k(\si^\Aset,s^{-\Aset})=V^\Aset(\si^\Aset,s^{-\Aset}),
\]
as required.

\noindent  [(ii)$\implies$(i)]\, This is immediate by setting $\Aset=\{k\}$.

\noindent  [(ii)$\iff$(iii)]\, Since the game is zero-sum, $V^\Aset(s)=-V^{-\Aset}(s)$ for all $s\in\Strat$. Hence
\begin{alignat}{3}\label{eqab012}
&&V^\Aset(\si^\Aset,\si^{-\Aset}) &\leq V^\Aset(\si^\Aset,s^{-\Aset}),\quad &\forall\, s^{-\Aset}&\in\Strat^{-\Aset},\nonumber\\
\iff&&\quad -V^{-\Aset}(\si^\Aset,\si^{-\Aset}) &\leq -V^{-\Aset}(\si^\Aset,s^{-\Aset}),\quad &\forall\, s^{-\Aset}&\in\Strat^{-\Aset},\nonumber\\
\iff&&\quad V^{-\Aset}(\si^\Aset,\si^{-\Aset}) &\geq V^{-\Aset}(\si^\Aset,s^{-\Aset}),\quad &\forall\, s^{-\Aset}&\in\Strat^{-\Aset}.
\end{alignat}
Relabelling $-\Aset$ as $\Aset$ in \eqref{eqab012} gives the desired result.

\noindent  [(i) and (iii)$\implies$(iv)]\,From Definition \ref{defaa02}, the definition of an optimal equilibrium, it is sufficient to check \eqref{eqaa011}
\[
V^k(\si^{k}, \si^{-k})\geq V^k(s^k, \si^{-k}),\quad \forall \, s^k\in\Strat^k
\] and \eqref{eqaa021}
\[
V^k(\si^k, s^{-k})\geq V^k(\si^k, \si^{-k}),\quad \forall \, s^{-k}\in \Strat^{-k}.
\]
This is clear as (i) is identical to \eqref{eqaa011}, while (iii) reduces to \eqref{eqaa021} after setting $\Aset=\{k\}$.

\noindent  [(iv)$\iff$(i)]\, This is immediate from \eqref{eqaa021} in Definition \ref{defaa02}.
\end{proof}

Note that the equivalence between Proposition \ref{propab01} (i) and (ii) does not use the fact that the game is zero-sum. Proposition \ref{propab01} (iii) implies \eqref{eqaa011}, the Nash equilibrium condition, but the converse does not hold.

So far, the definition of the value referred to the value of the game to each individual player. But suppose some subset of players $\Aset\subseteq\Mset$ is playing as a \emph{coalition} with the goal to maximise the \emph{coalition payoff} $V^\Aset(s):=\sum_{i\in\Aset} V^i(s)$. The definition of the value of the game can be extended as follows.

\begin{definition} {\rm
The \emph{value $V^{*\Aset}$ for the coalition} $\Aset$ of players is given by
\[
V^{*\Aset} := \sup_{s^\Aset\in\Strat^\Aset}\inf_{s^{-\Aset}\in\Strat^{-\Aset}} V^\Aset(s^\Aset, s^{-\Aset})=\inf_{s^{-\Aset}\in\Strat^{-\Aset}} \sup_{s^\Aset\in\Strat^\Aset} V^\Aset(s^\Aset, s^{-\Aset}),
\]
assuming that the second equality holds.}
\end{definition}

In general, the value $V^{*\Aset}$ (if well defined) does not necessarily satisfy the \emph{additivity property} $V^{*\Aset}=\sum_{i\in\Aset} V^{*i}$. However, as the following propositions shows,
if an optimal equilibrium exist in a zero-sum game, then this property is indeed satisfied for any subset $\Aset $ of $\Mset$.

\begin{proposition}\label{propab01a1}
Suppose the game has an optimal equilibrium $\si\in\Strat$ with value $V^*=V(\si)$. If either (i) the game is zero-sum or (ii) $V^{\Mset}(\si) = \esssup_{s\in\Strat} V^\Mset(s)$, then for all subsets $\Aset\subseteq\Mset$, we have that
\[
V^\Aset(\si)=\sum_{i\in\Aset} V^{*i}=V^{*\Aset}
= \sup_{s^\Aset\in\Strat^\Aset}\inf_{s^{-\Aset}\in\Strat^{-\Aset}} V^\Aset(s^\Aset, s^{-\Aset})=\inf_{s^{-\Aset}\in\Strat^{-\Aset}} \sup_{s^\Aset\in\Strat^\Aset} V^\Aset(s^\Aset, s^{-\Aset}).
\]
In particular, the value of the game for any coalition $\Aset$ of players is equal to the sum of values for players from~$\Aset $.
\end{proposition}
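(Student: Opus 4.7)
The overall strategy is to reduce the claim to two coalition-level inequalities at $\sigma$ and then combine them with the trivial $\esssup\essinf \leq \essinf\esssup$ ordering. The equality $V^\Aset(\sigma) = \sum_{i \in \Aset} V^{*i}$ is immediate from Corollary \ref{coraa03}, which already tells us that every optimal equilibrium realises each individual value $V^{*k}$. So the real content is to establish
\begin{align*}
V^\Aset(s^\Aset,\sigma^{-\Aset}) &\leq V^\Aset(\sigma), \quad \forall\, s^\Aset\in\Strat^\Aset, \\
V^\Aset(\sigma^\Aset,s^{-\Aset}) &\geq V^\Aset(\sigma), \quad \forall\, s^{-\Aset}\in\Strat^{-\Aset},
\end{align*}
because once these two inequalities are in hand, both iterated extrema sandwich $V^\Aset(\sigma)$ and the general sup-inf/inf-sup comparison forces equality throughout.

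In the zero-sum case (i), both inequalities are precisely Proposition \ref{propab01}(ii) applied to $\Aset$ and to $-\Aset$ (using the equivalence with (iii)), so no further work is needed. For case (ii), I plan to exploit the guaranteed-payoff inequality $V^k(\sigma^k,s^{-k}) \geq V^k(\sigma)$ that follows from $\sigma$ being an optimal equilibrium. For the second coalition inequality (the coalition $\Aset$ guarantees its payoff), I will fix $s^{-\Aset}$ and, for each $k \in \Aset$, substitute $s^{-k} := (\sigma^{\Aset\setminus\{k\}},s^{-\Aset})$ so that $(\sigma^k,s^{-k}) = (\sigma^\Aset,s^{-\Aset})$; summing the resulting individual bounds over $k \in \Aset$ immediately yields the desired coalition bound.

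The first inequality (no profitable coalition deviation) is the more delicate step and is where condition (ii) of the hypothesis enters. I will fix $s^\Aset$ and apply the guaranteed-payoff inequality to each $k \in -\Aset$ with $s^{-k} := (s^\Aset,\sigma^{-\Aset\setminus\{k\}})$, summing to obtain $V^{-\Aset}(s^\Aset,\sigma^{-\Aset}) \geq V^{-\Aset}(\sigma)$. Then I will combine this with the global maximality hypothesis $V^\Mset(\sigma) \geq V^\Mset(s^\Aset,\sigma^{-\Aset})$ by subtracting the $-\Aset$ contribution from the $\Mset$ inequality, which delivers $V^\Aset(\sigma) \geq V^\Aset(s^\Aset,\sigma^{-\Aset})$. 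I expect this subtraction step, together with the bookkeeping between the notations $s^{-k}$ and $s^{-\Aset}$, to be the main source of potential error; everything else reduces to summing individual-player inequalities over the appropriate index set.

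Finally, I will assemble the conclusion by writing
\[
V^\Aset(\sigma) \,\leq\, \essinf_{s^{-\Aset}} V^\Aset(\sigma^\Aset,s^{-\Aset}) \,\leq\, \esssup_{s^\Aset}\essinf_{s^{-\Aset}} V^\Aset(s^\Aset,s^{-\Aset}) \,\leq\, \essinf_{s^{-\Aset}}\esssup_{s^\Aset} V^\Aset(s^\Aset,s^{-\Aset}) \,\leq\, \esssup_{s^\Aset} V^\Aset(s^\Aset,\sigma^{-\Aset}) \,\leq\, V^\Aset(\sigma),
\]
so all terms coincide with $V^\Aset(\sigma) = \sum_{i \in \Aset} V^{*i}$, which is the definition of $V^{*\Aset}$ plus the claimed additivity. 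I will also observe in passing that (i) is formally a special case of (ii) since a zero-sum game trivially satisfies $V^\Mset(\sigma)=0=\esssup_s V^\Mset(s)$, but the Proposition \ref{propab01} route for (i) is cleaner and worth recording separately.
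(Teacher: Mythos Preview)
Your proposal is correct and follows essentially the same route as the paper: you establish the two coalition inequalities $V^\Aset(\sigma^\Aset,s^{-\Aset})\geq V^\Aset(\sigma)$ (by summing the individual guaranteed-payoff bounds over $\Aset$) and $V^\Aset(s^\Aset,\sigma^{-\Aset})\leq V^\Aset(\sigma)$ (by applying the same argument to $-\Aset$ and subtracting from the global maximality hypothesis $V^\Mset(\sigma)\geq V^\Mset(s^\Aset,\sigma^{-\Aset})$), then sandwich with the general $\esssup\essinf\leq\essinf\esssup$ ordering. The paper likewise observes that (i) is a special case of (ii) and gives only the latter argument.
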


\begin{proof}
Since case (ii) includes the zero-sum case (i), it suffices to establish the result for case (ii) only.
Since $\si$ is an optimal equilibrium, each player $i\in\Mset$ can guarantee the payoff $V^i(\si)$, in other words,
\[
V^i(\si)=\inf_{s^{-i}\in\Strat^{-i}} V^i(\si^i, s^{-i}).
\]
So the players from $\Aset$ can play to guarantee $V^\Aset(\si)$ by playing $\si^\Aset$,
\[
V^\Aset(\si) = \sum_{i\in\Aset} \inf_{s^{-i}\in\Strat^{-i}} V^i(\si^i, s^{-i}) \leq \sum_{i\in\Aset} V^i(\si^\Aset, s^{-\Aset}) =  V^\Aset(\si^\Aset, s^{-\Aset}), \quad \forall\,s^{-\Aset}\in\Strat^{-\Aset}.
\]
Hence
\begin{gather}\label{eqab01a2}
V^\Aset(\si) \leq\inf_{s^{-\Aset}\in\Strat^{-\Aset}} V^\Aset(\si^\Aset, s^{-\Aset}) \leq \sup_{s^\Aset\in\Strat^\Aset}\inf_{s^{-\Aset}\in\Strat^{-\Aset}} V^\Aset(s^\Aset, s^{-\Aset}).
\end{gather}
Applying the same argument to the player set $-\Aset$, they can also guarantee their payoff $V^{-\Aset}(\si)$,
\[
V^{-\Aset}(\si) \leq V^{-\Aset}(s^\Aset, \si^{-\Aset}), \quad \forall\,s^{\Aset}\in\Strat^{\Aset}.
\]
Now by condition (ii) $V^{\Mset}(\si) = \sup_{s\in\Strat} V^\Mset(s)$, we must have, for all $s^{\Aset}\in\Strat^{\Aset}$,
\begin{align*}
V^\Aset(\si) = V^{\Mset}(\si) - V^{-\Aset}(\si)  \geq V^{\Mset}(s^\Aset, \si^{-\Aset}) -V^{-\Aset}(s^\Aset, \si^{-\Aset})
= V^{\Aset}(s^\Aset, \si^{-\Aset}).
\end{align*}
Hence
\begin{gather}\label{eqab01a3}
V^\Aset(\si) \geq \sup_{s^\Aset\in\Strat^\Aset} V^{\Aset}(s^\Aset, \si^{-\Aset}) \geq \inf_{s^{-\Aset}\in\Strat^{-\Aset}} \sup_{s^\Aset\in\Strat^\Aset} V^\Aset(s^\Aset, s^{-\Aset}).
\end{gather}
Finally, the required result follows immediately by combining \eqref{eqab01a2} and \eqref{eqab01a3} with the following fact (see \ref{propaa02} (i)),
\[
\inf_{s^{-\Aset}\in\Strat^{-\Aset}} \sup_{s^\Aset\in\Strat^\Aset} V^\Aset(s^\Aset, s^{-\Aset}) \geq \sup_{s^\Aset\in\Strat^\Aset}\inf_{s^{-\Aset}\in\Strat^{-\Aset}} V^\Aset(s^\Aset, s^{-\Aset}),
\]
completing the proof.
\end{proof}

\end{document}